\newtheorem{observation}{Observation}
\newcommand{\overbar}[1]{\mkern 1.5mu\overline{\mkern-1.5mu#1\mkern-1.5mu}\mkern 1.5mu}
\newcommand{\Inducedgraph}[3]{#1\left[#2,#3\right]}
\newcommand{\OPT}[2]{\Psi[#1,#2]}
\newcommand{\OPTT}{\Psi}
\newcommand{\Interval}[1]{I(#1)}
\newcommand{\leftend}[1]{l(#1)}
\newcommand{\rightend}[1]{r(#1)}
\newcommand{\Sep}[2]{S_{#1}^{#2}}
\newcommand{\minIndex}[1]{q^-_{#1}}
\newcommand{\maxIndex}[1]{q^+_{#1}}
\newcommand{\Level}[3]{L_{#1}\left(#2,#3\right)}
\newcommand{\dist}[3]{d_{#1}(#2,#3)}
\newcommand{\etal}{\textit{et al.}}
\newtheorem{theorem}{Theorem}
\newtheorem{definition}{Definition}
\newenvironment{remark}{\refstepcounter{observation}\medskip\noindent\textbf{Remark \theobservation. }}{\medskip}
\newtheorem{lemma}[theorem]{Lemma}
\newtheorem{question}{Question}
\newcommand{\Allseparate}[1]{\mathcal{S}\left(#1\right)}
\newtheorem*{myclaim}{Claim}
\newtheorem{claim}{Claim}
\newcommand{\helly}{\emph{Helly }}
\title{s-Club Cluster Vertex Deletion on Interval and Well-Partitioned Chordal Graphs\footnote{ An extended abstract~\cite{chakraborty2022s} of this paper was presented at WG 2022 and this version contains all proofs missing from the conference version. }} 
\author{Dibyayan Chakraborty\thanks{ENS de Lyon, France} \and L. Sunil Chandran\thanks{Indian Institute of Science, Bengaluru, India} \and Sajith Padinhatteeri \thanks{BITS-Pilani, Hyderabad.} \and Raji. R. Pillai \thanks{Indian Institute of Science, Bengaluru, India}}
\begin{document}

\maketitle

\begin{abstract}
In this paper, we study the computational complexity of \textsc{$s$-Club Cluster Vertex Deletion}. Given a graph, \textsc{$s$-Club Cluster Vertex Deletion ($s$-CVD)} aims to delete the minimum number of vertices from the graph so that each connected component of the resulting graph has a diameter at most $s$. When $s=1$, the corresponding problem is popularly known as \sloppy  \textsc{Cluster Vertex Deletion (CVD)}. We provide a faster algorithm for \textsc{$s$-CVD} on \emph{interval graphs}. For each $s\geq 1$, we give an $O(n(n+m))$-time algorithm for \textsc{$s$-CVD} on interval graphs with $n$ vertices and $m$ edges. In the case of  $s=1$, our algorithm is a slight improvement over the $O(n^3)$-time algorithm of Cao \etal (Theor. Comput. Sci., 2018) and for $s  \geq 2$, it significantly improves the state-of-the-art running time $\left(O\left(n^4\right)\right)$.

We also give a polynomial-time algorithm to solve \textsc{CVD} on \emph{well-partitioned chordal graphs}, a graph class introduced by Ahn \etal (\textsc{WG 2020}) as a tool for narrowing down complexity gaps for problems that are hard on chordal graphs, and easy on split graphs. Our algorithm relies on a characterisation of the optimal solution and on solving polynomially many instances of the \textsc{Weighted Bipartite Vertex Cover}. This generalises a result of Cao \etal (Theor. Comput. Sci., 2018) on split graphs.
 We also show that for any even integer $s\geq 2$, \textsc{$s$-CVD} is NP-hard on well-partitioned chordal graphs.   
 
 \medskip \noindent \textbf{Keywords:} Vertex deletion problem, \textsc{Cluster Vertex Deletion},\textsc{$s$-Club Cluster Vertex Deletion}, Well-partitioned chordal graphs, Interval graphs.
\end{abstract}

\section{Introduction}

Detecting ``\emph{highly-connected}'' parts or ``\emph{clusters}'' of a complex system is a fundamental research topic in network science~\cite{yang2016comparative,papadopoulos2012community} with numerous applications in computational biology~\cite{dehne2006cluster,rahmann2007exact,ben1999clustering,sharan2000click,spirin2003protein}, machine learning~\cite{bansal2004correlation}, image processing~\cite{wu1993optimal}, etc. In a graph-theoretic approach, a complex system or a network is often viewed as an undirected graph $G$ that consists of a set of \emph{vertices} $V(G)$ representing the atomic entities of the system and a set of \emph{edges} $E(G)$ representing a binary relationship among the entities. A \emph{cluster} is often viewed as a dense subgraph (often a \emph{clique}) and \emph{partitioning} a graph into such clusters is one of the main objectives of \emph{graph-based data clustering}~\cite{ben1999clustering,shamir2004cluster,fellows2011graph}. 

Ben-Dor \etal~\cite{ben1999clustering} and Shamir \etal~\cite{shamir2004cluster} observed that the clusters of certain networks may be retrieved by making a small number of modifications in the network. These modifications may be required to account for the errors introduced during the construction of the network. In graph-theoretic terms, the objective is to modify (\textit{e.g.} edge deletion, edge addition, vertex deletion) a given input graph as little as possible so that each component of the resulting graph is a cluster. When deletion of vertices is the only valid operation on the input graph, the corresponding clustering problem falls in the category of \emph{vertex deletion} problems, a core topic in algorithmic graph theory. Many classic optimization problems like \textsc{Maximum Clique, Maximum Independent Set, Vertex cover} are examples of vertex deletion problems.  In this paper, we study popular vertex deletion problems called \textsc{Cluster Vertex Deletion} and its generalisation \textsc{$s$-Club Cluster Vertex Deletion}, both being important in the context of graph-based data clustering. 

Given a graph $G$, the objective of \textsc{Cluster Vertex Deletion (CVD)} is to delete a minimum number of vertices so that the  remaining graph is a set of disjoint cliques. Below we give a formal definition of \textsc{CVD}.

\bigskip 

\begin{mdframed}[style=MyFrame]
	\noindent \textsc{Cluster Vertex Deletion (CVD)}
	
	\noindent \textbf{Input:} An undirected graph $G$, and an integer $k$.
	
	 \noindent \textbf{Output:} \textsc{Yes}, if there is a set $S$ of vertices with $|S|\leq k$, such that each component of the graph induced by $V(G)\setminus S$ is a clique. \textsc{No}, otherwise.
\end{mdframed}

The term \textsc{Cluster Vertex Deletion} was coined by Gramm \etal~\cite{gramm2004automated} in $2004$. However NP-hardness of \textsc{CVD}, even on planar graphs and bipartite graphs, follows from the seminal works of Yannakakis~\cite{yannakakis1978} and Lewis \& Yannakakis~\cite{lewis1980} from four decades ago. Since then many researchers have proposed \emph{parameterized algorithms} and \emph{approximation algorithms} for \textsc{CVD} on general graphs~\cite{boral2016fast,tsur2021faster,huffner2010fixed,fomin2019exact,fomin2019subquadratic,sau2020hitting,you2017approximate,fiorini2016improved,fiorini2020improved,aprile2021tight}. In this paper, we focus on polynomial-time solvability of \textsc{CVD} on special classes of graphs. 

Cao \etal~\cite{cao2018vertex} gave polynomial-time algorithms for \textsc{CVD} on \emph{interval} graphs (see Definition~\ref{def:interval}) and \emph{split} graphs. Chakraborty \etal~\cite{chakraborty2021algorithms} gave a polynomial-time algorithm for \textsc{CVD} on \emph{trapezoid} graphs. However, much remains unknown: Chakraborty \etal~\cite{chakraborty2021algorithms} pointed out that computational complexity of \textsc{CVD} on \emph{planar bipartite} graphs and \emph{cocomparability} graphs is unknown. Cao \etal~\cite{cao2018vertex} asked if \textsc{CVD} can be solved on chordal graphs (graphs with no induced cycle of length greater than 3) in polynomial-time. 
Ahn \etal~\cite{ahn2020well} introduced \emph{well-partitioned chordal} graphs (see Definition~\ref{def:well-partitioned}) as a tool for narrowing down complexity gaps for problems that are hard on chordal graphs, and easy on split graphs. Since several problems (for example: transversal of longest paths and cycles, tree $3$-spanner problem, geodetic set problem) which are either hard or open on chordal graphs become polynomial-time solvable on well-partitioned chordal graphs~\cite{ahn2021three}, the computational complexity of \textsc{CVD} on well-partitioned chordal graphs is a well-motivated open question.

In this paper, we also study a generalisation of \textsc{CVD} known as \textsc{$s$-Club Cluster Vertex Deletion ($s$-CVD)}. In many applications the equivalence of cluster and clique is too restrictive~\cite{balasundaram2005novel,pasupuleti2008detection,alba1973graph}. For example, in protein networks where proteins are the vertices and the edges indicate the interaction between the proteins, a more appropriate notion of clusters may have a diameter of more than $1$~\cite{balasundaram2005novel}. Therefore researchers have defined the notion of \emph{$s$-clubs}~\cite{mokken,balasundaram2005novel}. An $s$-club is a graph with \emph{diameter} at most $s$. The objective of \textsc{$s$-Club Cluster Vertex Deletion ($s$-CVD)} is to delete the minimum number of vertices from the input graph so that all connected components of the resultant graph is an $s$-club. Below we give a formal definition of \textsc{$s$-CVD}.
\bigskip

\begin{mdframed}[style=MyFrame]
	\noindent \textsc{$s$-Club Cluster Vertex Deletion ($s$-CVD)}
	
	\noindent \textbf{Input:} An undirected graph $G$, and integers $k$ and $s$.
	
	\noindent \textbf{Output:} \textsc{Yes}, if there is a set $S$ of vertices with $|S|\leq k$, such that each component of the graph induced by $V(G)\setminus S$ has diameter at most $s$. \textsc{No}, otherwise.
\end{mdframed}

Sch{\"a}fer~\cite{schafer2009} introduced the notion of \textsc{$s$-CVD} and gave a polynomial-time algorithm for \textsc{$s$-CVD} on trees. Researchers have studied the particular case of \textsc{$2$-CVD} as well~\cite{liu2012,figiel2021}. In general, \textsc{$s$-CVD} remains NP-hard on planar bipartite graphs for each $s\geq 2$, APX-hard on split graphs for $s=2$ ~\cite{chakraborty2021algorithms} (contrasting the polynomial-time solvability of \textsc{CVD} on split graphs). 
Combination of the ideas of Cao \etal~\cite{cao2018vertex} and Sch{\"a}fer~\cite{schafer2009}, provides an $O(n^8)$-time algorithm for \textsc{$s$-CVD} on a  trapezoid graphs (intersection graphs of trapezoids between two horizontal lines)  with $n$ vertices~\cite{chakraborty2021algorithms}. This algorithm can be modified to give an $O(n^4)$-time algorithm for \textsc{$s$-CVD} on interval graphs with $n$ vertices.

\medskip \noindent \textbf{General notations:}
For a graph $G$, let $V(G)$ and $E(G)$ denote the set of vertices and edges, respectively. For a vertex $v\in V(G)$, the set of vertices adjacent to $v$  is denoted by $N(v)$ and $N[v]=N(v)\cup \{v\}$. For $S \subseteq V(G)$, let $G-S$ be an induced graph obtained by deleting the vertices in $S$ from $G$. For two sets $S_1,S_2$, let $S_1-S_2$ denotes the set obtained by deleting the elements of $S_2$ from $S_1$. The set $S_1\Delta S_2$ denotes $(S_1\cup S_2)-(S_1\cap S_2)$. 

\medskip





\section{Our Contributions}\label{sec:contribution}

In this section, we state our results formally. We start with the definition of well-partitioned chordal graphs as given in~\cite{ahn2020well}.

\begin{definition}[\cite{ahn2020well}]\label{def:well-partitioned}

A connected graph $G$ is a well-partitioned chordal graph if there exists a partition $\mathcal{P}$ of $V(G)$ and a
tree $\mathcal{T}$ having $\mathcal{P}$ as a vertex set such that the following hold.
\begin{enumerate}[label=(\alph*)]
    \item Each part $X\in \mathcal{P}$ is a clique in $G$.
    
    \item For each edge $XY\in E(\mathcal{T})$, there exist $X' \subseteq X$ and $Y' \subseteq Y$ such that edge set of the bipartite graph $G[X,Y]$ is $ X' \times Y'$.
    
    \item For each pair of distinct $X,Y\in V(\mathcal{T})$ with $XY\notin E(\mathcal{T})$, there is no edge between a vertex in $X$ and a vertex in $Y$.
\end{enumerate}

The tree $\mathcal{T}$ is called a \emph{partition tree} of $G$, and the elements of $\mathcal{P}$ are called its \emph{bags} or \emph{nodes} of $\mathcal{T}$.
\end{definition}

Our first result is on \textsc{CVD} for well-partitioned chordal graphs which generalises a result of Cao \etal~\cite{cao2018vertex}  for split graphs. We prove the following theorem in Section~\ref{sec:thm-well-partition}. 

	\begin{theorem}\label{thm:well-partition-polytime}
    Given a well-partitioned chordal graph $G$ and its partition tree, there is an $O(m^2 n)$-time algorithm to solve \textsc{CVD} on $G$, where $n$ and $m$ are the number of vertices and edges.
	\end{theorem}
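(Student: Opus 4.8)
The plan is to first pin down the structure of an optimal solution, and then reduce the resulting combinatorial optimisation to a family of \textsc{Weighted Bipartite Vertex Cover} instances organised along the partition tree. The first step is a clique-locality lemma: since the partition tree $\mathcal{T}$ is triangle-free, the bags meeting any clique of $G$ must be pairwise adjacent in $\mathcal{T}$, so a clique can intersect at most two bags, and those two bags must be adjacent. Hence, writing $K = V(G)\setminus S$ for the survivors of a solution $S$, the vertices surviving in each bag $X$ are mutually adjacent and form a single clique component that is either standalone or merged with the survivors of exactly one neighbouring bag (a second merge is impossible, because the survivors of two distinct neighbours of $X$ lie in non-adjacent bags and cannot be joined into one clique). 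Thus the merges form a matching $M$ on $E(\mathcal{T})$, and denoting by $X'_{XY}\times Y'_{XY}$ the biclique of an edge $XY$, feasibility becomes: for every $XY\in M$, $K\cap X\subseteq X'_{XY}$ and $K\cap Y\subseteq Y'_{XY}$; and for every non-merge edge $XY$, $K\cap X'_{XY}=\emptyset$ or $K\cap Y'_{XY}=\emptyset$ (otherwise a surviving boundary pair would illegally merge the two bags).

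I would then root $\mathcal{T}$ and compute, bottom-up, the maximum number of survivors in each subtree under a constant number of states describing how a bag $X$ interacts with its parent edge: $X$ merges upward (so $K\cap X$ is confined to $X'$); $X$ does not merge upward and deletes its whole boundary toward the parent; or $X$ does not merge upward but is free to keep boundary vertices toward the parent (in which case the parent must delete its own boundary). For a fixed bag $X$ with children $C_1,\dots,C_d$ and a fixed choice of the single neighbour that $X$ merges with (a child, the parent, or none), combining the precomputed child values reduces to choosing $K\cap X$ together with, for each non-merged child $C_i$, whether to force $C_i$ to delete its boundary (value $g_0(C_i)$) or to let it keep its boundary (value $g_1(C_i)\ge g_0(C_i)$, at the price of deleting all of $X$'s vertices lying in $X'_{XC_i}$).

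The crucial step is that this local choice is \emph{exactly} a \textsc{Weighted Bipartite Vertex Cover}. I would build a bipartite graph $H$ whose sides are the children $C_1,\dots,C_d$ (with weight $g_1(C_i)-g_0(C_i)$) and the vertices of $X$ (with weight $1$), joining $C_i$ to $x$ whenever $x\in X'_{XC_i}$. Selecting $C_i$ into the cover encodes keeping $C_i$ at state $g_0$, selecting $x$ encodes deleting $x$, and the cover condition on each edge $(C_i,x)$ says precisely that whenever $C_i$ is allowed to keep its boundary, every conflicting vertex of $X$ is deleted; the cover weight equals the total loss relative to keeping everything. A minimum-weight cover therefore gives the optimal local decision, and the extra constraints of a merge (confining $K\cap X$ to one biclique side) are imposed simply by pre-deleting the vertices outside that side before building $H$. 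As \textsc{Weighted Bipartite Vertex Cover} is solvable in polynomial time by a single max-flow computation, each bag is handled by $O(\deg_{\mathcal{T}}(X))$ such instances, and the DP values are then assembled up the tree, with the answer read off at the root.

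The delicate point — and the main obstacle — is exactly that the boundary sets $X'_{XC_i}$ of one bag toward its different neighbours may overlap, so the per-neighbour ``break the edge'' constraints do not decouple; the reduction above is what lets one bipartite cover handle all neighbours of $X$ simultaneously while respecting the at-most-one-merge restriction. The core of the argument is verifying that an optimal cover corresponds to a genuinely feasible cluster deletion, i.e.\ that the induced survivor set really does split into cliques that are maximal components, and conversely that every optimal solution is captured by some admissible sequence of states. For the running time, I would bound the total number of bipartite-cover instances over all bags and merge choices by $O(n)$, note that each instance has $O(n)$ nodes and $O(m)$ edges, and control the cost of each max-flow computation so that the whole DP fits within the claimed $O(m^2 n)$ budget; this also recovers, as the special case of a star partition tree, the split-graph result of Cao \etal.
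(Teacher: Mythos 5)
Your structural analysis is correct, and your central reduction --- a weighted bipartite vertex cover between the vertices of a bag $X$ (weight $1$) and its children $C_i$ (weight equal to the marginal cost of forcing $C_i$ to delete its boundary toward $X$) --- is exactly the construction the paper uses: the graph $\mathcal{H}$ of Section~\ref{sec:(X)-CVD}, whose contracted child-vertex $B=bd(X_i,X)$ carries weight $|B|+|OPT(G_{X_i}-B)|-|OPT(G_{X_i})|$, which is your $g_1(C_i)-g_0(C_i)$ in complemented (delete-minimisation) form. Where you genuinely diverge is in handling the case where the survivors of $X$ merge with a neighbouring bag. The paper characterises any CVD set of $G_X$ as either an $X$-CVD set or an $(X,Y)$-CVD set for exactly one child $Y$ (Lemma~\ref{lem:CVD-characterisation}), and handles the latter by enumerating all $(X,Y)$-edges $ab$, deleting $N(a)\,\Delta\,N(b)$, and reducing to an $X$-CVD computation on the component containing $ab$ (Theorem~\ref{thm:(X,Y)-edge}); this enumeration over $O(m)$ preserved edges is precisely what produces the $O(m^2n)$ bound. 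You instead observe that the merges form a matching on $E(\mathcal{T})$ and fold the merge decision into $O(\deg_{\mathcal{T}}(X))$ DP states per bag, imposing the confinement $K\cap X\subseteq X'$ by pre-deletion before building the cover instance; this is cleaner, avoids the per-edge enumeration, and would plausibly yield a better running time (on the order of $n$ cover instances rather than $m$ per merge). Two points you must make explicit for the argument to close: (i) the ``merges upward'' state of a bag $C_j$ has to require not only $K\cap C_j\subseteq bd(C_j,X)$ but also that no surviving edge joins $C_j$ to any of \emph{its own} children (a clique meets at most two bags, so a bag that merges upward cannot simultaneously extend downward) --- this is the same cover computation with an extra pre-deletion, but it is a third table entry distinct from your boundary-deleted and boundary-free states, and your state list currently conflates it with simple confinement; and (ii) the two directions of the cover correspondence (every minimal cover yields a feasible solution of matching size, and every feasible solution yields a cover of no larger weight --- the analogues of Lemma~\ref{lem:sol-cover} and of the partition of $Sol(D)$ and $Z$ into the pieces $D_i$ and $Z_i$ in Section~\ref{sec:(X)-CVD}) are where the bulk of the paper's effort lies; you correctly identify this as the core verification, and it does go through, but it is not automatic and requires the kind of exchange argument the paper carries out on minimal minimum covers.
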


Since a partition tree of a well-partitioned chordal graph can be obtained in polynomial time~\cite{ahn2020well}, the above theorem adds \textsc{CVD} to the list of problems that are open on chordal graphs but admits polynomial-time algorithm on well-partitioned chordal graphs.
Our algorithm relies on a characterisation of the solution set and we show that the optimal solution of a well-partitioned chordal graph with $m$ edges can be obtained by finding weighted minimum  vertex cover~\cite{kleinberg2006algorithm} of $m$ many weighted bipartite graphs with weights at most $n$. Then standard \emph{Max-flow} based algorithms~\cite{kleinberg2006algorithm,orlin2013max,king1994faster} from the literature yields Theorem~\ref{thm:well-partition-polytime}. On the negative side, we prove the following theorem in Section~\ref{sec:hardness}.

\begin{theorem}\label{thm:hardness}
    Unless the Unique Games Conjecture is false, for any even integer $s\geq 2$, there is no $(2-\epsilon)$-approximation algorithm for \textsc{$s$-CVD} on well-partitioned graphs.
\end{theorem}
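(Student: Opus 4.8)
The plan is to give an approximation-preserving reduction from \textsc{Minimum Vertex Cover}, which by the theorem of Khot and Regev is not $(2-\epsilon)$-approximable under the Unique Games Conjecture. Fix an even $s=2t$ with $t\ge 1$ and let $H$ be an instance of \textsc{Vertex Cover}; by adding two isolated dummy vertices we may assume $H$ has no universal vertex, without changing its minimum cover. I will build a well-partitioned chordal graph $G=G(H,s)$ whose minimum \textsc{$s$-CVD} solution has size exactly the minimum vertex cover of $H$, together with a polynomial procedure turning any \textsc{$s$-CVD} solution of $G$ of size $k$ into a vertex cover of $H$ of size at most $k$; this transfers the factor $2-\epsilon$ verbatim.

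The core of $G$ is the $s=2$ (split-graph) gadget: introduce an independent ``representative'' vertex $q_v$ for each $v\in V(H)$ and a clique ``hub'' $K=\{z_f : f \text{ a non-edge of } H\}$, with $q_v$ adjacent to $z_f$ exactly when $v\in f$. Thus two representatives $q_u,q_v$ share a hub neighbour (hence lie at distance $2$) precisely when $uv$ is a non-edge, and lie at distance $3$ when $uv\in E(H)$. To realise a general even $s=2t$, I stretch the gadget: attach to each $q_v$ a pendant ``arm'' $A_v$, a path of $t-1$ additional bags ending in a terminal $a_v$, so that $\dist{G}{a_v}{q_v}=t-1$. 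Then for $u\neq v$, $\dist{G}{a_u}{a_v}=2(t-1)+\dist{G}{q_u}{q_v}$ equals $2t=s$ for a non-edge and $2t+1=s+1$ for an edge of $H$. This exact straddling of $s$ by the two cases is what forces $s$ to be even. The graph $G$ is well-partitioned chordal: take $K$ as one bag, each $\{q_v\}$ as a bag adjacent to $K$ (the bipartite edge set $\{q_v\}\times(N(q_v)\cap K)$ is complete bipartite, as required by Definition~\ref{def:well-partitioned}(b)), and the arms $A_v$ as pendant paths of singleton bags; conditions (a)--(c) are immediate, and since the split core is chordal and we only append pendant paths, chordality is preserved.

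For the forward direction, given a vertex cover $\mathcal{C}$ of $H$ I delete $\{q_v : v\in\mathcal{C}\}$. Each severed arm becomes a component of diameter $t-1\le s$, while the surviving representatives $\{a_v : v\notin\mathcal{C}\}$ correspond to an independent set of $H$, so every surviving pair sits at distance $2t=s$; one checks that all remaining distances (to and within $K$, which are at most $t+1\le 2t$) are also at most $s$, so $G-\{q_v:v\in\mathcal C\}$ is a valid solution of size $|\mathcal C|$. For the backward direction I first argue any solution $S$ may be assumed to delete no hub vertex: deleting some $z_f$ only destroys common neighbours and lengthens a non-edge representative distance from $s$ to $s+2$, which can only create new violations, so such deletions may be removed without increasing $|S|$. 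For every edge $uv\in E(H)$ the only short connection between $a_u$ and $a_v$ routes through the hub at length $2t+1>s$; since deletions cannot create shortcuts, $S$ must disconnect $a_u$ from $a_v$ or delete one of them, and any such deletion lies on the ``$u$-side'' ($A_u$ or $q_u$) or the ``$v$-side''. Charging each edge to a side yields a vertex cover of $H$ of size at most $|S|$.

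The main obstacle is the backward analysis: I must rule out cheaper, structure-exploiting solutions, i.e. show that no collection of arm-internal or hub deletions repairs the edge-violations more cheaply than a vertex cover. This reduces to verifying that (i) hub deletions are never beneficial and (ii) for each edge the cheapest repair is a single deletion attributable to exactly one endpoint, so that the charging is well defined and size-preserving. Combined with the exact equality between $\mathrm{OPT}_{\textsc{$s$-CVD}}(G)$ and the minimum vertex cover of $H$ established by the two directions, a hypothetical $(2-\epsilon)$-approximation for \textsc{$s$-CVD} on well-partitioned chordal graphs would yield one for \textsc{Vertex Cover}, contradicting the Unique Games Conjecture.
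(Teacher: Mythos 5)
Your construction is the same as the paper's: your representatives $q_v$ are its independent set $I=\{x_v\}$, your hub $K$ indexed by the non-edges of $H$ is its clique $C=\{y_e\}_{e\in E(\overbar{G})}$, and your arms of length $t-1$ are its pendant paths $P_v$ with $k-1$ edges; your forward direction is exactly one direction of Lemma~\ref{lem:vertex-cover-2-CVD}. The gap is in the backward direction, at precisely the step you flag as ``the main obstacle'': your justification for discarding hub deletions does not work. You argue that since deleting $z_f$ ``only destroys common neighbours and lengthens distances,'' such deletions ``may be removed without increasing $|S|$.'' But removing $z_f$ from the deletion set means putting $z_f$ back into the graph, and un-deleting a vertex can merge two components that individually have diameter at most $s$ into a single component of diameter more than $s$; so $S\setminus\{z_f\}$ need not be feasible, and the claim as justified is unproved (the reasoning is about why hub deletions are unhelpful, not about how to eliminate them from a given solution). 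Likewise your charging step (ii) silently assumes every edge $uv$ of $H$ is repaired by a deletion in $A_u\cup\{q_u\}$ or $A_v\cup\{q_v\}$, but a solution could instead disconnect $q_u$ from the hub by deleting all of $N(q_u)\cap K$, in which case no side-deletion exists and the charge is undefined.

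What is needed --- and what the paper supplies in Observation~\ref{obs:neighbour} and Lemma~\ref{lem:independent-set} --- is a counting/exchange argument based on two degree facts: every hub vertex has exactly two representative neighbours, and (after your dummy-vertex padding, or the paper's maximum-degree assumption) every representative has at least two hub neighbours. The paper takes a minimum solution $S$ maximizing $|S\cap I|$, considers the big component meeting the clique and the set $Y$ of representatives stranded outside it, shows each $v\in S\cap C$ has exactly two neighbours in $Y$ while each $u\in Y$ has all of its at least two clique neighbours inside $S\cap C$, concludes $|Y|\le|S\cap C|$, and swaps $S\cap C$ for $Y$. Alternatively, your charging scheme can be repaired by additionally placing into the cover every $u$ with $N(q_u)\cap K\subseteq S$ and using the same two degree facts to bound the total charge by $|S|$ (each deleted hub vertex is charged by at most two such $u$, and each such $u$ contributes at least two deleted hub vertices). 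Either way, this $2$-versus-at-least-$2$ counting is the missing ingredient; without it the reduction is not size-preserving in the backward direction.
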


Our third result is a faster algorithm for \textsc{$s$-CVD} on \emph{interval graphs}. 
\begin{definition}\label{def:interval}
A graph $G$ is an interval graph if there is a collection $\mathcal{I}$ of intervals on the real line such that each vertex of the graph can be mapped to an interval and two intervals intersect if and only if there is an edge between the corresponding vertices in $G$. The set $\mathcal{I}$ is an interval representation of $G$
\end{definition}

We prove the following theorem in Section~\ref{sec:thm-interval}.

	\begin{theorem}\label{thm:interval-scd}
    For each $s\geq 1$, there is an $O(n(n+m))$-time algorithm to solve \textsc{$s$-CVD} on interval graphs with $n$ vertices and $m$ edges.
	\end{theorem}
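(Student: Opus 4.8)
The plan is to exploit the linear structure of interval graphs and solve \textsc{$s$-CVD} by reducing it to the placement of \emph{cut points} on the real line. I would fix an interval representation and sort the vertices so that $\rightend{v_1}\le \dots \le \rightend{v_n}$. The first observation is that the components of any feasible solution have pairwise disjoint \emph{spans}: the union of the intervals of a connected induced subgraph is a single contiguous interval, and if the spans of two components overlapped then a vertex of one would intersect a vertex of the other, contradicting that they lie in different components. Hence the surviving components are linearly ordered along the line, separated by cut points, and every deleted vertex is (charged to) an interval that straddles a cut. A short exchange argument then shows that an optimal solution never deletes a vertex lying \emph{strictly inside} a band: reinserting such a vertex cannot increase any distance (adding intervals only enlarges the reachable region), so it would stay within the band without raising the diameter, contradicting minimality. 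Thus \textsc{$s$-CVD} is equivalent to choosing cut points so that each maximal connected band has diameter at most $s$, while minimising the number of straddling intervals.

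The core structural ingredient is a characterisation of the diameter of a band by a single distinguished vertex. For a connected induced subinterval graph $H$, let $\Top{H}$ be the vertex of $H$ with the smallest right endpoint. I would prove that $\mathrm{diam}(H)=\mathrm{ecc}_H(\Top{H})$, using the fact that in an interval graph the leftmost and rightmost intervals form a diametral (dominating) pair and that the greedy ``furthest-reach'' description of BFS makes $\dist{H}{\Top{H}}{w}$ a monotone threshold function of $\leftend{w}$. Consequently $H$ has diameter at most $s$ if and only if every vertex of $H$ lies in $\Level{H}{\Top{H}}{0}\cup\dots\cup\Level{H}{\Top{H}}{s}$, i.e.\ if and only if every vertex is within distance $s$ of $\Top{H}$ in $H$. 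This collapses a global all-pairs condition to a single-source level condition, and, combined with the monotonicity above, it shows that once the top and right boundary of a band are fixed the band should contain \emph{all} intervals inside its span.

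With these in place I would set up the dynamic program. For each candidate top $t$ I run one BFS from $t$ over the intervals to its right, which yields the levels $\dist{G}{t}{\cdot}$ and hence the furthest extent $E(t)$ to which a band topped at $t$ can reach while keeping every included vertex within level $s$; this costs $O(n+m)$ per top. I then stitch bands together with a left-to-right recurrence: writing $\OPT{i}{t}$ for the minimum number of straddling deletions in the prefix $v_1,\dots,v_i$ given that $v_i$ belongs to the active band whose top is $v_t$, the recurrence either extends the active band (allowed exactly while the new vertex stays within level $s$ of $v_t$, which by the lemma preserves the diameter) or closes it and opens a new band at a later top, paying for the intervals that cross the chosen cut. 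Since each of the $O(n)$ tops is processed by a single $O(n+m)$ sweep and the stitching is subsumed, the total running time is $O(n(n+m))$, as claimed.

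The main obstacle is the structural lemma $\mathrm{diam}(H)=\mathrm{ecc}_H(\Top{H})$ together with its faithful use in the recurrence: I must argue that maintaining only the top and its level thresholds (rather than full internal distance information) is sufficient to certify that every completed band has diameter at most $s$, and that the greedy ``keep everything inside the span'' rule is optimal. The remaining delicate points are bookkeeping: ensuring that an interval straddling a band boundary is charged as a deletion exactly once across the overlapping choices of tops, and that the per-top BFS computes levels in linear rather than logarithmic-per-vertex time (which I would secure by reading levels directly off the BFS layering of $G$ instead of searching the thresholds). Everything else reduces to routine processing over the sorted order.
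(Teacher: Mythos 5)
Your reduction to ``cut points'' rests on the claim that an optimal solution never deletes a vertex lying strictly inside the span of a surviving band, justified by the observation that reinserting such a vertex cannot increase any distance. That exchange argument is flawed: reinserting a vertex $y$ indeed cannot increase the distance between two \emph{other} vertices, but it creates new pairs involving $y$ itself, and $y$ may have eccentricity larger than $s$ inside the band even though its interval is contained in the band's span. Concretely, take $s=1$ and the intervals $q_1,\dots,q_{10}=[0,10.5]$, $v=[10,11]$, $v'=[10.2,11.1]$ and $y=[0.1,0.2]$. The induced $P_3$'s are exactly $y\,q_i\,v$ and $y\,q_i\,v'$, so the unique minimum CVD set is $\{y\}$; the surviving graph is a single clique with span $[0,11.1]$, and the deleted interval $[0.1,0.2]$ lies strictly inside that span and straddles no gap between surviving components. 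Hence \textsc{$s$-CVD} on interval graphs is \emph{not} equivalent to choosing cut points and deleting the straddling intervals, and both the dynamic program you build on that equivalence and your ``keep everything inside the span'' rule are incorrect.

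This is exactly the difficulty the paper's proof is organised around. Its Main Lemma shows that a minimum $s$-CVD set of a prefix graph $\Inducedgraph{G}{1}{a}-A$ falls into one of \emph{four} types: besides the two ``separator'' types (delete a clique separator $\Sep{c}{c+1}-A$, with the right-hand part having diameter at most $s$, respectively exactly $s+1$), there is a type that deletes the level set $\Level{H}{s+1}{v}$ of vertices at distance exactly $s+1$ from a frontal vertex $v$, and a type that deletes a separator \emph{together with} such a level set. Those level sets are precisely the ``interior'' deletions your framework cannot produce (in the example above, $\{y\}$ is the level set $\Level{H}{2}{v}$ for the frontal vertex $v$). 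If you want to salvage your approach, you would need to enlarge the transitions of your DP so that a band can also be closed by removing the $(s+1)$-st BFS level from its top, and combinations thereof --- at which point you essentially recover the paper's four-type case analysis. The diameter-via-extreme-vertex lemma you propose is sound and matches the paper's observation that a vertex of the last clique ends a diametral path of the frontal component, but it does not compensate for the missing deletion types.
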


We note that our techniques deviate significantly from the ones in the previous literature~\cite{schafer2009, cao2018vertex, chakraborty2021algorithms}. We show that the optimal solution (for \textsc{$s$-CVD} on interval graphs) must be one of ``four types'' and the optimum for each of the ``four types'' can be found by solving \textsc{$s$-CVD} on $O(m+n)$ many induced subgraphs. Furthermore, we exploit the ``linear'' structure of interval graphs to ensure that optimal solution in each case can be found in $O(n)$-time. Our result significantly improves the state-of-the-art running time $\left(O\left(n^4\right)\text{, See \cite{chakraborty2021algorithms}} \right)$ for \textsc{$s$-CVD} on interval graphs. 


\section{Polynomial time algorithm for CVD on well-partitioned chordal graphs}\label{sec:thm-well-partition}

In this section, we shall give a polynomial-time algorithm to solve \textsc{CVD} on well-partitioned chordal graphs. In the next section, we present the main ideas of our algorithm and describe our techniques for proving Theorem~\ref{thm:well-partition-polytime}.

\subsection{Overview of the algorithm}

 Let $G$ be a well-partitioned chordal graph with a partition tree $\mathcal{T}$ rooted at an arbitrary node. For a node $X$, let $\mathcal{T}_X$ be the subtree rooted at $X$ and $G_X$ be the subgraph of $G$ induced by the vertices in the nodes of $\mathcal{T}_X$. For two adjacent nodes $X,Y$ of $\mathcal{T}$, \emph{the boundary of $X$ with respect to $Y$} is the set $bd(X,Y) = \{x\in X\colon N(x)\cap Y\neq \emptyset\}$. For a node $X$, $P(X)$ denotes the parent of $X$ in $\mathcal{T}$. We denote minimum CVD sets of $G_X$ and $G_X-bd(X,P(X))$ as $OPT(G_X)$ and $OPT(G_X-bd(X,P(X))$, respectively. We shall use the above notations extensively in the description of our algorithm and proofs. 
 
 Our dynamic programming-based algorithm traverses $\mathcal{T}$ in a post-order fashion and for each node $X$ of $\mathcal{T}$, computes $OPT(G_X)$ and $OPT(G_X-bd(X,P(X)))$. A set $S$ of vertices is a \emph{\textsc{CVD} set} of $G$ if $G-S$ is disjoint union of cliques. At the heart of our algorithm lies a characterisation of CVD sets of $G_X$, showing that any CVD set of $G_X$ can be exactly one of two types, namely, \emph{$X$-CVD set} or \emph{$(X,Y)$-CVD set} where $Y$ is a child of $X$ (See Definitions~\ref{def:CVD-type-(X)} and~\ref{def:CVD-type-(X,Y)}). Informally, for a node $X$, a CVD set is an $X$-CVD set if it contains $X$ or removing it from $G_X$ creates a cluster all of whose vertices are from $X$. On the contrary, a CVD set is an $(X,Y)$-CVD set if its removal creates a cluster intersecting both $X$ and $Y$, where $Y$ is a child of $X$. In Lemma~\ref{lem:CVD-characterisation}, we formally show that any CVD set of $G_X$ must be one of the above two types.
 
 To compute a minimum $X$-CVD set, first we construct a weighted bipartite graph $\mathcal{H}$ which is defined in Section~\ref{sec:(X)-CVD} and show that a minimum weighted vertex cover of $\mathcal{H}$ can be used to construct a minimum $X$-CVD set of $G$. (See Equations~\ref{eq:sol_1},~\ref{eq:sol_2},~\ref{eq:sol_3},~\ref{eq:sol}). Then in Section~\ref{sec:(X,Y)-CVD}, we show that the subroutine for finding minimum $X$-CVD sets can be used to to get a minimum $(X,Y)$-CVD set for each child $Y$ of $X$. Finally, in Section~\ref{sec:main-proof} we combine our tools and give an $O(m^2 n)$-time algorithm to find a minimum CVD set of an well-partitioned chordal graph $G$ with $n$ vertices and $m$ edges.


\subsection{Definitions and lemma}

In this section, we introduce some definitions and prove the lemma that facilitates the construction of a polynomial-time algorithm for finding a minimum CVD set of well-partitioned graphs. 

\begin{definition}
A \emph{cluster} $C$ of a graph $G$ is a connected component that is isomorphic to a complete graph.
\end{definition}

\begin{definition}\label{def:CVD-type-(X)}
Let $G$ be a well-partitioned graph, $\mathcal{T}$ be its partition tree, and $X$ be the root node of $\mathcal{T}$. A CVD set $S$ of $G$ is an $X$-CVD set if either $X\subseteq S$ or $G-S$ contains a cluster $C\subseteq X$.
\end{definition}

\begin{definition}\label{def:CVD-type-(X,Y)}
Let $G$ be a well-partitioned graph, $\mathcal{T}$ be its partition tree, $X$ be the root node of $\mathcal{T}$. Let $Y$ be a child of $X$.
A \textsc{CVD} set $S$ is a ``$(X,Y)$-CVD set'' if $G-S$ has a cluster $C$ such that $C \cap X \neq \emptyset$ and $C \cap Y \neq \emptyset$.
\end{definition}

\begin{lemma}\label{lem:CVD-characterisation}
Let $S$ be a CVD set of $G$. Then exactly one of the following holds.
\begin{enumerate}[label=(\alph*)]
    \item The set $S$ is a $X$-CVD set.
    \item There is exactly one child $Y$ of $X$ in $\mathcal{T}$ such that $S$ is an $(X,Y)$-CVD set of $G$.
\end{enumerate}
\end{lemma}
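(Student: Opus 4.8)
The plan is to show that the two conditions are mutually exclusive and jointly exhaustive, by analysing a cluster (clique component) of $G-S$ that meets the root $X$. Recall that $G$ here is $G_X$ with root $X$, and that a CVD set $S$ produces a disjoint union of cliques $G-S$. The starting observation is that if $X\subseteq S$, then condition (a) holds immediately and we must rule out (b); otherwise $X\setminus S\neq\emptyset$, so pick any vertex $x\in X\setminus S$ and let $C$ be the cluster of $G-S$ containing $x$. Since $C$ is a connected component and $X$ is a clique with $x\in X$, all of $X\setminus S$ lies in $C$ (no vertex of $X$ is separated from $x$). The whole argument then turns on \emph{where else} $C$ can reach: by property (c) of Definition~\ref{def:well-partitioned}, a vertex of $X$ has neighbours only inside $X$ or inside adjacent nodes, and by the rooting, the nodes adjacent to $X$ in $\mathcal{T}_X$ are exactly the children of $X$. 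Hence $C\subseteq X\cup\bigcup_{Y\text{ child of }X}Y$.

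First I would handle exhaustiveness. Take the cluster $C\supseteq X\setminus S$ as above. Either $C\subseteq X$, in which case $S$ is an $X$-CVD set and (a) holds; or $C$ contains a vertex outside $X$, which by the containment just derived must lie in some child $Y$ of $X$, giving $C\cap X\neq\emptyset$ and $C\cap Y\neq\emptyset$, i.e. $S$ is an $(X,Y)$-CVD set and (b) holds. Combined with the $X\subseteq S$ case, this shows at least one of (a), (b) always holds.

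Next I would prove uniqueness of the child $Y$ in (b), and mutual exclusivity of (a) and (b). The key structural fact is that in a disjoint union of cliques a vertex belongs to exactly one component, so there is a \emph{unique} cluster $C$ of $G-S$ containing the nonempty clique $X\setminus S$. If $C$ meets two distinct children $Y_1,Y_2$, pick $y_1\in C\cap Y_1$ and $y_2\in C\cap Y_2$; since $C$ is a clique, $y_1y_2\in E(G)$, but $Y_1,Y_2$ are non-adjacent nodes of $\mathcal{T}$ (distinct children of $X$ are not adjacent in the tree), so property (c) forbids any edge between them -- a contradiction. Hence $C$ meets at most one child, so the $Y$ in (b) is unique when it exists. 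The same uniqueness of $C$ shows (a) and (b) cannot hold simultaneously: if $X\not\subseteq S$ then the single cluster $C\supseteq X\setminus S$ either stays inside $X$ (exactly case (a), failing (b)) or escapes into one child (exactly case (b), and then $C\not\subseteq X$ so (a) fails); while if $X\subseteq S$ then no cluster meets $X$ at all, so (b) is vacuously impossible and only (a) holds.

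The main obstacle, and the point deserving the most care, is the tie between ``the cluster containing $X\setminus S$'' and the \emph{children} of $X$: one must justify that the only tree-neighbours of $X$ available to $C$ are children (using that $\mathcal{T}_X$ is rooted at $X$, so $X$ has no parent within $G_X$), and that distinct children are pairwise non-adjacent in $\mathcal{T}$ so that property (c) applies to forbid edges between them. I would also be careful to separate the degenerate case $X\subseteq S$ cleanly, since there is then no cluster meeting $X$ and both the $X$-CVD condition (first disjunct of Definition~\ref{def:CVD-type-(X)}) and the exclusion of (b) must be checked directly rather than through the cluster $C$.
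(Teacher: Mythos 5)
Your proposal is correct and follows essentially the same route as the paper: isolate the unique cluster of $G-S$ containing $X\setminus S$ (which exists unless $X\subseteq S$), observe it either stays inside $X$ (case (a)) or escapes into a child, and use property (c) of Definition~\ref{def:well-partitioned} together with the non-adjacency of distinct children in $\mathcal{T}$ to rule out the cluster meeting two children. Your write-up is somewhat more explicit than the paper's on the degenerate case $X\subseteq S$ and on the mutual exclusivity of (a) and (b), but these are elaborations of the same argument rather than a different approach.
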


\begin{proof}
If $X\subseteq S$ or if $G-S$ has a cluster which is contained in $X$, then $S$ is an $X$-CVD set. Otherwise, $X^{*} =(G-S) \cap X \neq \emptyset$ and since $X^{*}$ is a clique, $G-S$ must contain a cluster $C$ such that $X^{*} \subset C \not \subseteq X$. Therefore, $C$ should intersect with at least one child of $X$. Let $Y_1, Y_2$ be children of $X$. If both $C \cap Y_1 \neq \emptyset$ and $C \cap Y_2 \neq \emptyset$, then $C$ is not a cluster because $Y_1$ and $Y_2$ are non-adjacent nodes of $\mathcal{T}$. Hence $C$ intersects exactly one child of $X$.
\end{proof}

\subsection{Finding minimum $X$-CVD sets}\label{sec:(X)-CVD}

In this section, we prove the following theorem.

\begin{theorem}\label{thm:private}
Let $G$ be a well-partitioned graph rooted at $X$ and $\mathcal{T}$ be a partition tree of $G$. Assume for each node $Y\in V(\mathcal{T})-\{X\}$ both $OPT(G_Y)$ and $OPT(G_Y-bd(Y,P(Y)))$ are given, where $P(Y)$ is the parent of $Y$ in $\mathcal{T}$. Then a minimum $X$-CVD set of $G$ can be computed in $O\left(|E(G)|.|V(G)|\right)$ time.
\end{theorem}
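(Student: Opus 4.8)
The plan is to reduce the computation of a minimum $X$-CVD set to a single instance of \textsc{Weighted Bipartite Vertex Cover}, and then to solve that instance by a max-flow computation.

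First I would analyse the structure forced by Definition~\ref{def:CVD-type-(X)}. Write $X^{*}=X\setminus S$ for the surviving part of $X$; being an $X$-CVD set means $X^{*}$ (when non-empty) is a cluster contained in $X$, so no vertex of $X^{*}$ may retain a neighbour in any child. Since the only edges from $X$ into a child $Y_i$ run between $bd(X,Y_i)$ and $bd(Y_i,X)$, and every vertex of $bd(X,Y_i)$ is adjacent to \emph{all} of $bd(Y_i,X)$, the cluster condition is equivalent to the following dichotomy holding for each child $Y_i$ of $X$: either $bd(X,Y_i)\cap X^{*}=\emptyset$ (the whole parent-side boundary toward $Y_i$ is deleted), or $bd(Y_i,X)\subseteq S$ (the whole child-side boundary is deleted). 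In the first case $G_{Y_i}$ is severed from $X^{*}$ and $S\cap V(G_{Y_i})$ is merely a CVD set of $G_{Y_i}$, costing at least $|OPT(G_{Y_i})|$; in the second case $S\cap V(G_{Y_i})$ is a CVD set of $G_{Y_i}$ containing $bd(Y_i,X)$, costing at least $|bd(Y_i,X)|+|OPT(G_{Y_i}-bd(Y_i,X))|$. Both target quantities are available by hypothesis, since $P(Y_i)=X$.

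Next I would build the bipartite graph $\mathcal{H}$: one side is the set $\bigcup_i bd(X,Y_i)\subseteq X$, the other side has one node per child $Y_i$, and $x$ is joined to $Y_i$ exactly when $x\in bd(X,Y_i)$. I give each vertex $x$ weight $1$ and each child node $Y_i$ weight $w(Y_i)=|bd(Y_i,X)|+|OPT(G_{Y_i}-bd(Y_i,X))|-|OPT(G_{Y_i})|$, which is non-negative because deleting $bd(Y_i,X)$ and then optimally cleaning up is one valid CVD set of $G_{Y_i}$. The intended dictionary is: covering an edge $(x,Y_i)$ by $x$ means deleting $x$, and covering it by $Y_i$ means paying the extra cost to sever $Y_i$ at its own boundary. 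I then claim
\[
(\text{minimum } X\text{-CVD cost}) = \sum_i |OPT(G_{Y_i})| + \tau(\mathcal{H}),
\]
where $\tau(\mathcal{H})$ is the minimum weight of a vertex cover of $\mathcal{H}$. For ``$\le$'', from a minimum weighted vertex cover $U$ I construct $S$ by deleting every $X$-vertex in $U$, deleting $bd(Y_i,X)$ together with an optimal CVD set of $G_{Y_i}-bd(Y_i,X)$ for each child $Y_i\in U$, and taking an optimal CVD set of $G_{Y_i}$ for each child $Y_i\notin U$; since $U$ covers every edge, no surviving vertex of $X$ keeps a neighbour in a child, so $X^{*}$ is a cluster and $S$ is an $X$-CVD set of the stated cost. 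For ``$\ge$'', from any $X$-CVD set $S$ I read off the cover consisting of $X\setminus X^{*}$ together with every child $Y_i$ for which $bd(X,Y_i)\cap X^{*}\neq\emptyset$; the dichotomy shows this is a vertex cover, and the two per-child lower bounds give $|S|\ge \sum_i|OPT(G_{Y_i})|+w(U)$. The special case $X\subseteq S$ is automatically captured by placing all of $X$ into the cover and keeping every child free.

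Finally, $\mathcal{H}$ is bipartite with $O(|V(G)|)$ vertices and, since distinct edges of $\mathcal{H}$ can be charged to distinct edges of $G$ between $X$ and its children, only $O(|E(G)|)$ edges, with integer weights bounded by $O(|V(G)|)$. I would compute $\tau(\mathcal{H})$ by the standard max-flow/min-cut formulation of weighted bipartite vertex cover (source to each $x$ with capacity $1$, each $Y_i$ to sink with capacity $w(Y_i)$, and infinite-capacity arcs for the edges of $\mathcal{H}$). The minimum cut has value at most $|X|\le|V(G)|$ (delete all of $X$), so an augmenting-path algorithm terminates within $O(|V(G)|)$ integral augmentations, each of cost $O(|E(\mathcal{H})|)=O(|E(G)|)$, yielding the desired $O(|E(G)|\cdot|V(G)|)$ bound. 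The main obstacle is the reverse direction of the correspondence: one must verify that for \emph{every} $X$-CVD set the restriction $S\cap V(G_{Y_i})$ is genuinely a CVD set of $G_{Y_i}$ (respectively of $G_{Y_i}-bd(Y_i,X)$), so that the supplied optimal values are valid lower bounds and no cheaper, irregular deletion pattern is overlooked.
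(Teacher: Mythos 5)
Your proposal is correct and follows essentially the same route as the paper: the same auxiliary weighted bipartite graph (a child node $Y_i$ with weight $|bd(Y_i,X)|+|OPT(G_{Y_i}-bd(Y_i,X))|-|OPT(G_{Y_i})|$ is exactly the paper's vertex $B=bd(X_i,X)$ with weight $w(B)$), the same two translation maps between vertex covers and $X$-CVD sets, and the same max-flow computation of the minimum weighted vertex cover. The only difference is presentational: you package optimality as the single identity $\mathrm{OPT}=\sum_i|OPT(G_{Y_i})|+\tau(\mathcal{H})$ proved by two inequalities, whereas the paper reaches the same conclusion through a case analysis over the partition $I_1,\dots,I_4$; your closing concern (that $S\cap V(G_{Y_i})$ is a CVD set of the relevant subgraph) is immediate since induced subgraphs of disjoint unions of cliques are again disjoint unions of cliques.
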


For the remainder of this section, we denote by $G$ a fixed well-partitioned graph rooted at $X$ with a partition tree $\mathcal{T}$. Let $X_1,X_2,\ldots,X_t$ be the children of $X$.  The main idea behind our algorithm for finding minimum $X$-CVD set of $G$ is to construct an auxiliary vertex weighted bipartite graph $\mathcal{H}$ with at most $|V(G)|$ vertices such that the (minimum) vertex covers of $\mathcal{H}$ can be used to construct (minimum) $X$-CVD-\textsc{CVD} set. Below we describe the construction of $\mathcal{H}$.

Let $\mathcal{B} = \left\{bd(X_i,X)\colon i\in [t]\right\}$. The vertex set of $\mathcal{H}$ is $X \cup \mathcal{B}$ and the edge set of $\mathcal{H}$ is defined as 
\begin{eqnarray}\label{eq:construct-1}
E(\mathcal{H}) &=& \{uB\colon u\in X, B\in \mathcal{B}, \forall v\in B, uv\in E(G)\}
 \end{eqnarray} 
The weight function on the vertices of $\mathcal{H}$ is defined as follows. For each vertex $u\in X$, define $w(u)=1$ and for each set $B\in \mathcal{B}$ where $B=bd(X_j,X)$, define 
\begin{eqnarray} 
\label{eq:weight}
w(B) &=& |B| + \left|OPT(G_{X_j} - B)\right| - \left|OPT (G_{X_j})\right|
\end{eqnarray} 

\begin{remark}
Since $B \cup OPT(G_{X_j}-B)$ is a CVD set of $G_{X_j}$, we have $ |OPT(G_{X_j})| \leq  |B| + |OPT(G_{X_j}-B)| $ and therefore $w(B)\geq 0$.
\end{remark}
Below we show how minimum weighted vertex covers of $\mathcal{H}$ can be used to compute minimum $X$-CVD set of $G$. For an $X$-CVD set $Z$ of $G$, define $Cov(Z)=(X\cap Z) \cup \left\{B \in \mathcal{B} \colon B \subseteq Z\right \}$.

\begin{lemma}\label{lem:sol-cover}
Let $Z$ be an $X$-CVD set of $G$. Then $Cov(Z)$ is a vertex cover of $\mathcal{H}$.
\end{lemma}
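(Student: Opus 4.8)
The plan is to verify the vertex-cover condition edge by edge, arguing by contradiction. Recall that a subset of $X \cup \mathcal{B}$ is a vertex cover of $\mathcal{H}$ precisely when every edge $uB \in E(\mathcal{H})$ has at least one endpoint in it. So I would fix an arbitrary edge $uB$ of $\mathcal{H}$ and assume, for contradiction, that neither $u$ nor $B$ lies in $Cov(Z)$.

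Unpacking the two assumptions using the definition $Cov(Z)=(X\cap Z) \cup \{B \in \mathcal{B} \colon B \subseteq Z\}$: since $u \in X$, the condition $u \notin Cov(Z)$ forces $u \notin Z$; and the condition $B \notin Cov(Z)$ means $B \not\subseteq Z$, so there exists a vertex $v \in B$ with $v \notin Z$. By the definition of $E(\mathcal{H})$ in Equation~\ref{eq:construct-1}, the edge $uB$ guarantees that $u$ is adjacent in $G$ to every vertex of $B$, and in particular $uv \in E(G)$.

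Now I would pass to the graph $G - Z$. Both $u$ and $v$ survive the deletion and are adjacent, so they lie in a common connected component $C$ of $G - Z$; since $Z$ is a \textsc{CVD} set, $C$ is a cluster. Writing $B = bd(X_j, X)$ for the appropriate child $X_j$ of $X$, we have $u \in X$ while $v \in B \subseteq X_j$, so $C \cap X \neq \emptyset$ and $C \cap X_j \neq \emptyset$. By Definition~\ref{def:CVD-type-(X,Y)} this makes $Z$ an $(X, X_j)$-CVD set, which by Lemma~\ref{lem:CVD-characterisation} is mutually exclusive with $Z$ being an $X$-CVD set, contradicting the hypothesis. Hence every edge of $\mathcal{H}$ is covered and $Cov(Z)$ is a vertex cover.

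The argument is short; the main thing to be careful about is matching the definitions exactly---that an edge $uB$ demands adjacency of $u$ to \emph{all} of $B$ (so a single surviving $v \in B$ already produces an edge from $u$ into the child $X_j$), and that $bd(X_j, X) \subseteq X_j$ places $v$ genuinely inside a child node. The one place where care is needed is invoking the ``exactly one'' clause of Lemma~\ref{lem:CVD-characterisation} to close the contradiction; alternatively one can avoid this lemma and argue straight from Definition~\ref{def:CVD-type-(X)}, first disposing of the case $X \subseteq Z$ (where every $u \in X$ is trivially in $Cov(Z)$) and then, in the case where $G - Z$ has a cluster $C_0 \subseteq X$, using that $X$ is a clique to conclude that all of $X \setminus Z$ lies in $C_0$, so no surviving vertex of $X$ can have a neighbour outside $X$ in $G - Z$.
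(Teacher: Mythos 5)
Your proof is correct and follows essentially the same route as the paper: take an uncovered edge $uB$, deduce $u\in X-Z$ and a surviving $v\in B$, and observe that the cluster of $G-Z$ containing the edge $uv$ meets both $X$ and the child containing $B$, contradicting the definition of an $X$-CVD set. The only cosmetic difference is that the paper closes the contradiction directly from Definition~\ref{def:CVD-type-(X)} (noting $X-Z\subseteq C_u$ since $X$ is a clique) rather than via the exclusivity clause of Lemma~\ref{lem:CVD-characterisation}, which is exactly the alternative you already sketch.
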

\begin{proof}
Assume that $Cov(Z)$ is not a vertex cover of $\mathcal{H}$. Then there exists at least one edge $e=uB$ in $\mathcal{H}- Cov(Z)$. Hence from the definition of $Cov(Z)$ we infer that $u \in X-Z$ and $B \not \subseteq Z$. Let $C_u$ be the cluster of $G-Z$ that contains the vertex $u$. Since $X$ is a clique, $X-Z \subseteq C_u$. Observe that since $uB$ is an edge of $\mathcal{H}$, there exists a vertex $w \in B$ such that $uw \in E(G)$. Then the definition of partition tree $\mathcal{T}$ and $B$ implies that all vertices of $B$ are contained in $N(u)$. Since  $B \not \subseteq Z$ it follows that there exists at least one vertex $v \in B$ in $G-Z$ such that $uv \in E(G-Z)$ and hence $v \in C_u$.  Therefore, the cluster $C_u$ intersects the child of $X$ that contains $B$ which contradicts the assumption that $Z$ is an $X$-CVD set of $G$ (see definition of $X$-CVD set). 
\end{proof}

For a vertex cover $D$ of $\mathcal{H}$, define

\begin{eqnarray}
    \label{eq:sol_1}
    S_1(D)&=&D\cap X \\
    \label{eq:sol_2}
    S_2(D)&=&  \displaystyle\bigcup\limits_{\substack{B\in D\cap \mathcal{B} \\ B=bd(X_i,X)}} B \cup OPT(G_{X_i} - bd(X_i,X)) \\ 
    \label{eq:sol_3}
    S_3(D) &=& \displaystyle\bigcup\limits_{\substack{B\in  \mathcal{B}- D \\ B=bd(X_i,X)}} OPT(G_{X_i}) \\
    \label{eq:sol}
    Sol(D) &=& S_1(D)\cup S_2(D) \cup S_3(D)
\end{eqnarray}

Note that, by definition $S_i(D) \cap S_j(D) = \emptyset, 1 \leq i < j \leq 3$. We have the following lemma.

\begin{lemma}
Let $D$ be a vertex cover of $\mathcal{H}$. Then $Sol(D)$ is an $X$-CVD set of $G$.
\end{lemma}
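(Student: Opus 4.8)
The plan is to verify directly the two requirements of Definition~\ref{def:CVD-type-(X)}: first that $Sol(D)$ is a \textsc{CVD} set of $G$, i.e. every connected component of $G-Sol(D)$ is a clique, and second that either $X\subseteq Sol(D)$ or $G-Sol(D)$ contains a cluster contained in $X$. I would analyse $G-Sol(D)$ by decomposing it into the pieces lying in the child subtrees $\mathcal{T}_{X_i}$ together with the remnant $X-Sol(D)$ of the root. By property (c) of Definition~\ref{def:well-partitioned} the only edges of $G$ joining two distinct subtrees, or a subtree to $X$, pass through the boundaries, so once the interactions with $X$ are understood the three kinds of pieces can be treated independently. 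Note at the outset that $Sol(D)\cap X=S_1(D)=D\cap X$, since by Equations~\ref{eq:sol_2} and~\ref{eq:sol_3} the sets $S_2(D)$ and $S_3(D)$ consist only of vertices of the subtrees and are disjoint from $X$.

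The key structural fact I would isolate first is a consequence of property (b) of Definition~\ref{def:well-partitioned}: the edges of $G$ between $X$ and a child $X_i$ form the complete bipartite graph $bd(X,X_i)\times bd(X_i,X)$, so a vertex $u\in X$ is adjacent to \emph{all} of $B_i:=bd(X_i,X)$ precisely when $u\in bd(X,X_i)$, and otherwise $u$ has no neighbour in $X_i$ at all. By the definition of $E(\mathcal{H})$ in Equation~\ref{eq:construct-1}, this says exactly that $uB_i\in E(\mathcal{H})$ if and only if $u\in bd(X,X_i)$. Now I would treat each child $X_i$ separately. If $B_i\in D$, then $S_2(D)$ contains $B_i\cup OPT(G_{X_i}-B_i)$, so the part of $G-Sol(D)$ inside $G_{X_i}$ is $(G_{X_i}-B_i)-OPT(G_{X_i}-B_i)$; this is a disjoint union of cliques because $OPT(G_{X_i}-B_i)$ is a \textsc{CVD} set of $G_{X_i}-B_i$, and it has no edge to $X$ because every vertex of $X_i$ adjacent to $X$ lies in $B_i$ and has been deleted. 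If $B_i\notin D$, then $S_3(D)$ contains $OPT(G_{X_i})$, so the part of $G-Sol(D)$ inside $G_{X_i}$ is $G_{X_i}-OPT(G_{X_i})$, again a disjoint union of cliques.

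It remains to check that in the second case this piece is still isolated from the root remnant, and this is where the vertex-cover hypothesis does the work. Let $u\in X-D$ be a surviving root vertex, and let $X_i$ be a child with $B_i\notin D$. Neither endpoint of a potential edge $uB_i$ lies in $D$, so, since $D$ is a vertex cover of $\mathcal{H}$, the edge $uB_i$ cannot exist; by the structural fact this forces $u\notin bd(X,X_i)$, i.e. $u$ has no neighbour in $X_i$. Thus no surviving root vertex is adjacent to any surviving vertex of any child, so $X-Sol(D)=X-D$ is an entire connected component of $G-Sol(D)$; as $X$ is a clique this component is a cluster contained in $X$ (or it is empty, in which case $X\subseteq Sol(D)$). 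Combining the three kinds of pieces, every component of $G-Sol(D)$ is a clique, so $Sol(D)$ is a \textsc{CVD} set, while the root remnant witnesses the defining condition of Definition~\ref{def:CVD-type-(X)}; hence $Sol(D)$ is an $X$-\textsc{CVD} set.

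I expect the main obstacle to be precisely the isolation argument of the last paragraph. The definitions of $S_2$ and $S_3$ make the internal \textsc{CVD} structure of each subtree automatic, but ruling out an accidental merge of the surviving root clique with a surviving clique of a child $X_i$ having $B_i\notin D$ is exactly the step that needs both the complete-bipartite description $bd(X,X_i)\times bd(X_i,X)$ and the vertex-cover property of $D$ at the same time; everything else reduces to the fact that distinct subtrees interact only through $X$.
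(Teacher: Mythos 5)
Your proof is correct and rests on the same two ingredients as the paper's: the complete-bipartite structure of the boundary edges between $X$ and a child $X_i$ (so that $uB_i\in E(\mathcal{H})$ exactly when $u\in bd(X,X_i)$) combined with the vertex-cover property of $D$ to sever every surviving root-to-child adjacency, and the $OPT$ sets from Equations~\ref{eq:sol_2} and~\ref{eq:sol_3} to handle each subtree internally. The only difference is presentational: you verify the component structure of $G-Sol(D)$ directly, obtaining the slightly stronger conclusion that $X-D$ is itself an isolated cluster, whereas the paper reaches the same facts by contradiction via an induced path on three vertices and a surviving $(X,Y)$-edge.
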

\begin{proof}
Suppose for the sake of contradiction that $Sol(D)$ is not an $X$-CVD set of $G$. First assume $Sol(D)$ is not a CVD set of $G$. Then there exists an induced path $P=uvw$ in $G-Sol(D)$. Consider the following cases.
\begin{enumerate}
    \item $X\cap \{u,v,w\}=\emptyset$. Then there must exist a child $Y$ of $X$ such that $u,v,w$ are vertices of $G_{Y}$. If $B=bd(Y,X)\in D$, then by Equations~\ref{eq:sol_2} and~\ref{eq:sol}, $Sol(D)$ contains $B\cup OPT(G_Y-B)$. But then $B\cup OPT(G_Y-B)$ is not a CVD set of $G_Y$, a contradiction. If $B\not\in D$, then by Equations~\ref{eq:sol_3} and~\ref{eq:sol}, $Sol(D)$ contains $OPT(G_Y)$.  But then $OPT(G_Y)$ is not a CVD set of $G_Y$, also a contradiction.
    
    \item Otherwise, there always exists two adjacent vertices $z_1,z_2$ such that $\{z_1,z_2\}\subset \{u,v,w\}$ and $z_1\in X$ and $z_2\in Y$, where $Y$ is a child of $X$. Observe that $z_2\in B=bd(Y,X)$ and therefore $z_1$ is adjacent to $B$ in $\mathcal{H}$. Since $\{z_1,z_2\} \cap Sol(D) = \emptyset$, $\mathcal{H}-D$ contains the edge $z_1 B$, contradicting the fact that $D$ is a vertex cover of $\mathcal{H}$.
\end{enumerate}
Now assume that $Sol(D)$ is a CVD set but not an $X$-CVD set. Then there must exists a cluster $C$ in $G-Sol(D)$ that contains an $(X,Y)$-edge $uv$ where $u\in X$ and $v\in bd(Y,X)$. Therefore $u\not\in D$ and $B=bd(Y,X)\not\in D$. Then $\mathcal{H}-D$ contains the edge $u B$, contradicting the fact that $D$ is a vertex cover of $\mathcal{H}$. 
\end{proof}


A minimum weighted vertex cover $D$ of $\mathcal{H}$ is also \emph{minimal} if no proper subset of $D$ is a vertex cover of $\mathcal{H}$. The restriction of minimality is to avoid the inclusion of redundant vertices with weight $0$ in the minimum vertex cover.

\begin{observation}\label{obs:vertexcover}
Let $D$ be a minimal minimum weighted vertex cover of $\mathcal{H}$. For any $i\in [t]$, either $bd(X,X_i)\subseteq D$ or $bd(X_i,X) \in  D$, but not both.
\end{observation}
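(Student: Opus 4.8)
The plan is to first pin down the neighbourhood of each $\mathcal{B}$-vertex $B = bd(X_i,X)$ inside $\mathcal{H}$. By part (b) of Definition~\ref{def:well-partitioned}, the edges of $G$ between $X$ and $X_i$ form the complete bipartite set $bd(X,X_i)\times bd(X_i,X)$. Hence a vertex $u\in X$ is adjacent in $G$ to \emph{every} vertex of $B = bd(X_i,X)$ precisely when $u\in bd(X,X_i)$ (here $B\neq\emptyset$, since $XX_i$ is a tree edge). Comparing this with the edge rule of Equation~\ref{eq:construct-1}, I get exactly that the set of $\mathcal{H}$-neighbours of $B$ equals $bd(X,X_i)$. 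This identity is the engine for both halves of the statement, so I would state and prove it first as a standalone fact.

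Next I would establish that \emph{at least} one of the two options holds for an arbitrary vertex cover $D$ of $\mathcal{H}$ (not yet using minimality). If $B\notin D$, then since $D$ covers every edge $uB$ with $u$ a neighbour of $B$, each such $u$ must lie in $D$; by the neighbourhood identity this says precisely that $bd(X,X_i)\subseteq D$. Therefore $B\in D$ or $bd(X,X_i)\subseteq D$, i.e.\ the two options are jointly exhaustive for any vertex cover.

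It remains to rule out that \emph{both} hold simultaneously, and this is where minimality enters. Suppose $B\in D$ and also $bd(X,X_i)\subseteq D$, and consider $D' = D\setminus\{B\}$. Every edge of $\mathcal{H}$ incident to $B$ has its other endpoint among the neighbours of $B$, which by the identity form $bd(X,X_i)\subseteq D'$; so those edges remain covered, while all other edges were already covered by an endpoint different from $B$. Thus $D'$ is a vertex cover of $\mathcal{H}$ that is a proper subset of $D$, contradicting the minimality of $D$. Combining the two directions gives that exactly one of the options holds.

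The argument is essentially a clean case analysis once the neighbourhood identity is in hand, so I do not anticipate a serious obstacle; the only point requiring care is the degenerate possibility of an empty boundary. Since $XX_i\in E(\mathcal{T})$ forces both $bd(X,X_i)$ and $bd(X_i,X)$ to be non-empty, the identity holds as stated, and no vertex cover can leave an edge incident to $B$ uncovered by default. I would therefore flag this non-emptiness as the one premise to record explicitly before running the case analysis.
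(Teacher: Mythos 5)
Your proposal is correct and follows essentially the same route as the paper: both arguments hinge on the identity that the $\mathcal{H}$-neighbourhood of $B=bd(X_i,X)$ is exactly $bd(X,X_i)$, derive the ``at least one'' direction from the covering condition on the edges incident to $B$, and derive the ``not both'' direction by deleting $B$ from $D$ and contradicting minimality. Your version is slightly more careful in justifying the neighbourhood identity via Definition~\ref{def:well-partitioned}(b) and in noting that tree edges force non-empty boundaries, but the substance is identical.
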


\begin{proof}
First assume $bd(X,X_i)\not\subseteq D$ and $B=bd(X_i,X)\not\in D$. Observe that, the neighbourhood of $B$ in $\mathcal{H}$ is $bd(X,X_i)$. Since $bd(X,X_i)\not\subseteq D$, there must exists a vertex $u\in (bd(x,X_i)-D) \subseteq X-D$. Then it follows that $uB$ is an edge of $\mathcal{H}-D$. This contradicts the fact that $D$ is a vertex cover of $\mathcal{H}$.\\
Now assume that both $bd(X,X_i)\subseteq D$ and $B=bd(X_i,X)\in D$. Since $\{x: xB \in E(\mathcal{H}\} = bd(X,X_i)$ the set $D-\{B\}$ is also a vertex cover of $\mathcal{H}$, a contradiction. 
\end{proof}
From now on $D$ denotes a minimal minimum weighted vertex cover of $\mathcal{H}$ and $Z$ denotes a fixed but arbitrary $X$-CVD set of $G$. Our goal is to show that $\left|Sol(D)\right|\leq \left|Z\right|$. We need some more notations and observations.

First we define four sets $I_1,I_2,I_3,I_4$ as follows. (Recall that $X_1,X_2,\ldots, X_t$ are children of the root $X$ of the partition tree $\mathcal{T}$ of $G$.) 
 \begin{eqnarray}
 \label{eq:I1}
     I_1 &=& \{i\in [t]\colon bd(X,X_i) \subseteq Sol(D) \text{ and }  bd(X, X_i) \subseteq Z  \} \\
     \label{eq:I2}
     I_2 &=& \{i\in [t]\colon bd(X,X_i) \subseteq Sol(D) \text{ and }   bd(X, X_i) \not\subseteq Z  \} \\
     \label{eq:I3}
    I_3 & = & \{i\in [t] - (I_1 \cup I_2)\colon bd(X_i,X) \subseteq Sol(D) \text{ and } bd(X_i,X) \subseteq Z \} \\
    \label{eq:I4}
     I_4 & = & \{i\in [t] - (I_1 \cup I_2)\colon bd(X_i,X) \subseteq Sol(D) \text{ and } bd(X_i,X) \not \subseteq Z \} 
 \end{eqnarray}
Note that $I_1 \cup I_2 \cup I_3 \cup I_4 =[t]$ and $(I_1 \cup I_2) \cap (I_3 \cup I_4) =\emptyset$. We have the following observations on the sets $I_i, 1 \leq i \leq 4$. 
  \begin{observation}\label{obs:partition}
 The sets $I_1,I_2,I_3,I_4$ form a partition of $[t]$.
 \end{observation}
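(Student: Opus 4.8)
The plan is to show that $I_1,I_2,I_3,I_4$ partition $[t]$ by verifying two things: that every index $i\in[t]$ lands in at least one of the four sets (coverage), and that no index lands in two of them (disjointness). The key structural fact I would invoke is Observation~\ref{obs:vertexcover}: since $D$ is a minimal minimum weighted vertex cover of $\mathcal{H}$, for each $i\in[t]$ exactly one of $bd(X,X_i)\subseteq D$ or $bd(X_i,X)\in D$ holds. I would first translate this dichotomy into a statement about $Sol(D)$, because the definitions of $I_1,\dots,I_4$ are phrased in terms of containment in $Sol(D)$ rather than membership in $D$.

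First I would establish the coverage direction. Fix $i\in[t]$ and apply Observation~\ref{obs:vertexcover}. In the first case $bd(X,X_i)\subseteq D$, so $bd(X,X_i)\subseteq D\cap X = S_1(D)\subseteq Sol(D)$ by Equations~\ref{eq:sol_1} and~\ref{eq:sol}; depending on whether $bd(X,X_i)\subseteq Z$ or not, the index $i$ falls into $I_1$ or $I_2$. In the second case $B=bd(X_i,X)\in D$, so by Equations~\ref{eq:sol_2} and~\ref{eq:sol} we have $B\subseteq Sol(D)$, i.e.\ $bd(X_i,X)\subseteq Sol(D)$; if $i\notin I_1\cup I_2$ then depending on whether $bd(X_i,X)\subseteq Z$, the index $i$ falls into $I_3$ or $I_4$. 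This shows $I_1\cup I_2\cup I_3\cup I_4=[t]$.

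For disjointness, the pair $(I_1,I_2)$ is internally disjoint because the condition $bd(X,X_i)\subseteq Z$ distinguishes them, and likewise $(I_3,I_4)$ is internally disjoint via $bd(X_i,X)\subseteq Z$. The definitions of $I_3$ and $I_4$ explicitly exclude $I_1\cup I_2$, so $(I_1\cup I_2)\cap(I_3\cup I_4)=\emptyset$ holds by construction. The only subtle point is confirming that an index cannot simultaneously satisfy the $I_1\cup I_2$ defining condition ($bd(X,X_i)\subseteq Sol(D)$) and the $I_3\cup I_4$ condition in a way that breaks the bookkeeping, but since $I_3,I_4$ are defined on the restricted index set $[t]-(I_1\cup I_2)$, any such overlap is ruled out by fiat; the content of the observation is really that these restricted sets still exhaust the remaining indices, which is exactly what Observation~\ref{obs:vertexcover} guarantees.

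The main obstacle I anticipate is not in the disjointness — that is essentially definitional — but in cleanly deducing the containments $bd(X,X_i)\subseteq Sol(D)$ and $bd(X_i,X)\subseteq Sol(D)$ from the membership facts $bd(X,X_i)\subseteq D$ and $bd(X_i,X)\in D$ coming out of Observation~\ref{obs:vertexcover}. One must be careful that $bd(X,X_i)\subseteq X$, so its containment in $D$ really does feed into $S_1(D)=D\cap X$, and that the union defining $S_2(D)$ genuinely contributes the whole set $B=bd(X_i,X)$ and not merely part of it. Once these two implications are recorded, the partition claim follows immediately, so I would present the proof as a short case analysis keyed to Observation~\ref{obs:vertexcover}.
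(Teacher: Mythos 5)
Your proof is correct and follows essentially the same route as the paper's: both arguments reduce the claim to the dichotomy of Observation~\ref{obs:vertexcover} and translate $D$-membership into $Sol(D)$-containment via $Sol(D)\cap X=D\cap X$ and Equation~\ref{eq:sol_2}, with disjointness being definitional. The paper phrases the coverage step contrapositively (starting from $i\notin I_1\cup I_2$) while you argue forward by cases, but the content is identical.
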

   \begin{proof}

  From the definition of $I_i, 1 \leq i \leq 4$, it is clear that $I_i \cap I_j = \emptyset , i \neq j$. Assume that there exists an $i \in [t]$ such that $ i \notin I_1\cup I_2$. Hence, $bd(X,X_i) \not \subseteq  Sol(D) \cap X = D \cap X$.  Then by Observation \ref{obs:vertexcover}, $bd(X_i,X) \in D$ and by equation \ref{eq:sol_2}  the set of vertices $bd(X_i,X) \subseteq Sol(D)$. Therefore each $i \in [t] - (I_1\cup I_2)$ either belongs to the set $I_3$ or $I_4$.
  \end{proof}
\begin{observation}\label{obs:sets_I}
 Let $D$ be a vertex cover of $\mathcal{H}$ and $Sol(D)$ be an $X$-CVD set of $G$ defined as in equation \ref{eq:sol}. For the sets $I_i, 1 \leq i \leq 4$ defined by the Equations \ref{eq:I1} - \ref{eq:I4}, the following holds.\\

(i) $\bigcup\limits_{i \in I_1 \cup I_2} bd(X,X_i) = S_1(D)$ \\

 (ii) $\bigcup\limits_{i \in I_3 \cup I_4} bd(X_i,X) \cup OPT(G_{X_i} - bd(X_i,X) = S_2(D)$\\

 (iii) $\bigcup\limits_{i \in I_1 \cup I_2} OPT(G_{X_i}) = S_3(D)$.
 \end{observation}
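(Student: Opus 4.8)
The plan is to prove each of the three equalities in Observation~\ref{obs:sets_I} directly from the definitions, using Observation~\ref{obs:partition} to control which index lands in which of the four sets $I_1,\dots,I_4$. The key structural fact I would rely on is the dichotomy established in Observation~\ref{obs:vertexcover}: for each child $X_i$, a minimal minimum weighted vertex cover $D$ satisfies \emph{exactly} one of $bd(X,X_i)\subseteq D$ or $bd(X_i,X)\in D$. Since $S_1(D)=D\cap X$ and the vertices of $X$ appearing in $D$ are precisely those in the boundaries $bd(X,X_i)$, the grouping of indices by whether $bd(X,X_i)\subseteq D$ (i.e. $i\in I_1\cup I_2$) versus $bd(X_i,X)\in D$ (i.e. $i\in I_3\cup I_4$) is exactly what is needed.

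For part~(i), I would argue that $S_1(D)=D\cap X = \bigcup_{i} (bd(X,X_i)\cap D)$, and then observe that by Observation~\ref{obs:vertexcover} the boundary $bd(X,X_i)$ is either entirely in $D$ (contributing fully) or disjoint from the cover on the $X$ side (contributing nothing, since the alternative places $bd(X_i,X)$ in $D$, which lives in $X_i$, not $X$). Because $I_1\cup I_2$ is by definition exactly the set of indices with $bd(X,X_i)\subseteq Sol(D)\cap X = D\cap X$, the union over $i\in I_1\cup I_2$ of $bd(X,X_i)$ recovers all of $D\cap X$, giving $S_1(D)$. Here I should make sure nothing in $X$ outside the boundaries can appear in $D$; this follows because an isolated vertex of $\mathcal{H}$ (one in $X$ adjacent to no $B$) would be redundant in a minimal cover, so minimality forbids it.

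For parts~(ii) and~(iii), the engine is Observation~\ref{obs:partition}, which tells me $I_3\cup I_4 = [t]-(I_1\cup I_2)$ consists precisely of the indices for which $bd(X_i,X)\in D$. Comparing with the summation index in Equation~\ref{eq:sol_2}, which ranges over $B=bd(X_i,X)\in D\cap\mathcal{B}$, I see these index sets coincide, so the defining union for $S_2(D)$ is exactly $\bigcup_{i\in I_3\cup I_4} bd(X_i,X)\cup OPT(G_{X_i}-bd(X_i,X))$, proving~(ii). Symmetrically, Equation~\ref{eq:sol_3} ranges over $B\in\mathcal{B}-D$, i.e. $bd(X_i,X)\notin D$, which by the dichotomy is exactly $i\in I_1\cup I_2$; hence $S_3(D)=\bigcup_{i\in I_1\cup I_2} OPT(G_{X_i})$, proving~(iii).

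The main obstacle, and the step I would treat most carefully, is the bookkeeping in part~(i): I must verify that the membership condition ``$bd(X,X_i)\subseteq Sol(D)$'' appearing in the definitions of $I_1$ and $I_2$ is genuinely equivalent to ``$bd(X,X_i)\subseteq D\cap X$,'' and that no index is double-counted or dropped when passing between the boundary-indexed union and the raw set $D\cap X$. This hinges on the facts that $Sol(D)\cap X = S_1(D)=D\cap X$ (since $S_2$ and $S_3$ contribute only vertices from the $G_{X_i}$, disjoint from $X$), and on Observation~\ref{obs:vertexcover} ruling out the simultaneous or simultaneous-absence cases. Once these equivalences are pinned down, parts~(ii) and~(iii) are essentially a relabelling of the summation ranges in Equations~\ref{eq:sol_2} and~\ref{eq:sol_3} and should follow with little further work.
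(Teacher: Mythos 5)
Your overall strategy coincides with the paper's: both prove (i) by a two-sided inclusion using $S_1(D)=D\cap X=Sol(D)\cap X$, and both reduce (ii) and (iii) to the fact that Observation~\ref{obs:vertexcover} together with Observation~\ref{obs:partition} identifies $I_1\cup I_2$ with $\{i\colon bd(X_i,X)\notin D\}$ and $I_3\cup I_4$ with $\{i\colon bd(X_i,X)\in D\}$, which are exactly the index ranges in Equations~\ref{eq:sol_3} and~\ref{eq:sol_2}. Parts (ii) and (iii) of your argument are fine.

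There is, however, a genuine gap in your part (i). You assert that each $bd(X,X_i)$ is ``either entirely in $D$ or disjoint from the cover on the $X$ side,'' and you attribute this dichotomy to Observation~\ref{obs:vertexcover}. That observation only says that either $bd(X,X_i)\subseteq D$ or $bd(X_i,X)\in D$ (and not both); it does not say that in the second case $bd(X,X_i)\cap D=\emptyset$. A vertex $v\in bd(X,X_i)\cap D$ can perfectly well exist with $bd(X,X_i)\not\subseteq D$, for instance when $v$ also lies in $bd(X,X_j)$ for some other child $X_j$ with $bd(X,X_j)\subseteq D$. So the dichotomy you invoke is false as stated, and the inclusion $D\cap X\subseteq\bigcup_{i\in I_1\cup I_2}bd(X,X_i)$ does not follow from it. The case that actually needs an argument is a vertex $v\in D\cap X$ such that \emph{no} boundary containing $v$ is wholly contained in $D$. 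The paper handles this via minimality of $D$: for every $j$ with $v\in bd(X,X_j)$ one then has $bd(X,X_j)\not\subseteq D$, hence $bd(X_j,X)\in D$ by Observation~\ref{obs:vertexcover}, so every edge of $\mathcal{H}$ incident to $v$ is already covered on the $\mathcal{B}$ side and $D-\{v\}$ is still a vertex cover, contradicting minimality. You invoke minimality only to exclude isolated vertices of $\mathcal{H}$ from $D$; the same redundancy argument must also be applied to these non-isolated vertices. With that repair your proof matches the paper's.
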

 \begin{proof}
First note that $S_1(D) = D \cap X = Sol(D) \cap X$ (by definition of $Sol(D)$). On the other hand, by definition of $I_1$ and $I_2$ we have $\bigcup\limits_{i \in I_1 \cup I_2} bd(X,X_i) \subseteq Sol(D)$. Moreover, $\bigcup\limits_{i \in I_1 \cup I_2} bd(X,X_i) \subseteq X$. Therefore, $\bigcup\limits_{i \in I_1 \cup I_2} bd(X,X_i) \subseteq Sol(D)\cap X = S_1(D)$.\\
Now to prove the other side, $S_1(D) \subseteq \bigcup\limits_{i \in I_1 \cup I_2} bd(X,X_i)$,
suppose for the sake of contradiction that there exists a vertex $v \in S_1(D) - \bigcup\limits_{i \in I_1 \cup I_2} bd(X,X_i)$. Let $J = \{j : v \in bd(X,X_j)\}$. Since $J \cap (I_1 \cup I_2 ) =\emptyset$, by definition of $I_1$ and $I_2$, for each $j \in J, bd(X,X_j) \not \subseteq Sol(D) \cap X = D \cap X$. Hence by Observation \ref{obs:vertexcover}, $bd(X_j, X) \in D, \forall$. Therefore, $D-\{v\}$ is also a vertex cover of $\mathcal{H}$,  contradicting the minimality of $D$.\\

By Observation \ref{obs:partition}, $I_3 \cup I_4 = [t] - I_1 \cup I_2$. Moreover, by the definition of $I_1$ and $I_2$, for each $i \in [t] - I_1 \cup I_2$ the set $bd(X, X_i) \not\subseteq Sol(D) \cap X = D \cap X$. Hence by  Observation \ref{obs:vertexcover}, we have $bd(X_i, X) \in D$  for each $i \in I_3 \cup I_4$ and $bd(X_i, X) \notin D, i \in I_1 \cup I_2$. Thus it follows from Observation \ref{obs:partition} and the definition of $S_2$ and $S_3$ that $\bigcup\limits_{i \in I_3 \cup I_4} bd(X_i,X) \cup OPT(G_{X_i} - bd(X_i,X)) = S_2(D)$ and $\bigcup\limits_{i \in I_1 \cup I_2} OPT(G_{X_i}) = S_3(D)$.\\


\end{proof}
Based on the set $I_1$, we construct two sets $D_1$ and $Z_1$ from $Sol(D)$ and $Z$, respectively, which are defined as follows.

\begin{eqnarray}
    D_1 &=& \bigcup\limits_{i \in I_1} bd(X,X_i) \cup (Sol(D)\cap G_{X_i})\\
     Z_1 &=&  \bigcup\limits_{i \in I_1} bd(X,X_i) \cup (Z \cap G_{X_i})
 \end{eqnarray} 

\begin{observation}\label{Obs:D1}
$  \left |D_1 \right| \leq \left |Z_1 \right |  $.
\end{observation}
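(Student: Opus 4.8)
The plan is to reduce the claimed inequality to a collection of per-child comparisons and then invoke optimality of the stored subsolutions. First I would record the structural fact underlying everything: the set $A := \bigcup_{i \in I_1} bd(X,X_i)$ is contained in $X$, and hence is vertex-disjoint from every $G_{X_i}$ (the subtree $\mathcal{T}_{X_i}$ does not contain the root $X$). Moreover, for distinct children the vertex sets of the $G_{X_i}$ are pairwise disjoint. Since $D_1 = A \cup \bigcup_{i\in I_1}(Sol(D)\cap G_{X_i})$ and $Z_1 = A \cup \bigcup_{i\in I_1}(Z\cap G_{X_i})$ with all the displayed pieces pairwise disjoint, I get $|D_1| = |A| + \sum_{i\in I_1}|Sol(D)\cap G_{X_i}|$ and $|Z_1| = |A| + \sum_{i\in I_1}|Z\cap G_{X_i}|$. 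Thus it suffices to prove, for each $i\in I_1$, the local inequality $|Sol(D)\cap G_{X_i}| \leq |Z\cap G_{X_i}|$.

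The crux is then to identify $Sol(D)\cap G_{X_i}$ exactly. For $i\in I_1$ we have $bd(X,X_i)\subseteq Sol(D)\cap X = S_1(D) = D\cap X$, so $bd(X,X_i)\subseteq D$; by Observation~\ref{obs:vertexcover} this forces $bd(X_i,X)\notin D$, i.e.\ $X_i$ is a child handled by the $S_3$-branch. I would then peel $Sol(D)=S_1(D)\cup S_2(D)\cup S_3(D)$ against $G_{X_i}$: the set $S_1(D)\subseteq X$ contributes nothing, $S_2(D)$ only uses the $G_{X_j}$ with $bd(X_j,X)\in D$ and these are disjoint from $G_{X_i}$, and by Equation~\ref{eq:sol_3} the term $OPT(G_{X_i})$ lies in $S_3(D)$ while every other summand of $S_3(D)$ sits in a different, disjoint $G_{X_j}$. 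Combining, $Sol(D)\cap G_{X_i} = OPT(G_{X_i})$.

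Finally I would close the loop with optimality. Because $Z$ is an $X$-CVD set of $G$ it is in particular a CVD set, so $G-Z$ is a disjoint union of cliques; restricting to the induced subgraph $G_{X_i}$ shows that $Z\cap V(G_{X_i})$ is a CVD set of $G_{X_i}$. By definition $OPT(G_{X_i})$ is a minimum CVD set of $G_{X_i}$, so $|OPT(G_{X_i})| \leq |Z\cap G_{X_i}|$, which is exactly the local inequality needed. Summing over $i\in I_1$ and adding $|A|$ gives $|D_1|\leq|Z_1|$.

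I expect the only genuinely delicate step to be the exact identification $Sol(D)\cap G_{X_i}=OPT(G_{X_i})$: one must carefully use the pairwise disjointness of the $G_{X_j}$ together with Observation~\ref{obs:vertexcover} to rule out stray contributions from $S_2$ (and from other summands of $S_3$), so that the intersection collapses to the single term $OPT(G_{X_i})$ rather than merely containing or being contained in it. Everything after that is a one-line appeal to optimality of $OPT(G_{X_i})$.
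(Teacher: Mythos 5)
Your proposal is correct and follows essentially the same route as the paper's own proof: both arguments hinge on showing $Sol(D)\cap G_{X_i}=OPT(G_{X_i})$ for $i\in I_1$ via Observation~\ref{obs:vertexcover} and Equation~\ref{eq:sol_3}, and then combine pairwise disjointness of the $G_{X_i}$ with the optimality bound $|OPT(G_{X_i})|\leq |Z\cap G_{X_i}|$. Your write-up merely makes explicit the disjoint decomposition and the elimination of stray contributions from $S_1(D)$ and $S_2(D)$, which the paper leaves implicit.
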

 
 \begin{proof}
 From the definition of $Sol(D)$ and equation \ref{eq:sol_1}, for $i \in I_1$, we infer that $bd(X,X_i) \subseteq D$. Hence, by Observation \ref{obs:vertexcover},  $bd(X_i,X) \not \in D,  i \in I_1$ and from equation \ref{eq:sol_3}, $Sol(D) \cap G_{X_i} = OPT(G_{X_i})$. 
 Since for each $i, j \in I_1$, $G_{X_i} \cap G_{X_j} = \emptyset$ and $|Z \cap G_{X_i}| \geq |OPT(G_{X_i})|$, by the definitions of $D_1$ and $Z_1$ we have $|D_1| \leq |Z_1|$. 
 \end{proof}
 
Based on the set $I_2$, we construct the following two sets $D_2 \subseteq Sol(D)$ and $Z_2 \subseteq Z$.

\begin{eqnarray}
     D_2 &=& \bigcup\limits_{i \in I_2} bd(X,X_i) \cup (Sol(D)\cap G_{X_i})  - \bigcup\limits_{i \in I_1} bd(X,X_i)\\
     \label{eq:Z2}
     Z_2 &=& \bigcup\limits_{i \in I_2} bd(X_i,X) \cup  (Z \cap (G_{X_i} - bd(X_i,X)))     
 \end{eqnarray} 
 By the definition of the set $I_2$, the set of vertices $bd(X,X_i) \not\subseteq Z, i \in I_2$. By Lemma \ref{lem:sol-cover}, recall that there exits a vertex cover, $Cov(Z)$ of $\mathcal{H}$ corresponding to every $X$-CVD-set $Z$. Since $bd(X,X_i) \not\subseteq Z$ and thus  $bd(X,X_i) \not\subseteq Cov(Z)$, it is implicit in Observation \ref{obs:vertexcover} that $bd(X_i,X) \in Cov(Z)$. Hence $bd(X_i,X) \subseteq Z$ and the set $Z_2 \subseteq Z$.
 \begin{observation}\label{Obs:D2}
 $|D_2| \leq |Z_2|$.
 \end{observation}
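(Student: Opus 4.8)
The plan is to split $D_2$ and $Z_2$ into disjoint pieces, reduce the claim to a single inequality about weights, and then prove that inequality by an exchange argument exploiting that $D$ is a \emph{minimum} weighted vertex cover of $\mathcal{H}$.

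First I would pin down the exact sizes of $D_2$ and $Z_2$. For every $i\in I_2$ we have $bd(X,X_i)\subseteq Sol(D)\cap X=D\cap X$, so Observation~\ref{obs:vertexcover} gives $bd(X_i,X)\notin D$, and hence Equation~\ref{eq:sol_3} yields $Sol(D)\cap G_{X_i}=OPT(G_{X_i})$. Setting $W=\bigcup_{i\in I_2}bd(X,X_i)-\bigcup_{i\in I_1}bd(X,X_i)$, the part $W$ lies in $X$ while each $OPT(G_{X_i})$ lies in the pairwise disjoint subtrees $G_{X_i}$, so these pieces are disjoint and
\[
|D_2| = |W| + \sum_{i\in I_2}|OPT(G_{X_i})|.
\]
The same disjointness ($bd(X_i,X)\subseteq X_i$ versus $Z\cap(G_{X_i}-bd(X_i,X))$, across disjoint subtrees) gives $|Z_2|=\sum_{i\in I_2}\left(|bd(X_i,X)|+|Z\cap(G_{X_i}-bd(X_i,X))|\right)$.

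Next I would lower-bound $|Z_2|$. Since $Z$ is a CVD set of $G$, the graph $(G_{X_i}-bd(X_i,X))-Z$ is an induced subgraph of the disjoint union of cliques $G-Z$, hence itself such a union; therefore $Z\cap(G_{X_i}-bd(X_i,X))$ is a CVD set of $G_{X_i}-bd(X_i,X)$ and has size at least $|OPT(G_{X_i}-bd(X_i,X))|$. Substituting the definition of $w$ from Equation~\ref{eq:weight}, i.e. $|bd(X_i,X)|+|OPT(G_{X_i}-bd(X_i,X))|=w(bd(X_i,X))+|OPT(G_{X_i})|$, I obtain
\[
|Z_2| \geq \sum_{i\in I_2}\left(w(bd(X_i,X))+|OPT(G_{X_i})|\right).
\]
Comparing this with the formula for $|D_2|$, the whole claim collapses to the single inequality $|W|\leq \sum_{i\in I_2}w(bd(X_i,X))$.

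The hard part will be this last inequality, and I would prove it by an exchange argument invoking minimality of $D$. Consider $D^{*}=(D-W)\cup\{bd(X_i,X)\colon i\in I_2\}$. I would first verify that $D^{*}$ is still a vertex cover of $\mathcal{H}$: the only edges that deleting $W$ can newly expose are edges $uB$ with $u\in W$ and $B=bd(X_j,X)\notin D$; but $B\notin D$ forces $j\in I_1\cup I_2$ by Observation~\ref{obs:vertexcover}, while $u\in W$ forbids $u\in bd(X,X_j)$ for $j\in I_1$, so necessarily $j\in I_2$, and such $B$ are exactly the sets added back. Since $W\subseteq X$ contributes weight $|W|$ and the added sets $bd(X_i,X)$ ($i\in I_2$) are distinct vertices of $\mathcal{H}$ not already present in $D$, we get $w(D^{*})=w(D)-|W|+\sum_{i\in I_2}w(bd(X_i,X))$. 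Minimality of $D$ gives $w(D)\leq w(D^{*})$, which rearranges to the desired $|W|\leq\sum_{i\in I_2}w(bd(X_i,X))$, completing the proof. The only delicate point is the vertex-cover verification for $D^{*}$, where one must use Observation~\ref{obs:vertexcover} carefully to see that removing $W$ endangers only edges whose $\mathcal{B}$-endpoint is re-added.
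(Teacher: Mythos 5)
Your proof is correct and follows essentially the same route as the paper's: both reduce the claim, via the disjoint decomposition of $D_2$ and $Z_2$ and the bound $|Z\cap(G_{X_i}-bd(X_i,X))|\geq |OPT(G_{X_i}-bd(X_i,X))|$, to the inequality $|W|\leq\sum_{i\in I_2}w(bd(X_i,X))$, and both establish that inequality by exchanging $W$ for $\{bd(X_i,X)\colon i\in I_2\}$ in the minimum vertex cover $D$ and checking the result is still a vertex cover. The only difference is presentational (you argue directly where the paper argues by contradiction), and your verification that the exchanged set covers $\mathcal{H}$ is if anything slightly cleaner than the paper's.
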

  
 \begin{proof}
By arguments similar to that in the proof of Observation \ref{Obs:D1}, for $i \in I_2, (Sol(D)\cap G_{X_i}) = OPT(G_{X_i})$. Hence, $ D_2 = \bigcup\limits_{i \in I_2} bd(X,X_i) \cup OPT(G_{X_i})  - \bigcup\limits_{i \in I_1} bd(X,X_i)$.
 Suppose for contradiction that $|D_2| > |Z_2|$. Then by the definitions of $D_2$ and $Z_2$ we have
\begin{multline*}
   \left |\bigcup\limits_{i \in I_2}bd(X,X_i) \cup OPT(G_{X_i})  - \bigcup\limits_{i \in I_1} bd(X,X_i)\right | >  \left |\bigcup\limits_{i \in I_2}bd(X_i,X) \cup  (Z \cap (G_{X_i} - bd(X_i,X))) \right  |
    \end{multline*}
    Since $X \cap G_{X_i} = \emptyset, 1 \leq i \leq t$ and $\left |Z \cap (G_{X_i} - bd(X_i,X)) \right | \geq OPT(G_{X_i} - bd(X_i,X))$, we can rewrite the above inequality as follows.
\begin{multline*}
   \left |\bigcup\limits_{i \in I_2}bd(X,X_i)  - \bigcup\limits_{i \in I_1} bd(X,X_i) \right | > \left |\bigcup\limits_{i \in I_2}bd(X_i,X) \right | + \\ \left |\bigcup\limits_{i \in I_2}OPT(G_{X_i} - bd(X_i,X)) \right | - \left |\bigcup\limits_{i \in I_2}OPT(G_{X_i}) \right |
    \end{multline*}
  That is,
  \begin{multline*}
    \left |\bigcup\limits_{i \in I_2}bd(X,X_i)  - \bigcup\limits_{i \in I_1} bd(X,X_i) \right | > \sum\limits_{i \in I_2}(\left |bd(X_i,X)\right | + \left |OPT(G_{X_i} - bd(X_i,X) \right | - \left |OPT(G_{X_i} ) \right |)
    \end{multline*}
    
    
    By equation \ref{eq:weight}, $\left |bd(X_i,X) \right | + \left |OPT(G_{X_i} - bd(X_i,X)) \right | - \left |OPT(G_{X_i} ) \right | = w(bd(X_i,X)) $  and hence, \\
  
    \begin{align}\label{minimum_vertex}
       \left |\bigcup\limits_{i \in I_2}bd(X,X_i)  - \bigcup\limits_{i \in I_1} bd(X,X_i) \right | > \sum\limits_{i \in I_2}w(bd(X_i,X))
  \end{align}
Recall that $D$ is a minimal minimum weighted vertex cover of $\mathcal{H}$. By Observation \ref{obs:sets_I} we have $\bigcup\limits_{i \in I_2}bd(X,X_i) \subseteq D$ and hence for each $i \in I_2$, the vertex $B=bd(X_i,X) \notin D$ by Observation \ref{obs:vertexcover}. Now we show that if we delete the vertices in $\bigcup\limits_{i \in I_2}bd(X,X_i)  - \bigcup\limits_{i \in I_1} bd(X,X_i)$ from $D$ and add the set of vertices $\left \{ bd(X_i,X) \colon i \in I_2\right \}$ then we get a vertex cover of smaller weight for $\mathcal{H}$ by inequality (\ref{minimum_vertex}), a contradiction.  
\begin{claim}
Let $D_1$ be a set of vertices obtained from $D$ by deleting the vertices in $\bigcup\limits_{i \in I_2}bd(X,X_i)  - \bigcup\limits_{i \in I_1} bd(X,X_i)$ and by adding the set of vertices $\left \{ bd(X_i,X) \colon i \in I_2\right \}$. Then, $D_1$ is a vertex cover of $\mathcal{H}$.
\end{claim}
\begin{proof}[Proof of claim]
\renewcommand{\qedsymbol}{}
Assume that there exists an edge $uB \in E(\mathcal{H} - D_1)$ where $B = bd(X_j,X), j \in [t]$. Since $bd(X_j,X) \notin D_1$, by the definition of $D_1$ (given above ) observe that $bd(X_j,X) \notin D$ and $j \notin I_2$. Note that the neighbourhood of $bd(X_j,X)$ in $\mathcal{H}$  is $bd(X,X_j)$ and hence $u \in bd(X,X_j)$. Since $D$ is a vertex cover of $\mathcal{H}$, we have $bd(X,X_j) \subseteq D$.  Now we show that $j \notin I_1$: By definition of $D_1$ we have  $\bigcup\limits_{i \in I_1} bd(X,X_i) \cap D \subseteq D_1$. Since $u \in bd(X,X_j)$ and $bd(X,X_j) \subseteq D$, if $j \in I_1$ then the vertex $u$ remains in $D_1$. Thus no such edge $uB$ exists in $\mathcal{H} - D_1$. Therefore, we infer that  $j \notin I_1$. Since $j \notin I_1 \cup I_2$, from Observation \ref{obs:sets_I} we have $bd(X,X_j) \not \subseteq D \cap X$. Hence  there exists a vertex $w \in bd(X,X_j)$ such that $w \in \mathcal{H} - D$. Moreover, by the definition of partition tree $\mathcal{T}$ and $bd(X,X_j)$ the edge $wB \in E(\mathcal{H} - D)$. This contradicts the assumption that $D$ is a vertex cover of $\mathcal{H}$.  
\end{proof}
This completes the proof of the observation. 
\end{proof}
 
 Based on the set $I_3$, we construct the following two sets $D_3 \subseteq Sol(D)$ and $Z_3 \subseteq Z$.
 
 \begin{eqnarray}
      D_3 & = & \bigcup\limits_{i \in I_3} bd(X_i,X) \cup  OPT(G_{X_i} - bd(X_i,X)) \\
    Z_3 & = & \bigcup\limits_{i \in I_3}bd(X_i,X) \cup  (Z \cap (G_{X_i} - bd(X_i,X)))
 \end{eqnarray} 
 
  \begin{observation}
 $\left |D_3 \right | \leq \left |Z_3 \right |$.
 \end{observation}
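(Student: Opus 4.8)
The plan is to mirror the argument used in the proof of Observation~\ref{Obs:D1}, since the sets $D_3$ and $Z_3$ have an analogous structure: both share a common ``boundary'' part and differ only in a term contributed by each child subtree. Concretely, by the definition of $I_3$, for every $i\in I_3$ we have both $bd(X_i,X)\subseteq Sol(D)$ and $bd(X_i,X)\subseteq Z$, so $\bigcup_{i\in I_3} bd(X_i,X)$ is a common part of $D_3$ and $Z_3$. The reduction I would aim for is a term-by-term comparison: for each $i\in I_3$, show that $\left|bd(X_i,X)\cup OPT(G_{X_i}-bd(X_i,X))\right|\leq \left|bd(X_i,X)\cup (Z\cap (G_{X_i}-bd(X_i,X)))\right|$.

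First I would record the disjointness facts that make cardinalities additive. Since the subgraphs $G_{X_i}$ for distinct $i\in I_3$ lie in vertex-disjoint subtrees of $\mathcal{T}$, and $bd(X_i,X)\subseteq X_i\subseteq V(G_{X_i})$, all unions appearing in $D_3$ and $Z_3$ are pairwise disjoint across distinct indices $i$. Moreover, for a fixed $i$ the set $bd(X_i,X)$ is disjoint from $G_{X_i}-bd(X_i,X)$, so on each side of the desired inequality the two pieces contribute disjointly, and $\left|bd(X_i,X)\right|$ cancels from the comparison. Hence the per-index inequality reduces to showing $\left|OPT(G_{X_i}-bd(X_i,X))\right|\leq \left|Z\cap (G_{X_i}-bd(X_i,X))\right|$.

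The one genuine content step is to verify that $Z\cap (G_{X_i}-bd(X_i,X))$ is in fact a CVD set of the induced subgraph $G_{X_i}-bd(X_i,X)$. This follows because $Z$ is a CVD set of $G$, so $G-Z$ is a disjoint union of cliques, and any induced subgraph of a disjoint union of cliques is again a disjoint union of cliques; in particular $(G_{X_i}-bd(X_i,X))-Z$ is such a graph. By the minimality of $OPT(G_{X_i}-bd(X_i,X))$ as a smallest CVD set of that subgraph, we then obtain $\left|Z\cap (G_{X_i}-bd(X_i,X))\right|\geq \left|OPT(G_{X_i}-bd(X_i,X))\right|$.

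Summing the per-index inequalities over $i\in I_3$ and adding back the common contribution $\bigcup_{i\in I_3} bd(X_i,X)$ (which appears identically in both sets) yields $\left|D_3\right|\leq \left|Z_3\right|$. I do not anticipate a real obstacle here: the proof is essentially a verbatim analogue of Observation~\ref{Obs:D1} with $G_{X_i}$ replaced throughout by $G_{X_i}-bd(X_i,X)$. The only points requiring care are confirming that the restriction of $Z$ to the smaller graph remains a valid CVD set and that the cross-index vertex-disjointness holds, both of which are immediate from the definition of a CVD set and from property~(c) of the partition tree.
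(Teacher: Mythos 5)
Your proof is correct and follows essentially the same route as the paper's: the paper's own argument is a one-line application of the key inequality $\left|Z \cap (G_{X_i}- bd(X_i,X)) \right| \geq \left|OPT(G_{X_i} - bd(X_i,X))\right|$ together with the definitions of $D_3$ and $Z_3$. You simply make explicit the disjointness bookkeeping and the (standard) fact that the restriction of a CVD set to an induced subgraph is a CVD set of that subgraph, both of which the paper leaves implicit.
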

 
 \begin{proof}
 Since $\left |Z \cap (G_{X_i}- bd(X_i,X)) \right | \geq OPT(G_{X_i} - bd(X_i,X))$, by the definitions of $D_3$ and $Z_3$ we have $\left |D_3 \right | \leq \left |Z_3 \right |$.
 \end{proof}
 
 Based on the set $I_4$, we construct the following two sets $D_4 \subseteq Sol(D)$ and $Z_4 \subseteq Z$.
 
 \begin{eqnarray}
      D_4 & = & \bigcup\limits_{i \in I_4} bd(X_i,X) \cup  OPT(G_{X_i} - bd(X_i,X))  \\
      \label{eq:Z4}
    Z_4 & = & \bigcup\limits_{i \in I_4} bd(X,X_i) \cup (Z \cap (G_{X_i})) - \bigcup\limits_{i \in I_1} bd(X,X_i)
 \end{eqnarray} 
 By the definition of the set $I_4$, the set of vertices $bd(X_i,X) \not\subseteq Z, i \in I_4$. By Lemma \ref{lem:sol-cover}, recall that there exits a vertex cover, $Cov(Z)$ of $\mathcal{H}$ corresponding to every $X$-CVD-set $Z$. Since $bd(X_i,X) \not\subseteq Z,  i \in I_4$, by definition of $Cov(Z)$ we have $bd(X_i,X) \notin Cov(Z)$ and hence it is implicit in Observation \ref{obs:vertexcover} that $bd(X,X_i) \subseteq Cov(Z)$. Hence $bd(X,X_i) \subseteq Z$ and the set $Z_4 \subseteq Z$.
  \begin{observation}
 $\left |D_4 \right | \leq \left |Z_4 \right |$.
 \end{observation}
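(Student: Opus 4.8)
The plan is to follow the template of Observation~\ref{Obs:D2}, using the fact that $I_4$ is the mirror image of $I_2$: for $i\in I_2$ the cover $D$ keeps the $X$-side boundary $bd(X,X_i)$ while $Z$ keeps the child-side boundary $bd(X_i,X)$, whereas for $i\in I_4$ these roles are swapped. First I would put both cardinalities in closed form. Since the graphs $G_{X_i}$ are pairwise disjoint and disjoint from $X$, and for each $i$ the set $OPT(G_{X_i}-bd(X_i,X))$ is disjoint from $bd(X_i,X)$, I obtain
\[
|D_4| = \sum_{i\in I_4}\bigl(|bd(X_i,X)| + |OPT(G_{X_i}-bd(X_i,X))|\bigr).
\]
For $Z_4$ I would observe that the subtracted set $\bigcup_{i\in I_1}bd(X,X_i)$ lies inside $X$ and therefore meets none of the sets $Z\cap G_{X_i}$; hence the two halves of $Z_4$ are disjoint and
\[
|Z_4| = \Bigl|\bigcup_{i\in I_4}bd(X,X_i) - \bigcup_{i\in I_1}bd(X,X_i)\Bigr| + \sum_{i\in I_4}|Z\cap G_{X_i}|.
\]

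Next I would use Equation~\ref{eq:weight} to rewrite $|bd(X_i,X)| + |OPT(G_{X_i}-bd(X_i,X))| = w(bd(X_i,X)) + |OPT(G_{X_i})|$, so that $|D_4| = \sum_{i\in I_4} w(bd(X_i,X)) + \sum_{i\in I_4}|OPT(G_{X_i})|$. Because $Z$ is a CVD set, $Z\cap G_{X_i}$ is a CVD set of the induced subgraph $G_{X_i}$ and so $|Z\cap G_{X_i}|\ge |OPT(G_{X_i})|$. Thus the $OPT(G_{X_i})$ contributions cancel, and it suffices to prove the reduced inequality
\[
\sum_{i\in I_4} w(bd(X_i,X)) \;\le\; \Bigl|\bigcup_{i\in I_4}bd(X,X_i) - \bigcup_{i\in I_1}bd(X,X_i)\Bigr|.
\]

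I would establish this reduced inequality by contradiction, exactly as in the claim embedded in Observation~\ref{Obs:D2} but with the swap performed in the opposite direction. Assuming the inequality fails, I form $D' = \bigl(D - \{bd(X_i,X)\colon i\in I_4\}\bigr) \cup \bigcup_{i\in I_4}bd(X,X_i)$; note each removed vertex $bd(X_i,X)$ is indeed in $D$ because $i\notin I_1\cup I_2$ forces $bd(X_i,X)\in D$ by Observation~\ref{obs:vertexcover}. Since $D\cap X = \bigcup_{i\in I_1\cup I_2}bd(X,X_i)$ by Observation~\ref{obs:sets_I} and this contains $\bigcup_{i\in I_1}bd(X,X_i)$, the total weight of the genuinely new $X$-vertices is at most $\bigl|\bigcup_{i\in I_4}bd(X,X_i) - \bigcup_{i\in I_1}bd(X,X_i)\bigr|$, which by assumption is strictly smaller than the removed weight $\sum_{i\in I_4}w(bd(X_i,X))$; hence $w(D') < w(D)$.

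The only real content, as in Observation~\ref{Obs:D2}, is verifying that $D'$ is still a vertex cover of $\mathcal{H}$, which I expect to be the main obstacle. Given an edge $uB$ with $u\in X$ and $B=bd(X_j,X)$, I would split on whether $j\in I_4$: if $j\in I_4$, then $u$ lies in the $\mathcal{H}$-neighbourhood $bd(X,X_j)$ of $B$, and since all of $\bigcup_{i\in I_4}bd(X,X_i)$ was added, $u\in D'$; if $j\notin I_4$, then $B\in D'$ whenever $B\in D$, and otherwise $D$ being a cover forces $u\in D\cap X\subseteq D'$. Either way $uB$ is covered, so $D'$ is a vertex cover of strictly smaller weight, contradicting that $D$ is a minimum weighted vertex cover. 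This contradiction proves the reduced inequality and hence $|D_4|\le|Z_4|$. Beyond this vertex-cover check, the argument is pure bookkeeping; the chief risk is confusing the two boundary notations $bd(X,X_i)$ and $bd(X_i,X)$ and mishandling the set subtractions.
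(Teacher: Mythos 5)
Your proposal is correct and follows essentially the same route as the paper: both reduce $|D_4|\leq|Z_4|$ to the weight inequality $\sum_{i\in I_4}w(bd(X_i,X)) \leq \bigl|\bigcup_{i\in I_4}bd(X,X_i) - \bigcup_{i\in I_1}bd(X,X_i)\bigr|$ via Equation~\ref{eq:weight} and $|Z\cap G_{X_i}|\geq|OPT(G_{X_i})|$, and then prove it by swapping the nodes $\{bd(X_i,X)\colon i\in I_4\}$ for the corresponding $X$-side boundaries in $D$ to contradict the minimality of the weighted vertex cover. Your verification that the modified set is still a vertex cover is in fact spelled out more carefully than in the paper, which only sketches that step.
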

 
 \begin{proof}
 Suppose for contradiction that $\left |D_4 \right | > \left |Z_4 \right |$. Then by the definitions of $D_4$ and $Z_4$ we have
\begin{multline*}
    \left |\bigcup\limits_{i \in I_4}(bd(X_i,X) \cup OPT(G_{X_i}- bd(X_i,X))) \right | > \left |\bigcup\limits_{i \in I_4}(bd(X,X_i) \cup  (Z \cap (G_{X_i})  - \bigcup\limits_{i \in I_1} bd(X,X_i) \right |
    \end{multline*}
 Since $G_{X_i} \cap G_{X_j} = \emptyset$ for $i, j \in I_4$ and $Z \cap (G_{X_i}) \geq OPT(G_{X_i})$, we have
     \begin{multline*}
   \left |\bigcup\limits_{i \in I_4}(bd(X_i,X) \right | + \left |\bigcup\limits_{i \in I_4}OPT(G_{X_i}- bd(X_i,X))) \right | -  \left |\bigcup\limits_{i \in I_4}OPT(G_{X_i}) \right | \\ > \left |\bigcup\limits_{i \in I_4}(bd(X,X_i)   - \bigcup\limits_{i \in I_1} bd(X,X_i) \right | 
    \end{multline*}
     Note that  by equation \ref{eq:weight}, $\left |bd(X_i,X) \right | + \left |OPT(G_{X_i} - bd(X_i,X)) \right | - \left |OPT(G_{X_i} ) \right | = w(bd(X_i,X)) $  and hence, 
     \begin{multline}\label{ineq:weight2}
         \sum\limits_{i \in I_4} w(bd(X_i,X)) > \left |\bigcup\limits_{i \in I_4}(bd(X,X_i) - \bigcup\limits_{i \in I_1} bd(X,X_i) \right |
     \end{multline}
     
     Recall that $D$ is a minimal minimum weighted vertex cover of $\mathcal{H}$. Observe that by definition of $I_1$ and $Sol(D)$, the set $\bigcup\limits_{i \in I_1} bd(X,X_i) \subseteq D$. Now we show that if we delete the vertices in $\left \{ bd(X_i,X)\colon i \in I_4 \right \}$ from $D$ and adding the set of vertices  $\bigcup\limits_{i \in I_4}bd(X,X_i)  - \bigcup\limits_{i \in I_1}bd(X,X_i)$, then we get a vertex cover of smaller weight for $\mathcal{H}$ by inequality (\ref{ineq:weight2}), a contradiction:   By definition of $I_1$ and $Sol(D)$, the set $\bigcup\limits_{i \in I_1} bd(X,X_i) \subseteq D$. Hence by the addition of the  vertices  $\bigcup\limits_{i \in I_4}bd(X,X_i)  - \bigcup\limits_{i \in I_1}bd(X,X_i)$ to $D$ we have the neighbourhood of each deleted vertex $bd(X_i,X)$ in $D$. 
 \end{proof}
 
 \begin{lemma}\label{lem:comparison}
 $Sol(D) = \bigsqcup\limits_{i=1}^{4} D_i$ and for each $i,j\subset [4]$, $Z_i\cap Z_j=\emptyset$.
 \end{lemma}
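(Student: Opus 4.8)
The plan is to prove the two assertions separately, in each case reducing everything to the partition $\{I_1,I_2,I_3,I_4\}$ of $[t]$ (Observation~\ref{obs:partition}) together with the structural facts that $G_{X_1},\dots,G_{X_t}$ are pairwise vertex-disjoint and each is disjoint from $X$, and that $bd(X_i,X)\subseteq X_i\subseteq G_{X_i}$ while $bd(X,X_i)\subseteq X$. The one genuine subtlety that shapes the definitions is that the boundary sets $bd(X,X_i)\subseteq X$ for distinct children may overlap: a single vertex $u\in X$ can lie in $bd(X,X_i)\cap bd(X,X_j)$. This is exactly why $D_2$ and $Z_4$ carry the corrective term $-\bigcup_{i\in I_1}bd(X,X_i)$, and neutralising this overlap cleanly is where the bookkeeping must be done carefully; it is the main obstacle.

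For the identity $Sol(D)=\bigsqcup_{i=1}^4 D_i$, I would first record that for $i\in I_1\cup I_2$ we have $bd(X,X_i)\subseteq D\cap X$, so by Observation~\ref{obs:vertexcover} $bd(X_i,X)\notin D$, and hence by Equation~\ref{eq:sol_3} the only vertices of $Sol(D)$ lying in $G_{X_i}$ are those of $OPT(G_{X_i})$; that is, $Sol(D)\cap G_{X_i}=OPT(G_{X_i})$ for such $i$. Consequently $D_1$ and $D_2$ simplify to unions of $bd(X,X_i)$ and $OPT(G_{X_i})$ terms, and in the union $D_1\cup D_2$ the vertices removed by the corrective term in $D_2$ are recovered from $D_1$, giving $D_1\cup D_2=\bigcup_{i\in I_1\cup I_2}\bigl(bd(X,X_i)\cup OPT(G_{X_i})\bigr)$, which equals $S_1(D)\cup S_3(D)$ by parts (i) and (iii) of Observation~\ref{obs:sets_I}. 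Likewise $D_3\cup D_4=\bigcup_{i\in I_3\cup I_4}\bigl(bd(X_i,X)\cup OPT(G_{X_i}-bd(X_i,X))\bigr)=S_2(D)$ by part (ii). Summing yields $\bigcup_{i=1}^4 D_i=S_1(D)\cup S_2(D)\cup S_3(D)=Sol(D)$.

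Disjointness of the $D_i$ I would then obtain by localisation. Both $D_1$ and $D_2$ lie in $X\cup\bigcup_{i\in I_1\cup I_2}G_{X_i}$, whereas $D_3$ and $D_4$ lie in $\bigcup_{i\in I_3\cup I_4}G_{X_i}$; since $I_1\cup I_2$ and $I_3\cup I_4$ are disjoint and the $G_{X_i}$ are pairwise disjoint and disjoint from $X$, we get $(D_1\cup D_2)\cap(D_3\cup D_4)=\emptyset$. Within each pair, $D_1\cap D_2=\emptyset$ because the $OPT(G_{X_i})$ parts sit in disjoint subgraphs ($I_1\cap I_2=\emptyset$) while the boundary parts are made disjoint precisely by the subtraction in $D_2$; and $D_3\cap D_4=\emptyset$ since $I_3\cap I_4=\emptyset$ and all their vertices localise into the respective $G_{X_i}$. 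This establishes the disjoint-union claim.

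For the $Z_i$ the argument is purely set-theoretic and uses no optimality of $Z$: I would locate each set as $Z_1\subseteq X\cup\bigcup_{i\in I_1}G_{X_i}$, $Z_2\subseteq\bigcup_{i\in I_2}G_{X_i}$, $Z_3\subseteq\bigcup_{i\in I_3}G_{X_i}$, and $Z_4\subseteq X\cup\bigcup_{i\in I_4}G_{X_i}$, where the containments for $Z_2,Z_3$ use $bd(X_i,X)\subseteq G_{X_i}$. Every pair except $\{Z_1,Z_4\}$ is then immediately disjoint because the index sets are disjoint and $X\cap G_{X_i}=\emptyset$. The only pair needing care is $Z_1\cap Z_4$: their $G_{X_i}$-parts are disjoint since $I_1\cap I_4=\emptyset$, and their $X$-parts, namely $\bigcup_{i\in I_1}bd(X,X_i)$ for $Z_1$ and $\bigcup_{i\in I_4}bd(X,X_i)-\bigcup_{i\in I_1}bd(X,X_i)$ for $Z_4$, are disjoint precisely because the subtraction strips off exactly the common boundary vertices. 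Throughout, the recurring difficulty is this overlap of boundary sets inside $X$, which the corrective subtraction terms are designed to absorb.
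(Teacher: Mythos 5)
Your proposal is correct and follows essentially the same route as the paper: the identity $Sol(D)=\bigsqcup_{i=1}^{4}D_i$ is obtained by matching $D_1\cup D_2$ with $S_1(D)\cup S_3(D)$ and $D_3\cup D_4$ with $S_2(D)$ via Observation~\ref{obs:sets_I}, and disjointness rests on the partition $I_1,I_2,I_3,I_4$ of $[t]$ together with the vertex-disjointness of $X$ and the subgraphs $G_{X_i}$. Your treatment is in fact more careful than the paper's one-line disjointness claim, since you explicitly isolate the only non-trivial overlaps (the boundary sets inside $X$ for the pairs $D_1,D_2$ and $Z_1,Z_4$) and verify that the corrective subtraction terms eliminate them.
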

 
 \begin{proof}
   By Observation \ref{obs:partition} it follows from the definition that for $1 \leq i \neq j \leq 4$, the sets $D
   _i \cap D_j =\emptyset$ and $Z
   _i \cap Z_j =\emptyset$. \\
   
   Now we show that $Sol(D) = D_1 \cup D_2 \cup D_3 \cup D_4$. First consider the set $D_1 \cup D_2 = \bigcup\limits_{i \in I_1 \cup I_2} (bd(X,X_i) \cup (Sol(D)\cap G_{X_i}))$. By Observation \ref{obs:sets_I}, $\bigcup\limits_{i \in I_1 \cup I_2} bd(X,X_i)= S_1(D)$ and $\bigcup\limits_{i \in I_1 \cup I_2}(Sol(D)\cap G_{X_i}) = S_3(D)$. Hence $D_1 \cup D_2 = S_1(D) \cup S_3(D)$\\
   Now consider the set $D_3 \cup D_4 = \bigcup\limits_{i \in I_3 \cup I_4} bd(X_i,X) \cup OPT(G_{X_i}- bd(X_i,X))$. Hence by Observation \ref{obs:sets_I}, $D_3 \cup D_4 = S_2(D)$. Therefore, the definition of $Sol(D)$ (Equation \ref{eq:sol}) implies $Sol(D) = \bigsqcup\limits_{i=1}^{4} D_i$.
 \end{proof}
 
 \medskip \noindent \textbf{Proof of Theorem~\ref{thm:private}} Using Lemma~\ref{lem:comparison}, we have that $\left|Sol(D)\right|\leq |Z_1\cup Z_2\cup Z_3\cup Z_4|\leq |Z|$. Hence, $Sol(D)$ is a minimum $X$-CVD set of $G$. Furthermore, $\mathcal{H}$ has at most $|V(G)|$ vertices and $|E(G)|$ edges. Therefore minimum weighted vertex cover of $\mathcal{H}$ can be found in $O(|V(G)|\cdot|E(G)|)$-time and $Sol(D)$ can be computed in total of $O(|V(G)|\cdot|E(G)|)$-time. Below we give a short pseudocode of our algorithm to find a minimum $X$-CVD set of $G$.
 
\medskip
 \begin{minipage}{\linewidth}
 \begin{algorithm}[H] 
 \DontPrintSemicolon
\begin{small}
   \SetKwFunction{compute}{Compute\_sCD(G,a,A)}
\SetKwInOut{KwIn}{Input}
    \SetKwInOut{KwOut}{Output}
    \KwIn{A well-partitioned chordal graph $G$, a partition tree $\mathcal{T}$ of $G$ rooted at the node $X$, for each node $Y\in \mathcal{T}-\{X\}$ both $OPT(G_Y)$ and $OPT(G_Y-bd(Y,P(Y)))$ are given as part of input}
    \KwOut{A minimum $X$-CVD set}
    
    Construct a weighted bipartite graph $\mathcal{H}$ as described in Equations~\ref{eq:construct-1} and Equations~\ref{eq:weight};
    
    Find a minimum weighted vertex cover $D$ of $\mathcal{H}$;
    
    Construct the sets $S_1(D), S_2(D), S_3(D)$ and $Sol(D)$ as described in Equations~\ref{eq:sol_1},~\ref{eq:sol_2},~\ref{eq:sol_3} and~\ref{eq:sol}, respectively;
    
 \Return $Sol(D)$
   \caption{Pseudocode to find a minimum $X$-CVD set of a well-partitioned chordal graph}
   \label{alg:(X)-CVD}
    
\end{small}
 \end{algorithm}
\end{minipage}

\subsection{Finding minimum $(X,Y)$-CVD set of well-partitioned chordal graphs}\label{sec:(X,Y)-CVD}

In this section, we prove the following theorem.

\begin{theorem}\label{thm:(X,Y)}
Let $G$ be a well-partitioned graph; $\mathcal{T}$ be a partition tree of $G$ rooted at $X$; $Y$ be a child of $X$. Moreover, for each $Z\in V(\mathcal{T})-\{X\}$, assume both $OPT(G_Z)$ and $OPT(G_{Z}-bd(Z,P(Z)))$ are given $(P(Z)$ denotes the parent of $Z$ in $\mathcal{T})$. Then a minimum $(X,Y)$-CVD set of $G$ can be computed in $O\left(|E(G)|^2.|V(G)|\right)$ time.
\end{theorem}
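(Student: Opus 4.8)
The plan is to reduce the computation of a minimum $(X,Y)$-CVD set to polynomially many invocations of the minimum $X$-CVD subroutine from Theorem~\ref{thm:private}. Recall that an $(X,Y)$-CVD set $S$ is characterised by the existence of a cluster $C$ in $G-S$ that intersects both $X$ and $Y$. First I would observe that, since $X$ and $Y$ are each cliques and the only edges between them form a complete bipartite graph on $bd(X,Y)\times bd(Y,X)$, such a cluster $C$ is completely determined by the nonempty sets $A=C\cap X\subseteq bd(X,Y)$ and $B=C\cap Y\subseteq bd(Y,X)$, together with the requirement that $C$ is an isolated clique in $G-S$. The key structural point is that $C$ being a cluster forces $S$ to contain \emph{all} vertices outside $C$ that are adjacent to some vertex of $C$; in particular $S$ must delete $X\setminus A$ together with every vertex of $X$ and of the subtrees hanging off $X$ (other than the $Y$-subtree) that neighbours $A$, and symmetrically on the $Y$ side.

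The main idea is to ``guess'' the cluster $C$ and then contract it so that the residual optimisation becomes an $X$-CVD computation on a modified graph. Concretely, I would iterate over the candidate cluster by choosing the pair $(A,B)$ with $A\subseteq bd(X,Y)$ and $B\subseteq bd(Y,X)$. The crucial observation that keeps the running time polynomial is that we need not enumerate all subsets: because $G[X,Y]$ is the complete bipartite graph $bd(X,Y)\times bd(Y,X)$, any vertex of $bd(X,Y)$ retained in the cluster is adjacent to every retained vertex of $bd(Y,X)$, so the relevant ``type'' of a candidate cluster is governed by which boundary vertices toward the \emph{other} children/parent are retained. I expect the number of genuinely distinct cases to be bounded by $O(|E(G)|)$, matching the extra factor in the claimed $O\!\left(|E(G)|^2\cdot|V(G)|\right)$ bound relative to Theorem~\ref{thm:private}. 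For each fixed guess, I would build an auxiliary well-partitioned graph $G'$ in which the chosen cluster $C$ is forced: delete the vertices that are mandatory by the cluster-isolation argument above, and merge $A\cup B$ into (a subclique of) the root so that any valid completion is exactly an $X$-CVD set of $G'$ avoiding $C$. Then a single call to the Theorem~\ref{thm:private} subroutine on $G'$, which runs in $O(|E(G)|\cdot|V(G)|)$ time, yields the optimum consistent with that guess.

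Having set this up, the algorithm returns the minimum, over all $O(|E(G)|)$ guesses, of $|S|$ where $S$ is the union of the forced deletions and the $X$-CVD set returned for $G'$. Correctness would follow in two directions: every set produced this way is a genuine $(X,Y)$-CVD set (its residual graph has the guessed cluster $C$ surviving as an isolated clique and is otherwise a disjoint union of cliques), and conversely any optimal $(X,Y)$-CVD set $S$ induces a surviving cluster $C$ whose intersection pattern with the boundaries matches exactly one of the enumerated guesses, so $S$ restricted to $G'$ is a valid $X$-CVD set and is therefore no smaller than the one the subroutine finds. The running time is then $O(|E(G)|)$ guesses times $O(|E(G)|\cdot|V(G)|)$ per guess, giving $O\!\left(|E(G)|^2\cdot|V(G)|\right)$.

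The hard part will be pinning down precisely which deletions are \emph{forced} by the choice of cluster $C$ and proving that the number of distinct guesses is only $O(|E(G)|)$ rather than exponential; in particular one must argue carefully that, once the cluster is required to be isolated, the optimal way to handle each child subtree of $X$ (other than $Y$) and the parent side decouples into independent subproblems that are already captured by the precomputed $OPT(G_{X_i})$ and $OPT(G_{X_i}-bd(X_i,X))$ values feeding into the $X$-CVD routine. Getting the reduction to $G'$ to respect the well-partitioned structure, so that Theorem~\ref{thm:private} is genuinely applicable, is the delicate step.
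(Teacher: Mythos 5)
There is a genuine gap at exactly the point you flag as ``the hard part'': your enumeration is over candidate clusters $C$, parameterised by the pair $(A,B)$ with $A\subseteq bd(X,Y)$ and $B\subseteq bd(Y,X)$, and a priori this is an exponential family ($2^{|bd(X,Y)|}\cdot 2^{|bd(Y,X)|}$ choices). Your claim that only $O(|E(G)|)$ of these guesses are ``genuinely distinct'' is asserted, not proved, and it is not clear it can be proved: distinct vertices of $bd(X,Y)$ generally have distinct neighbourhoods in the other children of $X$ (and likewise on the $Y$ side), so which boundary vertices to retain in the surviving cluster is a nontrivial combinatorial choice that does not collapse to a per-edge case analysis. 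Without a correct polynomial bound on the number of guesses, the whole reduction and the claimed running time do not go through.

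The paper sidesteps this by guessing far less: only a single $(X,Y)$-edge $ab$ that the surviving cluster must contain (at most $|E(G)|$ choices). Preserving $ab$ forces the deletion of exactly $N(a)\,\Delta\,N(b)$ (any $u$ in the symmetric difference would form an induced $P_3$ with $a$ and $b$). After deleting this set, the component $H^*$ containing $ab$ has a partition tree whose root, after contracting the tree edge between $bd(X,Y)$ and $bd(Y,X)$ (Observation~\ref{obs:contraction}), is $R^*=bd(X,Y)\cup bd(Y,X)$; the decision of which \emph{further} boundary vertices join the cluster is then not guessed at all but is absorbed into computing a minimum $R^*$-CVD set of $H^*$ via Theorem~\ref{thm:private}, while the remaining components' optima are read directly from the precomputed values $OPT(G_R)$ and $OPT(G_R-bd(R,P(R)))$ (Observation~\ref{obs:connected-component-opt}). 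This is the missing idea you need: replace ``guess the cluster'' by ``guess one crossing edge of the cluster and let the vertex-cover machinery of the $X$-CVD subroutine decide the rest.'' With that change your overall architecture (forced deletions, decomposition into independent components, one call to Theorem~\ref{thm:private} per guess, $O(|E(G)|)$ guesses) matches the paper's proof.
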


For the remainder of this section, the meaning of $G$, $\mathcal{T}$, $X$ and $Y$ will be as given in Theorem~\ref{thm:(X,Y)}.  For an $(X,Y)$-edge $e$, we say that a minimum $(X,Y)$-CVD set $A$ ``preserves" the edge $e$ if $G-A$ contains the edge $e$. Let $e \in E(X,Y)$ be an $(X,Y)$-edges of $G$. Then to prove Theorem~\ref{thm:(X,Y)}, we use Theorem~\ref{thm:(X,Y)-edge}. First we show how to construct a minimum $(X,Y)$-CVD set $S_e$ that preserves the edge $e \in E(X,Y)$ and prove Theorem~\ref{thm:(X,Y)-edge}. Clearly, a minimum $(X,Y)$-CVD set $S$ of $G$ is the one that satisfies $|S|=\min\limits_{e \in E(X,Y)}|S_e|$. Therefore, Theorem~\ref{thm:(X,Y)} will follow directly from Theorem~\ref{thm:(X,Y)-edge}.  The remainder of this section is devoted to prove Theorem~\ref{thm:(X,Y)-edge}.

\begin{theorem}\label{thm:(X,Y)-edge}
Assuming the same conditions as in Theorem~\ref{thm:(X,Y)}, for $e \in E(X,Y)$, a minimum $(X,Y)$-CVD set of $G$ that preserves $e$ can be computed in $O\left(|E(G)|.|V(G)|\right)$ time.
\end{theorem}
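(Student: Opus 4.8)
I fix $e=u_0v_0$ with $u_0\in X$ and $v_0\in Y$; since $e$ is an $(X,Y)$-edge, $u_0\in bd(X,Y)$ and $v_0\in bd(Y,X)$. The plan is to first pin down the shape of any $(X,Y)$-CVD set that preserves $e$, read off from this a set of \emph{forced} deletions, and then reduce the remaining freedom to a single minimum $X$-CVD computation of the kind supplied by Theorem~\ref{thm:private}. Concretely, I would first analyse the unique cluster $C$ of $G-A$ containing $e$, for an arbitrary $(X,Y)$-CVD set $A$ preserving $e$. As $C$ is a clique containing $u_0\in X$ and $v_0\in Y$, and the only $X$--$Y$ edges are $bd(X,Y)\times bd(Y,X)$, adjacency to $v_0$ (resp. $u_0$) forces $C\cap X\subseteq bd(X,Y)$ and $C\cap Y\subseteq bd(Y,X)$. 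Moreover $C$ can meet no other node of $\mathcal{T}$: a vertex of a sibling $X_i\neq Y$ of $Y$ is non-adjacent to $v_0$ and a vertex of a child $Z$ of $Y$ is non-adjacent to $u_0$ (both by condition (c) of Definition~\ref{def:well-partitioned}, the nodes being at tree-distance two), so neither can lie in the clique $C$. Hence $C\subseteq bd(X,Y)\cup bd(Y,X)$, and this set is itself a clique since $bd(X,Y)\subseteq X$ and $bd(Y,X)\subseteq Y$ are cliques joined completely. Finally, because $X$ and $Y$ are cliques, $X\setminus A$ and $Y\setminus A$ are connected and therefore lie in $C$, giving $X\setminus A=C\cap X\subseteq bd(X,Y)$ and $Y\setminus A=C\cap Y\subseteq bd(Y,X)$. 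Thus every such $A$ must contain the forced set
\[
 D_0 \;=\; \bigl(X\setminus bd(X,Y)\bigr)\cup\bigl(Y\setminus bd(Y,X)\bigr).
\]

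Next I would reduce the residual problem to a minimum $X$-CVD computation on $G'=G-D_0$, regarding $W:=bd(X,Y)\cup bd(Y,X)$ (a clique, with $u_0,v_0\in W$) as a new root. Each child subtree $G_{X_i}$ (for $X_i\neq Y$) and each $G_Z$ (for $Z$ a child of $Y$) is connected and attaches to the root region only through $bd(X_i,X)$, resp. $bd(Z,Y)$; hence after deleting $D_0$ it is either entirely cut off from $W$ or stays attached through its surviving boundary. The cut-off subtrees form their own components, each contributing its precomputed optimum $OPT(G_{X_i})$ or $OPT(G_Z)$. The component containing $W$ is a connected well-partitioned chordal graph rooted at $W$ whose children are precisely the attached $X_i$'s and $Z$'s, with $bd(X_i,W)=bd(X_i,X)$, $bd(Z,W)=bd(Z,Y)$, and unchanged child subgraphs. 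Consequently the quantities required to run the algorithm of Theorem~\ref{thm:private} at $W$ — for each child, its $OPT$-value and that of the child minus its boundary towards $W$ — coincide with values among $OPT(G_{X_i})$, $OPT(G_{X_i}-bd(X_i,X))$, $OPT(G_Z)$, $OPT(G_Z-bd(Z,Y))$, all available by hypothesis. One then checks the key equivalence: a set $A\supseteq D_0$ is an $(X,Y)$-CVD set of $G$ preserving $e$ if and only if $A\setminus D_0$ is a $W$-CVD set of $G'$ whose surviving root clique $W\setminus A$ contains both $u_0$ and $v_0$, the latter being exactly what guarantees that the resulting cluster meets both $X$ and $Y$.

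To enforce $u_0,v_0\notin A$ inside the weighted-vertex-cover formulation underlying Theorem~\ref{thm:private}, I would assign $u_0$ and $v_0$ a prohibitively large weight (any value exceeding $|V(G)|$ suffices, as the total deletion cost is at most $|V(G)|$), so that a minimum weighted vertex cover of the auxiliary bipartite graph for $W$ never selects them while still covering every incident boundary edge; such a cover always exists. Running that subroutine on the $W$-component returns a minimum $W$-CVD set $S'$ avoiding $u_0,v_0$, and I output $S_e:=D_0\cup S'\cup\bigl(\text{optima of the cut-off subtrees}\bigr)$. Correctness in both directions follows from the equivalence above together with the fact that every competitor contains $D_0$ and restricts to a feasible $W$-CVD set avoiding $u_0,v_0$. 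The running time is one call to the algorithm of Theorem~\ref{thm:private} on a graph with at most $|V(G)|$ vertices and $|E(G)|$ edges, plus linear-time preprocessing, i.e.\ $O\!\left(|E(G)|\cdot|V(G)|\right)$.

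I expect the main obstacle to be the two structural verifications: proving rigorously that the cluster $C$ is confined to $bd(X,Y)\cup bd(Y,X)$ (so that $D_0$ is genuinely forced), and checking that the component of $G-D_0$ containing $W$ is an honest well-partitioned chordal graph rooted at $W$ so that Theorem~\ref{thm:private} applies verbatim with the inherited subproblem values. A secondary technical point is to confirm that the large weights on $u_0,v_0$ do not disturb the optimality argument of Theorem~\ref{thm:private}: this reduces to observing that a vertex cover avoiding $u_0,v_0$ always exists and that the finite-weight part of any minimal minimum cover still corresponds exactly to the deleted vertices of $W$.
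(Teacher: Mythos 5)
Your proposal is correct and shares the paper's overall architecture --- identify a forced deletion set, observe that the surviving cluster through $e=ab$ must live inside the merged clique $bd(X,Y)\cup bd(Y,X)$, re-root the surviving component at that merged bag, and invoke the machinery of Theorem~\ref{thm:private} there while handing the split-off subtrees their precomputed optima --- but it diverges in one substantive way. The paper takes the forced set to be the full symmetric difference $N(a)\,\Delta\,N(b)$: besides your $D_0=(X\setminus bd(X,Y))\cup(Y\setminus bd(Y,X))$, this also removes every neighbour of $a$ in sibling subtrees of $Y$ and every neighbour of $b$ in children of $Y$ (each such vertex $w$ is forced because $w,a,b$ or $w,b,a$ induces a $P_3$). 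After this larger deletion, $a$ and $b$ have identical closed neighbourhoods contained in $R^*=bd(X,Y)\cup bd(Y,X)$ and are universal to it, so a \emph{minimum} $R^*$-CVD set of the surviving component provably never contains them; Theorem~\ref{thm:private} is then applied verbatim, with no side constraint. You delete only $D_0$, so $u_0$ and $v_0$ may still carry costly attachments to sibling/child subtrees, and you must enforce their survival by inflating their weights in the auxiliary vertex-cover instance. That is a legitimate device (the all-$\mathcal{B}$ side is always a cover avoiding them, and since any competitor $Z$ preserving $e$ also avoids them, the swap arguments in the proof of Theorem~\ref{thm:private} only ever touch unit-weight root vertices), but it means Theorem~\ref{thm:private} cannot be cited as a black box: its optimality proof has to be re-verified with non-unit weights on two root vertices, which you flag but do not carry out. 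Your identification of $D_0$ via the cluster-containment argument and the bijection between $(X,Y)$-CVD sets preserving $e$ and constrained $W$-CVD sets of $G-D_0$ are both sound; the paper's choice of the larger forced set is what lets it avoid the constrained optimization entirely, at no extra cost since the $W$-CVD condition in your formulation forces those same deletions anyway.
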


First, we need the following observation about the partition trees of well-partitioned chordal graphs, which is easy to verify.

\begin{observation}\label{obs:contraction}
Let $G$ be a well-partitioned graph with a partition tree $\mathcal{T}$. Let $X,Y$ be two adjacent nodes of $\mathcal{T}$ such that $X\cup Y$ induces a complete subgraph in $G$ and $\mathcal{T}'$ be the tree obtained by contracting the edge $XY$ in $\mathcal{T}$. Now associate the newly created node with the subset of vertices $(X\cup Y)$ and retain all the other nodes of $\mathcal{T}'$ and their associated subsets as in $\mathcal{T}$. Then  $\mathcal{T}'$ is also  a partition tree of $G$. 
\end{observation}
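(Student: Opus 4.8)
The plan is to verify directly that the three defining properties (a), (b), (c) of a partition tree in Definition~\ref{def:well-partitioned} hold for $\mathcal{T}'$. Contracting an edge of a tree again yields a tree, and since we only merge the parts $X$ and $Y$ into the single part $W=X\cup Y$ while leaving every other part untouched, the node set of $\mathcal{T}'$ is again a partition of $V(G)$. It therefore remains to check the three structural conditions. Property (a) is immediate: the only new part is $W$, and $W$ is a clique precisely because $X\cup Y$ induces a complete subgraph of $G$ by hypothesis; all other parts are inherited unchanged from $\mathcal{T}$ and hence remain cliques.

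The key combinatorial fact I would isolate first is that, because $\mathcal{T}$ is a tree containing the edge $XY$, no other node $Z$ can be adjacent to both $X$ and $Y$ in $\mathcal{T}$ (otherwise $X,Y,Z$ would form a triangle in the acyclic graph $\mathcal{T}$). Consequently, every neighbour $Z$ of $W$ in $\mathcal{T}'$ was a neighbour in $\mathcal{T}$ of exactly one of $X,Y$; say $Z$ was adjacent to $X$ but not to $Y$. Applying property (c) of $\mathcal{T}$ to the non-adjacent pair $Z,Y$ shows there are no edges between $Z$ and $Y$, so $G[W,Z]=G[X,Z]$. Now property (b) for the edge $XZ$ of $\mathcal{T}$ gives $X'\subseteq X$ and $Z'\subseteq Z$ with edge set $X'\times Z'$; taking $W'=X'\subseteq W$ witnesses property (b) for the edge $WZ$ of $\mathcal{T}'$. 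For edges of $\mathcal{T}'$ not incident to $W$, the endpoints and their mutual adjacencies are identical to those in $\mathcal{T}$, so property (b) is inherited verbatim.

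Finally, for property (c) I would split non-adjacent pairs of $\mathcal{T}'$ into two cases. If neither node is $W$, then the pair is non-adjacent in $\mathcal{T}$ as well (contraction creates no new adjacencies among the surviving nodes), and property (c) of $\mathcal{T}$ applies directly. If one node is $W$ and the other is some $Z$ with $WZ\notin E(\mathcal{T}')$, then $Z$ was adjacent in $\mathcal{T}$ to neither $X$ nor $Y$, since the neighbours of $W$ in $\mathcal{T}'$ are exactly the neighbours of $X$ together with those of $Y$ (other than $X,Y$ themselves). Property (c) of $\mathcal{T}$ then forbids edges between $Z$ and $X$ and between $Z$ and $Y$, hence between $Z$ and $W=X\cup Y$. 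This exhausts all cases and establishes that $\mathcal{T}'$ is a partition tree of $G$.

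The main obstacle---really the only nontrivial point---is the transfer of the bipartite ``grid'' condition (b) across the contraction: a priori $G[W,Z]$ is the union of $G[X,Z]$ and $G[Y,Z]$, and a union of two complete bipartite ``grids'' need not itself be a grid. The tree structure resolves this by guaranteeing that one of the two parts contributes no edges, so $G[W,Z]$ collapses to a single grid and condition (b) survives.
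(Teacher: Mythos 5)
Your proof is correct and complete. The paper itself offers no proof of this observation (it is stated as ``easy to verify''), and your direct verification of conditions (a)--(c) of Definition~\ref{def:well-partitioned} --- in particular the key point that a neighbour of the contracted node $W=X\cup Y$ in $\mathcal{T}'$ was adjacent in $\mathcal{T}$ to exactly one of $X,Y$, so that $G[W,Z]$ collapses to a single complete-bipartite ``grid'' --- is exactly the argument the authors are implicitly relying on.
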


 Now we begin building the machinery to describe our algorithm for finding a minimum $(X,Y)$-CVD of $G$ that preserves an $(X,Y)$-edge $ab$. Observe that any $(X,Y)$-CVD set that preserves the edge $ab$ must contain the set $\left(N(a)~\Delta~N(b)\right)$ as subset. (Otherwise, the connected component of $G-S$ containing $ab$ would not be a cluster, a contradiction). 
 
 Let $H$ denote the graph $G-\left(N(a)~\Delta~N(b)\right)$. Now consider the partition $\mathcal{Q}$ defined as $\{Z-\left(N(a)~\Delta~N(b)\right)\colon Z \in V(\mathcal{T})\}$. Now construct a graph $\mathcal{F}$ whose vertex set is $\mathcal{Q}$ and two vertices $Z_1,Z_2$ are adjacent in $\mathcal{F}$ if there is an edge $uv \in E(H)$ such that $u\in Z_1$ and $v\in Z_2$. Observe that $\mathcal{F}$ is a forest.
 


Now we have the following observation that relates the connected components of $H$ with that of $\mathcal{F}$ .

\begin{observation}\label{obs:bijection}
There is a bijection $f$ between the connected components of $H$ and the connected components of $\mathcal{F}$, such that for a component $C$ of $H$, $f(C)$ is the partition tree of $C$. Moreover, the vertices of the root node of $f(C)$ is subset of a node in $\mathcal{T}$.
\end{observation}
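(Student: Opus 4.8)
The plan is to establish Observation~\ref{obs:bijection} by tracking how the deletion of $\left(N(a)\,\Delta\,N(b)\right)$ interacts with the partition tree structure, and then reading off the bijection from the way connected components decompose along the tree $\mathcal{T}$. First I would recall two structural facts guaranteed by the definition of a partition tree: every bag $Z \in V(\mathcal{T})$ is a clique, and two vertices in distinct bags $Z_1, Z_2$ are adjacent only if $Z_1 Z_2 \in E(\mathcal{T})$ (by condition (c)). Consequently, in the graph $H = G - \left(N(a)\,\Delta\,N(b)\right)$, every edge of $H$ either lies inside a single part $Z - \left(N(a)\,\Delta\,N(b)\right)$ of $\mathcal{Q}$ (a clique edge) or runs between two parts whose original bags are adjacent in $\mathcal{T}$. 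This is exactly why $\mathcal{F}$ is a forest: it is obtained from the tree $\mathcal{T}$ by deleting some vertices (empty parts) and some edges, and no new adjacencies can appear between non-adjacent bags.

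Next I would define the candidate bijection $f$ explicitly. For a connected component $C$ of $H$, let $f(C)$ be the subgraph of $\mathcal{F}$ induced on those parts $Z - \left(N(a)\,\Delta\,N(b)\right)$ that contain at least one vertex of $C$. I would argue that $f(C)$ is connected in $\mathcal{F}$: any two vertices of $C$ are joined by a path in $H$, and projecting each vertex of this path to the part of $\mathcal{Q}$ containing it yields a walk in $\mathcal{F}$ between the corresponding parts (consecutive path-vertices lie in the same part or in $\mathcal{F}$-adjacent parts by the preceding paragraph). Conversely, if two parts are adjacent in $f(C)$, then by definition of $\mathcal{F}$ there is an edge of $H$ between them, so the vertices they contribute to $C$ lie in the same component; iterating shows that the union of parts meeting $C$ induces a \emph{single} component of $H$, namely $C$ itself. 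Since $\mathcal{F}$ is a forest, each such connected $f(C)$ is a tree, and one checks it is a partition tree of $C$ by verifying the three defining conditions are inherited: each part is a clique (as a subset of a clique), the bipartite adjacencies between adjacent parts remain products $X' \times Y'$ (restricting the original $X', Y'$ to the surviving vertices), and non-adjacent parts stay non-adjacent.

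For the last sentence of the statement I would show that the vertices of the root node of $f(C)$ form a subset of a single node of $\mathcal{T}$. Here the natural choice of root is the part of $f(C)$ closest to the root of $\mathcal{T}$ (equivalently, the part whose underlying bag $Z$ has minimum depth in $\mathcal{T}$); since each part of $\mathcal{Q}$ is by construction $Z - \left(N(a)\,\Delta\,N(b)\right) \subseteq Z$ for a unique $Z \in V(\mathcal{T})$, this is immediate. To finish, I would verify $f$ is a bijection: injectivity follows because distinct components of $H$ are vertex-disjoint and hence occupy disjoint sets of parts, while surjectivity follows because every component of $\mathcal{F}$ is a maximal connected union of nonempty parts, and the vertices living in those parts induce exactly one component of $H$ by the connectivity argument above.

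The main obstacle I anticipate is not any single hard inequality but rather the bookkeeping needed to rule out spurious adjacencies and to confirm that the inherited bipartite structure between adjacent surviving bags is still a complete bipartite product $X' \times Y'$ after deleting $\left(N(a)\,\Delta\,N(b)\right)$. One must check that removing a symmetric-difference set does not, for instance, split one original bag's contribution across two different components in a way that breaks the clean tree-to-forest correspondence; this relies crucially on each bag being a clique, so that a bag either contributes entirely to one component or is emptied. Once that clique property is leveraged, the rest is a routine transfer of the three partition-tree axioms to the induced subgraphs, which is why the authors label it ``easy to verify.''
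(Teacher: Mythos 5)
Your proof is correct, and it reaches the statement by a somewhat different route than the paper. You define $f$ uniformly (the subgraph of $\mathcal{F}$ induced by the parts of $\mathcal{Q}$ meeting a component $C$), verify connectivity in both directions to get bijectivity, check the three partition-tree axioms are inherited, and then observe that the ``root node is a subset of a node of $\mathcal{T}$'' claim is trivially true for \emph{every} node of $\mathcal{F}$ since each part is $Z - (N(a)\,\Delta\,N(b)) \subseteq Z$. The paper instead runs a four-case analysis on how a component $A$ meets $X$, $Y$, $bd(X,Y)$ and $bd(Y,X)$, and in each case pins down the root of $f(A)$ concretely: it is either $bd(X,Y)$ (a proper subset of the root bag $X$, for the component containing $a$ and $b$) or an entire child node $Z$ of $X$ or $Y$. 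That extra precision is not needed for the literal statement but is what the authors actually use downstream (Observations on $H^*$ and on the other components rely on knowing that each component is isomorphic to some $G_R$ or $G_R - bd(R,P(R))$), so the paper's case analysis is doing work beyond the observation as stated. Your argument, conversely, is more careful about points the paper waves away: that $\mathcal{F}$ is a forest because every $\mathcal{F}$-edge projects to a $\mathcal{T}$-edge, that each nonempty part lies wholly inside one component of $H$ because bags are cliques (which is exactly what makes the minimum-depth root well defined), and that the product structure $X' \times Y'$ on surviving bag pairs is preserved under vertex deletion. Both arguments are sound; yours proves the observation as written, the paper's proves a slightly stronger, more usable version of it.
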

\begin{proof}
Recall that $\mathcal{Q}$ is a partition of $V(H)$ and the graph $\mathcal{F}$ is a forest. Let $A$ be a connected component of $H$. We have the following cases. 

\begin{enumerate}
    \item\label{it:case-1} There is a vertex $u\in A$ and a vertex in $v\in bd(X,Y)$ such that $uv\in E(G)$. Then observe that $A$ contains both vertices $a$ and $b$. Observe that there is a set $Z=bd(X,Y)$ in $\mathcal{Q}$. Hence, $Z$ is a vertex of $\mathcal{F}$. Now define $f(A)$ to be the subgraph of $\mathcal{F}$ that contains $Z$. Clearly, $f(A)$ is a partition tree of $A$ and the root node of $f(A)$ is $bd(X,Y)$ which is a subset of $X$, the root node of $\mathcal{T}$.
    
    \item There is a vertex $u\in A$ and a vertex in $v\in bd(Y,X)$ such that $uv\in E(G)$. In this case, $A$ contains both vertices $a$ and $b$. Hence, $f(A)$ can be defined as in Case~\ref{it:case-1}.
    
    \item Consider the case when any edge $e=uv$ with $u\in X$ and $v\in A$ satisfies $u\in X-bd(X,Y)$. In this case, observe that $v$ must lie in some child $Z$ of $X$. Moreover, there is a set $Z$ in $\mathcal{Q}$. Hence, $Z$ is a vertex of $\mathcal{F}$. Now define $f(A)$ to be the subgraph of $\mathcal{F}$ that contains $Z$. Clearly, $f(A)$ is a partition tree of $A$ and the root node of $f(A)$ is $Z$ which is a node of $\mathcal{T}$.
    
    \item Consider the case when any edge $e=uv$ with $u\in Y$ and $v\in A$ satisfies $u\in Y-bd(Y,X)$. In this case, observe that $v$ must lie in some child $Z$ of $Y$. Moreover, there is a set $Z$ in $\mathcal{Q}$. Hence, $Z$ is a vertex of $\mathcal{F}$. Now define $f(A)$ to be the subgraph of $\mathcal{F}$ that contains $Z$. Clearly, $f(A)$ is a partition tree of $A$ and the root node of $f(A)$ is $Z$ which is a node of $\mathcal{T}$.
\end{enumerate}
This completes the proof.
\end{proof}

Consider the connected component $H^*$ of $H$ which contains $a$ and $b$ and let $\mathcal{F}'=f(H^*)$ where $f$ is the function given by Observation~\ref{obs:bijection}. Observe that the root $R'$ of $\mathcal{F}'$ is actually $bd(X,Y)$. Moreover, $R'$ has a child $R''$ which is actually $bd(Y,X)$. Observe that, $R'\cup R''$ induces a complete subgraph in $H^*$. Hence, due to Observation~\ref{obs:contraction}, the tree $\mathcal{F}^*$ obtained by contracting the edge $R'R''$ is a partition tree of $H^*$. Moreover, $R^*=R'\cup R'' = bd(X,Y) \cup bd(Y,X)$ is the root node of $\mathcal{F}^*$. Recall that our objective is to find a minimum $(X,Y)$-CVD set that preserves the edge $ab$. We have the following lemma.

\begin{lemma}\label{lem:(X,Y)-CVD-characterisation}
Let $H^*,H_1,H_2,\ldots,H_{k'}$ be the connected components of $H$. Let $S^*$ be a minimum $(R^*)$-CVD set of $H^*$, $S_0=\left(N(a)~\Delta~N(b)\right)$, and for each $j\in [k']$, let $S_j$ denote a minimum CVD set of $H_j$. Then $(S_0 \cup S_1 \cup S_2\cup \ldots \cup S_{k'}  \cup S^*)$ is a minimum $(X,Y)$-CVD set of $G$ that preserves the edge $ab$.
\end{lemma}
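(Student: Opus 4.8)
The plan is to prove the two inclusions that together give minimality: first that $T := S_0 \cup S_1 \cup \cdots \cup S_{k'} \cup S^*$ is a genuine $(X,Y)$-CVD set of $G$ preserving $ab$ (feasibility), and then that $|T| \le |T'|$ for every $(X,Y)$-CVD set $T'$ of $G$ preserving $ab$ (optimality). The whole argument hinges on one structural claim, which I would establish first: in any CVD set $A$ of $G$ with $A \supseteq S_0 = N(a)\,\Delta\,N(b)$ preserving $ab$, the cluster $C^{ab}$ of $G-A$ containing $a$ and $b$ satisfies $C^{ab} \subseteq R^{*} = bd(X,Y)\cup bd(Y,X)$. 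Indeed, any $c \in C^{ab}$ is adjacent to both $a\in X$ and $b\in Y$; if $c$ lay in a node $W \notin \{X,Y\}$, then by condition (c) of Definition~\ref{def:well-partitioned} the node $W$ would be adjacent in $\mathcal{T}$ to both $X$ and $Y$, which is impossible since $XY\in E(\mathcal{T})$ and a tree has no triangle. Hence $c\in X$ or $c\in Y$, and adjacency to the opposite endpoint forces $c\in bd(X,Y)$ or $c\in bd(Y,X)$, i.e. $c\in R^{*}$. As the cross edges between $X$ and $Y$ form $bd(X,Y)\times bd(Y,X)$ and each boundary is a clique, $R^{*}$ induces a clique; this is exactly the root node of the contracted partition tree $\mathcal{F}^{*}$ of $H^{*}$.

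Next I would record the decomposition. Since $S_0 = V(G)\setminus V(H)$ and $H^{*},H_1,\dots,H_{k'}$ are the components of $H$, the graph $G - S_0$ has no edge between distinct components; therefore a set $A \supseteq S_0$ is a CVD set of $G$ if and only if $A\cap V(H^{*})$ is a CVD set of $H^{*}$ and each $A\cap V(H_j)$ is a CVD set of $H_j$, and these pieces are pairwise disjoint. For feasibility I would check that each $S_j$ turns $H_j$ into a cluster graph, that $S^{*}$ (being an $R^{*}$-CVD set) turns $H^{*}$ into a cluster graph, and that $S^{*}$ leaves a cluster $C\subseteq R^{*}$ with $a,b\in C$; then $G-T$ is a cluster graph whose cluster $C$ meets $X$ in $a$ and $Y$ in $b$, so $T$ is an $(X,Y)$-CVD set preserving $ab$. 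The delicate point is that $S^{*}$ must keep $a,b$ rather than delete all of $R^{*}$. Here I would use that after deleting $S_0$ the vertices $a,b$ are true twins in $H^{*}$: every vertex of $X\setminus bd(X,Y)$ and of $Y\setminus bd(Y,X)$ lies in $N(a)\,\Delta\,N(b)=S_0$ (such a vertex is adjacent to exactly one of $a,b$ by the $bd(X,Y)\times bd(Y,X)$ edge structure), so $a$ and $b$ have identical closed neighbourhoods in $H^{*}$. I would take $S^{*}$ to be a minimum $R^{*}$-CVD set of $H^{*}$ whose $R^{*}$-cluster contains $ab$, and use the twin property in an exchange argument to show that requiring $a,b$ in the retained cluster does not increase the optimum.

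For optimality, let $T'$ be any $(X,Y)$-CVD set preserving $ab$. As noted before the lemma, $S_0 \subseteq T'$, so $T'$ splits disjointly over $S_0$, $V(H^{*})$ and the $V(H_j)$. Each restriction $T'\cap V(H_j)$ is a CVD set of $H_j$, whence $|T'\cap V(H_j)|\ge |S_j|$; and by the structural claim the cluster of $G-T'$ through $ab$ lies in $R^{*}$, so $T'\cap V(H^{*})$ is an $R^{*}$-CVD set of $H^{*}$ retaining $ab$, giving $|T'\cap V(H^{*})|\ge |S^{*}|$. Summing the disjoint pieces yields $|T'| \ge |S_0| + |S^{*}| + \sum_{j} |S_j| = |T|$, which together with feasibility proves $T$ is a minimum $(X,Y)$-CVD set preserving $ab$.

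I expect the main obstacle to be the feasibility half rather than optimality: pinning down that a minimum $R^{*}$-CVD set of $H^{*}$ can be chosen to retain the edge $ab$ (instead of deleting $R^{*}$ wholesale) is exactly where the twin structure of $a,b$ in $H^{*}$ must be combined with the containment $C^{ab}\subseteq R^{*}$, and it is the one step that does not reduce to the clean component-wise decomposition used everywhere else.
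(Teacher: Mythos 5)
Your proposal is correct and follows essentially the same route as the paper: force $S_0=N(a)\,\Delta\,N(b)$ into every feasible set, decompose over the connected components of $H$, and reduce the component $H^*$ to a minimum $R^*$-CVD computation. The only differences are local: where you show directly (via the partition tree having no triangle through the edge $XY$) that the cluster through $ab$ must lie inside $R^*$, the paper instead invokes Lemma~\ref{lem:CVD-characterisation} and derives an induced path on three vertices as a contradiction; and your explicit exchange argument for why a minimum $R^*$-CVD set of $H^*$ cannot contain $a$ or $b$ (using $N_{H^*}[a]=N_{H^*}[b]=R^*$) fills in a step the paper only asserts in one sentence.
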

\begin{proof}

Observe that, any vertex which is adjacent to $a$ or $b$ lie in $R^*$. Since $S^*$ is a minimum $(R^*)$-CVD set, $ S^* \cap \{a,b\}=\emptyset$ and therefore $H^*-S^*$ has a cluster that contains the edge $ab$. Hence $S_0 \cup S_1\cup S_2 \cup \ldots \cup S_{k''} \cup S^*$ is an $(X,Y)$-CVD set that preserves the edge $ab$. 

Let $Z$ be any $(X,Y)$-CVD set of $G$ that preserves the edge $ab$. For any vertex $u\in S_0$, observe that $a,b,u$ induce a path of length $3$. Hence, $S_0 \subseteq Z$. Let $C$ be a connected component of $G-S_0$. Observe that $Z\cap C$ must be a CVD set of $C$. Therefore, for each $i\in [k']$, $|Z\cap H_i|\leq |S_i|$. 

Since $Z$ is an $(X,Y)$-CVD set of $G$ that preserves the edge $ab$, $\{a,b\}\cap Z=\emptyset$. Since $a,b$ are vertices of $H^*$, $ (Z\cap H^*) \cap \{a,b\} =\emptyset$. Now suppose $(Z\cap H^*)$ is not a $(R^*)$-CVD set of $H^*$. Then due to Lemma~\ref{lem:CVD-characterisation}, $(Z\cap H^*)$ must be a $(R^*,R)$-CVD set of $G^*$ for some child $R$ of $R^*$ in $\mathcal{T}^*$. Hence, there exists a $(R^*,R)$-edge $cd$ which is preserved by $(Z\cap H^*)$. Without loss of generality assume $c\in R^*$ and $d\in R$. Observe that $d$ is not adjacent to $a$ or $b$. Hence, $a,c,d$ induce a path of length $3$ in $H^*-(Z\cap H^*)$, a contradiction. Hence $|Z\cap H^*|\leq |S^*|$. Therefore $|Z| \leq |S_0\cup S_1\cup S_2\cup \ldots \cup S_{k''} \cup S^*|$.
\end{proof}

Lemma~\ref{lem:(X,Y)-CVD-characterisation} provides a way to compute a minimum $(X,Y)$-CVD set of $G$ that preserves the edge $ab$. Clearly, the set $S_0=\left(N(a)~\Delta~N(b)\right)$ can be computed in polynomial time. The following observation provides a way to compute a minimum CVD set of all connected components that are different from $H^*$.

\begin{observation}\label{obs:connected-component-opt}
Let $A$ be a connected component of $H$ which is different from $H^*$. Then a minimum $CVD$ set of $A$ can be computed in polynomial time.  
\end{observation}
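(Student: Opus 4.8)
The plan is to express a minimum CVD set of $A$ directly in terms of the quantities $OPT(G_Z)$ and $OPT(G_Z - bd(Z,P(Z)))$ that are furnished as input, exploiting the fact that the deleted set $S_0 = N(a)\,\Delta\,N(b)$ touches the partition tree $\mathcal{T}$ only very near the nodes $X$ and $Y$. Concretely, I would first pin down which vertices $S_0$ removes: since $S_0 \subseteq N(a) \cup N(b)$ and $a \in X$, $b \in Y$ with $Y$ a child of $X$, every vertex of $S_0$ lies either in $X$, in $Y$, or in a parent-boundary $bd(Z,X)$ of a child $Z$ of $X$ (or $bd(W,Y)$ of a child $W$ of $Y$). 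Moreover, because $a$ is adjacent to all or to none of $bd(Z,X)$ (and likewise $b$ and $bd(W,Y)$), each such boundary is removed in an all-or-nothing fashion; in particular $S_0$ removes no vertex from any node of $\mathcal{T}$ lying two or more levels below $X$ or $Y$.

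From this I would conclude the structural claim that, for a component $A \neq H^*$, there is a node $Z$ of $\mathcal{T}$ such that $A$ is a connected component either of $G_Z$ (in which case $A = G_Z$, since $G_Z$ is connected) or of $G_Z - bd(Z,P(Z))$ with $P(Z) \in \{X,Y\}$. This follows by combining Observation~\ref{obs:bijection} --- which already identifies $f(A)$ as the partition tree of $A$ with root a subset of a node of $\mathcal{T}$ --- with the all-or-nothing deletion just described: the surviving cut occurs exactly at a child boundary, so the root of $f(A)$ is either an intact node $Z$ or an intact node with precisely its parent-boundary $bd(Z,X)$ (resp. $bd(W,Y)$) deleted, and everything strictly below is untouched.

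Given this, computing a minimum CVD set of $A$ is immediate from the additivity of CVD over connected components: a minimum CVD set of any disjoint union is the disjoint union of minimum CVD sets of its components, so the intersection of a given minimum CVD set with the vertex set of a single component is a minimum CVD set of that component. Hence if $A = G_Z$ I return $OPT(G_Z)$, and if $A$ is a component of $G_Z - bd(Z,P(Z))$ I return $OPT(G_Z - bd(Z,P(Z))) \cap V(A)$. Both of the sets $OPT(G_Z)$ and $OPT(G_Z - bd(Z,P(Z)))$ are available by hypothesis, and the component $A$, the relevant node $Z$, and the intersection are all identifiable by a single traversal of $H$, so the whole computation runs in polynomial time.

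The step I expect to require the most care is the structural claim in the second paragraph, specifically the bookkeeping when $G_Z - bd(Z,P(Z))$ is disconnected: removing $bd(Z,X)$ can split $G_Z$ into the piece carrying the surviving vertices $Z \setminus bd(Z,X)$ together with several intact child subtrees $G_W$ (those children $W$ with $bd(Z,W) \subseteq bd(Z,X)$). I must verify that each resulting piece is genuinely one of these full subtrees or the designated main piece, so that intersecting the precomputed optimum with $V(A)$ is valid; this rests on the fact that no vertex below the child level is deleted. A minor additional subtlety is the treatment of $a$ and $b$ themselves in $S_0$: they must be retained for the edge $ab$ to survive, so the deletion analysis should be read with $S_0$ excluding $\{a,b\}$, which does not affect any of the structural conclusions above.
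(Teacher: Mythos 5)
Your proposal is correct and follows essentially the same route as the paper: both identify, via Observation~\ref{obs:bijection}, the node $R$ of $\mathcal{T}$ hosting the root of $f(A)$ and show that $S_0$ either misses $G_R$ entirely (so $A \cong G_R$) or removes from it exactly $bd(R,P(R))$ (so $A$ is recovered from $OPT(G_R - bd(R,P(R)))$), both of which are supplied as input. Your explicit handling of the case where $G_R - bd(R,P(R))$ is disconnected --- intersecting the precomputed optimum with $V(A)$ --- is in fact slightly more careful than the paper's bare assertion that $A$ is isomorphic to $G_R - bd(R,P(R))$, as is your remark that $a,b$ must be excluded from $N(a)\,\Delta\,N(b)$.
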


\begin{proof}
Recall that $A$ was obtained by deleting $\left(N(a)~\Delta~N(b)\right)$ from $G$, $\mathcal{T}$ is the partition tree of $G$ and root of $\mathcal{T}$ is $X$. Due to Observation~\ref{obs:bijection}, there is a function $f$ between the connected components of $H$ and the connected components of $\mathcal{F}$ such that $f(A)$ is the partition tree of $A$ and there is a node $R\in \mathcal{T}$ such that the vertices in root node of $f(A)$ is a subset of $R$. Now consider the following cases.
\begin{enumerate}
    \item \sloppy Consider the case when $\{a,b\} \cap bd(P(R),R) = \emptyset$.  This implies no vertex of $R$ is adjacent to $a$ or $b$. Moreover, since $A$ is different from $H^*$, $bd(P(R),R) \cap (bd(X,Y)\cup bd(Y,X))=\emptyset$. This further implies that, either $bd(P(R),R)\subseteq N[a]-N[b]$ or $bd(P(R),R)\subseteq N[b]-N[a]$. In either case, $R \cap (N[a]\cup N[b]) = \emptyset$. This implies $R$ is a node of $\mathcal{T}$ distinct from $X$ such that $A$ is isomorphic to $G_R$. Hence, due to the assumption given in Theorem~\ref{thm:(X,Y)-edge}, $OPT(G_R)$ is known and therefore a minimum CVD set of $A$ is known. 
    
    \item Consider the case when there is a vertex $z\in \{a,b\}$ such that $z\in  bd(P(R),R)$. Let $z'$ be the vertex among $a$ and $b$ distinct from $z$. Since $A$ is different from $H^*$,  $z'\not\in bd(P(R),R)$. Hence, $bd(R,P(R)) \subset N(z)$ and therefore $bd(R,P(R)) \subset \left(N(a)~\Delta~N(b)\right)$. This implies that $R$ is a node of $\mathcal{T}$ distinct from $X$ such that $A$ is isomorphic to $G_R - bd(R,P(R))$. Hence, due to the assumption given in Theorem~\ref{thm:(X,Y)-edge}, $OPT(G_R - bd(R,P(R)))$ is known and therefore a minimum CVD set of $A$ is known.
\end{enumerate}
Clearly, distinguishing between the above cases takes $O(|E(G)|)$ time. This completes the proof. 
\end{proof}

Let $H_1,H_2,\ldots, H_{k'}$ be the connected components of $H$, all different from $H^*$. Applying Observation~\ref{obs:connected-component-opt} repeatedly on each component, it is possible to obtain, for each $j\in [k']$, a minimum CVD set $S_j$ of $H_j$. The following observation provides a way to compute a minimum $(R^*)$-CVD set of $H^*$.

\begin{observation}\label{obs:H*}
Let $R$ be a child of $R^*$ in $\mathcal{F}^*$. Then both $OPT(H^*_R)$ and $OPT(H^*_R-bd(R,R^*))$ are known.
\end{observation}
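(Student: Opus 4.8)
The plan is to exhibit, for each child $R$ of $R^*$ in $\mathcal{F}^*$, an \emph{isomorphism} between the relevant subgraph of $H^*$ and a subgraph of $G$ whose optimum value is already available by the hypothesis of Theorem~\ref{thm:(X,Y)-edge}. Recall $H = G - (N(a)\Delta N(b))$ and that $H^*$ is the component containing $a$ and $b$, whose partition tree $\mathcal{F}^*$ has root $R^* = bd(X,Y)\cup bd(Y,X)$. A child $R$ of $R^*$ in $\mathcal{F}^*$ corresponds, via Observation~\ref{obs:bijection}, to a bag of $\mathcal{Q}$ of the form $Z - (N(a)\Delta N(b))$ for some node $Z\in V(\mathcal{T})$ adjacent in $\mathcal{T}$ to either $X$ or $Y$ (but different from $X$ and $Y$ themselves). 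The first step is thus to identify, for each such $R$, the unique originating node $Z$ of $\mathcal{T}$ and to verify that $H^*_R$ (the subtree of $H^*$ rooted at $R$) is precisely $G_Z$ or $G_Z - bd(Z,P(Z))$, depending on how the deletion of $N(a)\Delta N(b)$ interacts with the boundary $bd(Z,P(Z))$.

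The key case analysis mirrors that of Observation~\ref{obs:connected-component-opt}, so I would reuse it almost verbatim. Because $R$ is a descendant of $R^*$ and lies strictly below the root, the node $Z$ and its entire subtree $\mathcal{T}_Z$ are untouched by the deletion, \emph{except possibly} at the boundary $bd(Z,P(Z))$ where $P(Z)\in\{X,Y\}$ or $P(Z)$ lies on the $a$--$b$ path. Concretely: if neither $a$ nor $b$ is adjacent to the vertices of $bd(Z,P(Z))$ — more precisely if $bd(Z,P(Z))\cap(N[a]\cup N[b])=\emptyset$ — then no vertex of $G_Z$ is removed, so $H^*_R$ is isomorphic to $G_Z$ and $OPT(H^*_R)=OPT(G_Z)$, which is given. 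If instead exactly one of $a,b$ is adjacent to all of $bd(Z,P(Z))$, then $bd(Z,P(Z))\subseteq N(a)\Delta N(b)$ is entirely deleted while the rest of $G_Z$ survives, so $H^*_R$ is isomorphic to $G_Z - bd(Z,P(Z))$ and $OPT(H^*_R)=OPT(G_Z-bd(Z,P(Z)))$, again given. The symmetric quantity $OPT(H^*_R - bd(R,R^*))$ is handled identically: deleting the boundary of $R$ toward $R^*$ from $H^*_R$ either corresponds to deleting $bd(Z,P(Z))$ from an untouched $G_Z$ (yielding $OPT(G_Z-bd(Z,P(Z)))$) or, in the case where $bd(Z,P(Z))$ was already removed, leaves $H^*_R$ unchanged so that $bd(R,R^*)=\emptyset$ and the value equals $OPT(G_Z-bd(Z,P(Z)))$ as well.

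The step I expect to be the main obstacle is establishing cleanly that the contraction of the edge $R'R''$ (used to form $\mathcal{F}^*$ with root $R^*$) does not disturb the identification of the children $R$ of $R^*$ with nodes $Z$ of $\mathcal{T}$ — one must check that every child of the \emph{contracted} root $R^*$ in $\mathcal{F}^*$ was already a child of $R'=bd(X,Y)$ or of $R''=bd(Y,X)$ in the uncontracted forest $\mathcal{F}'$, and hence corresponds to a genuine node of $\mathcal{T}$ via Observation~\ref{obs:bijection}. A careful accounting of which bags of $\mathcal{Q}$ become adjacent to $R^*$ after contraction, together with the fact that the deletion of $N(a)\Delta N(b)$ only trims bags and never merges distinct nodes of $\mathcal{T}$, resolves this. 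Once the correspondence $R\leftrightarrow Z$ is pinned down, the remaining verifications are the routine boundary-intersection computations above, so the whole observation follows by invoking the hypothesis of Theorem~\ref{thm:(X,Y)-edge} on the finitely many identified subgraphs.
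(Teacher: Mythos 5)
Your proposal is correct and follows essentially the same route as the paper: identify each child $R$ of $R^*$ with a node $Z$ of $\mathcal{T}$ adjacent to $X$ or $Y$, observe that $Z$ and its whole subtree survive the deletion of $N(a)\,\Delta\,N(b)$ intact so that $H^*_R = G_Z$ and $bd(R,R^*)=bd(Z,P(Z))$, and then invoke the hypothesis of Theorem~\ref{thm:(X,Y)-edge}. The one remark is that your second case (where $bd(Z,P(Z))$ is wholly deleted) is vacuous for children of $R^*$ in $\mathcal{F}^*$ --- such a $Z$ would lose all its edges to $P(Z)$ and hence lie in a component of $H$ other than $H^*$, which is the situation handled by Observation~\ref{obs:connected-component-opt} --- and this is exactly why the paper's proof can assert outright that no vertex of $R$ is adjacent to $a$ or $b$ and conclude that $Z$ is untouched.
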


\begin{proof}

Since no vertex of $R$ is adjacent to $a$ or $b$ in $G$, there must exist a node $Q\in \mathcal{T}$ such that the vertices in the node $Q$ is same as that in $R$, $\mathcal{T}_Q=\mathcal{T}^*_R$ and $G_Q=H^*_R$. Moreover, $bd(R,R^*)=bd(Q,P(Q))$, where $P(Q)$ is the parent of $Q$ in $\mathcal{T}$. Hence, due to the assumption given in Theorem~\ref{thm:(X,Y)-edge}, $OPT(H^*_R - bd(R,R^*)$ is known. 
\end{proof}

Due to Observation~\ref{obs:H*} and Theorem~\ref{thm:private}, it is possible to compute a minimum $(R^*)$-CVD set $S^*$ of $H^*$ in $O(|V(G)|\cdot|E(G)|)$ time. Now due to Lemma~\ref{lem:(X,Y)-CVD-characterisation}, we have that $(S_0 \cup S_1 \cup S_2\cup \ldots \cup S_{k'}  \cup S^*)$ is a minimum $(X,Y)$-CVD set of $G$ that preserves the edge $ab$. This completes the proof of Theorem~\ref{thm:(X,Y)-edge} and therefore of Theorem~\ref{thm:(X,Y)}.
In Algorithm~\ref{alg:(X,Y)-CVD-edge}, we give a short pseudocode of our algorithm to find a minimum $(X,Y)$-CVD set of $G$ that preserves an $(X,Y)$-edge $ab$. Using Algorithm~\ref{alg:(X,Y)-CVD-edge}, in Algorithm~\ref{alg:(X,Y)} we provide a short pseudocode to find a minimum $(X,Y)$-CVD set of $G$.

\medskip
 \begin{minipage}{\linewidth}
 \begin{algorithm}[H] 
 \DontPrintSemicolon
\begin{small}
\SetKwInOut{KwIn}{Input}
    \SetKwInOut{KwOut}{Output}
    \KwIn{A well-partitioned chordal graph $G$, a partition tree $\mathcal{T}$ of $G$ rooted at the node $X$, a child node $Y$, an $(X,Y)$-edge $ab$, for each node $Z\in \mathcal{T}-\{X\}$ both $OPT(G_Z)$ and $OPT(G_Z-bd(Z,P(Z)))$ are given as part of input}
    \KwOut{A minimum $(X,Y)$-CVD set of $G$ that preserves the edge $ab$}
    
    Construct the set $S_0=\left(N(a)~\Delta~N(b)\right)$;
    
    Construct the graph $H=G-\left(N(a)~\Delta~N(b)\right)$;
    
    Let $H^*$ be the connected component of $H$ containing $a$ and $b$. Let $H_1,H_2,\ldots,H_{k'}$ be the remaining connected components of $H$.
    
    \For{$i= 1 \text{ to } k'$ }{
    Compute a minimum CVD set $S_i$ of $H_i$ (Observation~\ref{obs:connected-component-opt});
    }
    
    Find the partition tree of $\mathcal{T}^*$ of $G^*$ whose root is $X^*=bd(X,Y) \cup bd(Y,X)$;
    
    Compute a minimum $(X^*)$-CVD set $S^*$ of $G^*$ using Algorithm~\ref{alg:(X)-CVD};
    
    $Sol=S_0 \cup S_1\cup S_2 \cup \ldots \cup S_{k'} \cup S^*$;
    
    \Return Sol;
    
    
    
   \caption{Pseudocode to find a minimum $(X,Y)$-CVD set of a well-partitioned chordal graph that preserves an $(X,Y)$-edge}
   \label{alg:(X,Y)-CVD-edge}
    
\end{small}
 \end{algorithm}
\end{minipage}

\medskip
 \begin{minipage}{\linewidth}
 \begin{algorithm}[H] 
 \DontPrintSemicolon
\begin{small}
\SetKwInOut{KwIn}{Input}
    \SetKwInOut{KwOut}{Output}
    \KwIn{A well-partitioned chordal graph $G$, a partition tree $\mathcal{T}$ of $G$ rooted at the node $X$, a child node $Y$, for each node $Z\in \mathcal{T}-\{X\}$ both $OPT(G_Z)$ and $OPT(G_Z-bd(Z,P(Z)))$ are given as part of input}
    \KwOut{A minimum $(X,Y)$-CVD set of $G$.}
    
    For each $(X,Y)$-edge $e$, compute a minimum $(X,Y)$-CVD set that preserves the edge $e$ using Algorithm~\ref{alg:(X,Y)-CVD-edge};
    
    Let $S$ be a set among all $S_e$'s that has the least cardinality;
    
    \Return $S$;
    
    
    
   \caption{Pseudocode to find $(X,Y)$-CVD set of a well-partitioned chordal graph.}
   \label{alg:(X,Y)}
    
\end{small}
 \end{algorithm}
\end{minipage}

\subsection{Main Algorithm}\label{sec:main-proof}

From now on $G$ denote a fixed well-partitioned chordal graph with a partition tree $\mathcal{T}$ whose vertex set is $\mathcal{P}$, a partition of $V(G)$. We will process $\mathcal{T}$ in the post-order fashion and for each node $X$ of $\mathcal{T}$, we give a dynamic programming algorithm to compute both $OPT(G_X)$ and $OPT(G_X-bd(X,P(X)))$ where $P(X)$ is the parent of $X$ (when exists) in $\mathcal{T}$. Due to Observation~\ref{obs:contraction}, we can assume that $bd(X,P(X)) \subsetneq X$. In the remaining section, $X$ is a fixed node of $\mathcal{T}$, $A$ has a fixed value (which is either $\emptyset$ or $bd(X,P(X))$), $G^A_X$ denotes the graph $G_X - A$. Since well-partitioned chordal graphs are closed under vertex deletion, $G^A_X$ is a well partitioned chordal graph which may be disconnected. Now consider the partition $\mathcal{P}^A$ defined as $\{Y-A\colon Y \in V(\mathcal{T}_X)\}$. Observe that, apart from the set $X$ all other sets of the partitions $\mathcal{P}$ have remained in $\mathcal{P}^A$. Now construct a graph $\mathcal{T}'$ whose vertex set is the partition sets of $\mathcal{P}^A$ and two vertices $X,Y$ are adjacent in $\mathcal{T}'$ if there is an edge $uv \in E(G^A_X)$ such that $u\in X$ and $v\in Y$ (since the graph induced by the union of the sets in $\mathcal{P}^A$ is $G^A_X$, the definition of $\mathcal{T}'$ is valid). Now we have the following observation whose proof is similar to that of Observation~\ref{obs:bijection}.

\begin{observation}\label{obs:bijection-2}
There is a bijection $f$ between the connected components of $G^A_X$ and the connected components of $\mathcal{T}'$, such that for a component $C$ of $G^A_X$, $f(C)$ is a partition tree of $C$, and the root of $f(C)$ is a child of $X$.
\end{observation}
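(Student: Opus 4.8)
The plan is to mirror the proof of Observation~\ref{obs:bijection}, exploiting the fact that $A\subseteq X$, so the only vertices deleted from $G_X$ lie in the single bag $X$. First I would record the elementary bookkeeping: since $A\subsetneq X$, every bag $Y\in V(\mathcal{T}_X)$ with $Y\neq X$ satisfies $Y-A=Y$, while $X-A\neq\emptyset$. Thus the nodes of $\mathcal{T}'$ are in natural bijection with the nodes of $\mathcal{T}_X$, the node $X$ being replaced by $X-A$ and all other bags left untouched. This identification is the backbone of the argument.

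Next I would show that $\mathcal{T}'$ is a forest. By property (c) of Definition~\ref{def:well-partitioned}, in $G$ there is no edge between two bags that are non-adjacent in $\mathcal{T}$; hence an edge of $G^A_X$ joins two distinct bags of $\mathcal{P}^A$ only when the corresponding nodes are adjacent in $\mathcal{T}_X$. Consequently, under the node identification above, $E(\mathcal{T}')$ maps into $E(\mathcal{T}_X)$, so $\mathcal{T}'$ is (the image of) a subgraph of the tree $\mathcal{T}_X$ on the same node set and is therefore acyclic, i.e. a forest.

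For the bijection, given a component $C$ of $G^A_X$ I would let $f(C)$ be the subgraph of $\mathcal{T}'$ induced by those bags that meet $C$. Because every edge of $G^A_X$ lies inside a bag or between $\mathcal{T}'$-adjacent bags, the bags meeting a connected set $C$ span a connected subgraph of $\mathcal{T}'$, i.e. a single tree; conversely, the union of the bags of any tree component of $\mathcal{T}'$ is connected in $G^A_X$ (adjacent bags carry an edge and each bag is a clique), hence forms one component. This gives the claimed bijection $f$. To see that $f(C)$ is a partition tree of $C$, I would check the three defining conditions are inherited from $\mathcal{T}$: each bag $Y-A$ is a subset of the clique $Y$ and so is a clique; for an edge of $f(C)$ the bipartite graph remains a ``product'' $bd(\cdot,\cdot)\times bd(\cdot,\cdot)$, since deleting $A\subseteq X$ only shrinks the $X$-side factor $bd(X,X_i)$ to $bd(X,X_i)-A$; and non-adjacent bags stay edgeless by property (c). These are exactly the verifications carried over verbatim from Observation~\ref{obs:bijection}.

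Finally, for the roots I would use that only vertices of $X$ are deleted. Each child $X_i$ of $X$ has its subtree $\mathcal{T}_{X_i}$ inducing the untouched, connected graph $G_{X_i}$, which therefore never fragments; it stays attached to $X-A$ precisely when $bd(X,X_i)\not\subseteq A$, and detaches as a whole component equal to $G_{X_i}$ exactly when $bd(X,X_i)\subseteq A$. Hence every component other than the one containing $X-A$ is some $G_{X_i}$, with $f(G_{X_i})=\mathcal{T}_{X_i}$ rooted at the child $X_i$ of $X$, as claimed; the one remaining component is rooted at $X-A$. The only real obstacle is the forest claim, which rests entirely on property (c) forcing edges to occur only between $\mathcal{T}$-adjacent bags; once that is in place, the bijection and the partition-tree property are routine inheritance checks, and the root identification follows from the observation that deletions touch $X$ alone.
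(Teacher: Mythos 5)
Your proposal is correct and matches the paper's approach: the paper gives no separate proof of this observation, stating only that it is ``similar to that of Observation~\ref{obs:bijection}'', and your argument is precisely that adaptation worked out in detail (quotient forest via property (c), component correspondence, inherited partition-tree conditions, root identification using $A\subsetneq X$). Your closing remark that the component containing $X-A$ is rooted at $X-A$ rather than at a child of $X$ is in fact how the paper itself uses the observation afterwards (cf.\ Observation~\ref{obs:T*}), so you have if anything stated the conclusion more accurately than the observation's literal wording.
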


Since the vertices of $X-A$ induces a clique in $G^A_X$, there exists at most one component $G^*$ in $G^A_X$ that contains a vertex from $X-A$.  Due to Observation~\ref{obs:bijection} there exists a unique connected component $f(G^*)=\mathcal{T}^*$ of $\mathcal{T}'$ which is a partition tree of $G^*$. Let the remaining connected components of $G^A_X$ be $G_1,G_2,\ldots, G_k$ and for each $i\in [k]$, let $f(G_i)=\mathcal{T}_i$ and $X_i$ is the root of $\mathcal{T}_i$. Let $X^*$ denote the root node of $\mathcal{T}^*$ and $X_1^*, X_2^*,\ldots,X_t^*$ be the children of $X^*$ in $\mathcal{T}^*$. We have the following observation.

\begin{observation}\label{obs:T*}
    
    
    For each $j\in [t]$, there is a child $Y_j$ of $X$ in $\mathcal{T}$ such that $Y_j=X^*_j$ and $G_{Y_j} = G^*_{X^*_j}$.
\end{observation}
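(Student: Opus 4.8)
The plan is to prove Observation~\ref{obs:T*} by carefully tracking how the children of the root $X^*$ in the contracted/relabeled forest $\mathcal{T}^*$ correspond back to the children of $X$ in the original partition tree $\mathcal{T}$. The key structural fact is that $X^*$, the root of the component $G^*$ of $G^A_X$ containing the surviving vertices $X-A$, is simply the node $X-A$ itself (the only bag of $\mathcal{P}^A$ coming from $X$). So every child $X^*_j$ of $X^*$ in $\mathcal{T}^*$ arises from some bag $Y_j - A$ of $\mathcal{P}^A$ that is adjacent to $X-A$ via an edge of $G^A_X$.

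First I would use the hypothesis $bd(X,P(X)) \subsetneq X$ (justified by Observation~\ref{obs:contraction}) together with the definition of $A \in \{\emptyset, bd(X,P(X))\}$ to argue that $X-A \neq \emptyset$ and that $X-A$ is exactly the root bag $X^*$ of the unique component $G^*$ meeting $X$. Then, by Observation~\ref{obs:bijection-2}, the children of $X^*$ in $\mathcal{T}^*$ are precisely the bags $Y-A$ of $\mathcal{P}^A$ that are adjacent to $X-A$ in $\mathcal{T}'$. By definition of $\mathcal{T}'$, such adjacency means there is an edge $uv \in E(G^A_X)$ with $u \in X-A$ and $v \in Y-A$; since the only bags of $\mathcal{T}$ adjacent to $X$ are its children (by property (c) of Definition~\ref{def:well-partitioned}), $Y$ must be a child $Y_j$ of $X$ in $\mathcal{T}$.

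Next I would verify the two claimed equalities. To see $Y_j = X^*_j$, note that the deletion set $A \subseteq X$ removes no vertex from any child bag $Y$ of $X$, so $Y-A = Y$ for every child $Y$; hence the bag $X^*_j = Y_j - A$ equals $Y_j$ as a vertex set. For the equality $G_{Y_j} = G^*_{X^*_j}$, I would observe that the subtree of $\mathcal{T}$ rooted at $Y_j$ is disjoint from $X$ and from $A \subseteq X$, so deleting $A$ affects none of its bags; thus $\mathcal{T}_{Y_j}$ survives intact as the subtree $\mathcal{T}^*_{X^*_j}$, and the induced subgraphs on their vertex sets coincide. This is essentially the same reasoning used in Observation~\ref{obs:H*}.

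The main obstacle I anticipate is not the equalities themselves (which are bookkeeping) but cleanly establishing that $X^* = X - A$ and that no \emph{spurious} children appear or disappear under the passage from $\mathcal{T}$ to $\mathcal{T}'$. Specifically, I must rule out that two originally distinct children $Y, Y'$ of $X$ get merged, or that a child $Y$ becomes \emph{disconnected} from $X-A$ once $A$ is removed. The first cannot happen because non-adjacent bags of $\mathcal{T}$ have no edges between them (property (c)), so distinct child subtrees stay in distinct components of $\mathcal{T}'$. The second — a child $Y_j$ that is no longer adjacent to $X-A$ after deletion — would simply mean $Y_j$ contributes to one of the \emph{other} components $G_1,\ldots,G_k$ rather than to $G^*$, and such a $Y_j$ is by definition not among the children $X^*_1,\ldots,X^*_t$ of $X^*$; so it does not violate the statement, which only asserts a correspondence for the children that \emph{do} appear in $\mathcal{T}^*$. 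Making this last point precise — that $bd(X,Y_j) \not\subseteq A$ exactly captures when $Y_j$ remains a child of $X^*$ — is the one place requiring genuine care, and I would handle it by invoking property (b) of Definition~\ref{def:well-partitioned} to relate the boundary $bd(X,Y_j)$ to the surviving edges in $G^A_X$.
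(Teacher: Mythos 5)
Your proposal is correct and follows essentially the same route as the paper, whose entire proof is the two-sentence observation that the root of $\mathcal{T}^*$ is $X^*=X-A$ and that, since $A\subsetneq X$, any child of $X^*$ must be a child of $X$; your version simply spells out the bookkeeping (why $Y_j-A=Y_j$, why the subtrees below the children are untouched, and why no children merge or spuriously appear) that the paper leaves implicit.
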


\begin{proof}
Observe that the root of $\mathcal{T}^*$ is $X^*=X-A$. Since $A\subsetneq X$, any child of $X^*$ must be a child of $X$.
\end{proof}

We have the following lemma.

\begin{lemma}\label{lem:optimum}
$OPT(G^A_X) =  \left( \displaystyle\bigsqcup\limits_{i=1}^{k} OPT(G_{X_i}) \right) \sqcup OPT(G^*)$
\end{lemma}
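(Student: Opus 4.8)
The plan is to reduce the statement to the elementary fact that being a disjoint union of cliques is a property local to connected components, so that a minimum CVD set of a disconnected graph is obtained by taking a minimum CVD set on each component independently. Concretely, I would argue that the right-hand side is a CVD set of $G^A_X$ of smallest possible cardinality, and hence a legitimate choice for $OPT(G^A_X)$. All the structural work needed to identify the components has already been done in Observation~\ref{obs:bijection-2} and Observation~\ref{obs:T*}; what remains is essentially an additivity argument plus a small identification step.

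First I would record, using Observation~\ref{obs:bijection-2}, that the connected components of $G^A_X$ are exactly $G^*, G_1, \ldots, G_k$, so their vertex sets are pairwise disjoint and partition $V(G^A_X)$. Since there is no edge of $G^A_X$ between two distinct components, a set $S \subseteq V(G^A_X)$ is a CVD set of $G^A_X$ if and only if $S \cap V(G^*)$ is a CVD set of $G^*$ and $S \cap V(G_i)$ is a CVD set of $G_i$ for every $i \in [k]$; indeed, $G^A_X - S$ is a disjoint union of cliques precisely when each $G^* - (S\cap V(G^*))$ and each $G_i - (S \cap V(G_i))$ is. Restricting a minimum CVD set of $G^A_X$ to each component then yields $|OPT(G^A_X)| \geq |OPT(G^*)| + \sum_{i=1}^{k}|OPT(G_i)|$, while the disjoint union $OPT(G^*) \sqcup \bigsqcup_{i=1}^{k} OPT(G_i)$ is itself a CVD set of $G^A_X$, giving the reverse inequality. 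Hence this disjoint union is a minimum CVD set, and we may take it as $OPT(G^A_X)$.

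The final step is to identify each component $G_i$ with $G_{X_i}$. By Observation~\ref{obs:bijection-2} the root $X_i$ of $f(G_i) = \mathcal{T}_i$ is a child of $X$, so the subtree $\mathcal{T}_{X_i}$ does not contain $X$; since $A \subseteq X$ and the bags of a partition tree are pairwise disjoint, no vertex of any bag of $\mathcal{T}_{X_i}$ lies in $A$. Consequently each node $Y$ of $\mathcal{T}_i$ equals $Y - A$, so $V(G_i) = \bigcup_{Y \in V(\mathcal{T}_{X_i})} Y = V(G_{X_i})$, and deleting $A$ removes nothing on this vertex set, whence $G_i = G_{X_i}$ as induced subgraphs and $OPT(G_i) = OPT(G_{X_i})$. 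Substituting gives the claimed identity. I expect the only delicate point to be this last bookkeeping, namely checking that the bags of $\mathcal{T}_i$ are literally the bags of $\mathcal{T}_{X_i}$ rather than their intersections with $V(G^A_X)$; this hinges entirely on $A \subseteq X$ together with the disjointness of bags, and everything else is the routine additivity of CVD over connected components.
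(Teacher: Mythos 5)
Your proposal is correct and follows essentially the same route as the paper, which disposes of this lemma in one line by noting that $G^*, G_{X_1},\ldots,G_{X_k}$ are precisely the connected components of $G^A_X$ and invoking additivity of minimum CVD sets over components. Your write-up merely makes explicit the two facts the paper leaves implicit — the component-wise additivity argument and the identification $G_i = G_{X_i}$ via $A \subseteq X$ — both of which are accurate.
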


\begin{proof}
The lemma follows directly from the fact that $G_{X_1},G_{X_2},\ldots,G_{X_k}$ and $G^*$ are connected components of $G^A_X$. 
\end{proof}

Due to Observation~\ref{obs:bijection-2}, $OPT(G_{X_i})$ is already known. Due to Lemma~\ref{lem:(X,Y)-CVD-characterisation}, any CVD set $S$ of $G^*$ is either a $(X^*)$-CVD set or there exists a unique child $Y$ of $X^*$, such that $S$ is a $(X^*,Y)$-CVD set of $G^*$. by Theorem~\ref{thm:private}, it is possible to compute a minimum $(R^*)$-CVD set $S_0$ of $G^*$. Due to Observation~\ref{obs:T*}, for any node $Y$ of $\mathcal{T}^*$ which is different from $X^*$, both $OPT(G_{Y})$ and $OPT(G_{Y} - bd(Y,P(Y)))$ are known, where $P(Y)$ is the parent of $Y$ in $\mathcal{T}^*$. Hence, by Theorem~\ref{thm:(X,Y)} for each child $X^*_i$, $i\in [t]$, computing a minimum $(X^*,X^*_i)$-CVD set $S_i$ is possible in $O(|V(G^*_{X^*_i})|\cdot |E(G^*_{X^*_i})|)$ time. Let $S^*\in \{S_0,S_1,S_2,\ldots,S_t\}$ be a set with the minimum cardinality. Due to Lemma~\ref{lem:CVD-characterisation}, $S^*$ is a minimum CVD set of $G^*$ that can be obtained in $O(m^{2}n)$. Finally, due to Lemma~\ref{lem:optimum}, we have a minimum CVD set of $G^A_X$.

\section{$O(n(n+m))$-time algorithm for $s$-CVD on interval graphs}\label{sec:thm-interval}

In this section we shall give an $O(n(n+m))$-time algorithm to solve \textsc{$s$-CVD} on interval graph $G$ with $n$ vertices and $m$ edges. For a set $X \subseteq V(G)$, if each connected component of $G-X$ is an $s$-club, then we call $X$ as an \emph{$s$-club vertex deleting set} ($s$-CVD set). In the next section we present the main ideas of our algorithm to find a minimum cardinality \textsc{$s$-CVD} set of an interval graph.

\subsection{Overview of the algorithm}

In the heart of our algorithm lies a characterisation of \textsc{$s$-CVD} sets of an interval graph. We show (in Lemma~\ref{lem:type}) that any $s$-CVD set must be one of four types, defined in Definitions~\ref{def:type-1}- \ref{def:type-4}. Hence, the problem boils down to computing a minimum $s$-CVD set of each type. To do this, first we arrange the maximal cliques in the order of its \emph{Helly region}. Let $Q_1,Q_2,\ldots,Q_k$ be the ordering of the cliques. Then for each $1\leq a\leq k$, we find minimum cardinality $s$-CVD set of the graph $\Inducedgraph{G}{1}{a}$ which is the subgraph induced by the vertices in $(Q_1\cup Q_2\cup \ldots\cup Q_a)$. Moreover, to facilitate future computations we also find minimum $s$-CVD set of the graph $\Inducedgraph{G}{1}{a} - A$ where $A=Q_a\cap Q_b$ for some $a<b\leq k$. The trick was to show that, by solving \textsc{$s$-CVD} on $O(n+m)$ many different ``induced subgraphs" of $G$, it is possible to solve \textsc{$s$-CVD} on $G$. In other words, by solving $O(n+m)$ many different subproblems, it is possible  to solve \textsc{$s$-CVD} on $G$. Moreover, it is possible to solve a subproblem in $O(n)$ time. In Section~\ref{sec:interval-notation} we define four types of $s$-CVD sets and state that any optimal solution must be one of those four types. In Section~\ref{sec:algorithm} we give a sketch of our algorithm and analyse the time complexity in Section~\ref{sec:interval-time}.

\subsection{Definitions and main lemma}\label{sec:interval-notation}

Let $G$ denotes a connected interval graph with $n$ vertices and $m$ edges. The set $\mathcal{I}$ denotes a fixed interval representation of $G$ where the endpoints of the representing intervals are distinct. Let $\leftend{v}$ and $\rightend{v}$ denote the left and right endpoints, respectively, of an interval corresponding to a vertex $v \in V(G)$. Then the interval assigned to the vertex $v$ in $\mathcal{I}$ is denoted by $\Interval{v}=\left[\leftend{v},\rightend{v}\right]$.

 Observe that, intervals on a real line satisfies the  Helly property and hence  for each maximal clique $Q$ of $G$ there is an interval $I = \displaystyle\bigcap_{v\in Q} \Interval{v}$. We call $I$ as the \helly region corresponding to the maximal clique $Q$. Let $Q_1,Q_2,\ldots,Q_k$ denote the set of maximal cliques of $G$ ordered with respect to their \helly regions $I_a, 1 \leq a \leq k$ on the real line. That is, $I_1 < I_2 <\ldots < I_k$. Observe that, for any two integers $a,b$ we have $I_a \cap I_b = \emptyset$ as both $Q_a$ and $Q_b$ are maximal cliques. Moreover, for any $a\leq b \leq c$ if a vertex $v\in Q_a\cap Q_c$, then $v\in Q_b$.\\
 
With respect to an ordering of maximal cliques $Q_1,Q_2,\ldots,Q_k$ of $G$, we define the following. \\
\begin{definition} \label{def:clique_ordering}
\begin{enumerate} [label=(\roman*)]
    \item For integers a,b where $1\leq a < b \leq k$, let $\Sep{a}{b} = Q_a\cap Q_b$.
    \item For an integer $a$, let $$\Allseparate{Q_a}=\left\{ \Sep{a}{b} \colon a < b \leq k \text{ and } \Sep{a}{b} \neq \Sep{a}{b'}, a < b'< b \right \} \cup \emptyset$$.
    {\footnotesize(Note that, the members of the set $\Allseparate{Q_a}$ are distinct.) }
 \item  For $ A \in \Allseparate{Q_a}$, let $Y_A^a = (Q_a-Q_{a-1})-A$.
 \item For a vertex $v\in V(G)$, the index $\minIndex{v}=\min\{a\colon v\in Q_a\}$. That is, the minimum integer $a$ such that $v$ belongs to the maximal clique $Q_a$.
 \item  For a vertex $v\in V(G)$, the index $ \maxIndex{v} = \max \{a \colon v\in Q_a\}$. That is, the maximum integer $b$ such that $v$ belongs to the maximal clique $Q_b$.
\end{enumerate}

\end{definition}

  We use the following observation to prove our main lemma.

 \begin{observation}\label{obs:consecutive_seprataion}

 Let $X\subseteq V(G)$ and $u,v$ be two vertices with $\rightend{u} < \leftend{v}$ such that  $u$ and $v$ lie in different connected components in $G-X$.  Then there exists an integer $a$ with $ \maxIndex{u} \leq a < \minIndex{v}$, such that  $\Sep{a}{a+1} \subseteq X$.
\end{observation}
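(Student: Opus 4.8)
The plan is to argue by contradiction: assuming that $\Sep{a}{a+1}\not\subseteq X$ for every integer $a$ in the relevant range, I would exhibit a path from $u$ to $v$ lying entirely in $G-X$, contradicting the hypothesis that $u$ and $v$ belong to different connected components of $G-X$.

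First I would pin down the relevant range by establishing that $\maxIndex{u}<\minIndex{v}$. Since $u\in Q_{\maxIndex{u}}$, the \helly region $I_{\maxIndex{u}}=\bigcap_{w\in Q_{\maxIndex{u}}}\Interval{w}$ is contained in $\Interval{u}$, so its right endpoint is at most $\rightend{u}$. Symmetrically, $I_{\minIndex{v}}\subseteq \Interval{v}$, so its left endpoint is at least $\leftend{v}$. Because $\rightend{u}<\leftend{v}$, the region $I_{\maxIndex{u}}$ lies strictly to the left of $I_{\minIndex{v}}$; since the regions are ordered as $I_1<I_2<\cdots<I_k$, this forces $\maxIndex{u}<\minIndex{v}$. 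In particular, the set of integers $a$ with $\maxIndex{u}\le a<\minIndex{v}$ is nonempty, so at least one candidate $a$ exists.

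The heart of the argument is the path construction. Suppose, for contradiction, that $\Sep{a}{a+1}=Q_a\cap Q_{a+1}\not\subseteq X$ for every $a$ with $\maxIndex{u}\le a<\minIndex{v}$. For each such $a$ I would choose a vertex $x_a\in (Q_a\cap Q_{a+1})\setminus X$. Since $u,v\notin X$ (both being vertices of $G-X$), I would then verify that
$$u,\; x_{\maxIndex{u}},\; x_{\maxIndex{u}+1},\;\ldots,\; x_{\minIndex{v}-1},\; v$$
is a walk in $G-X$: consecutive terms always share a common maximal clique — namely $u$ and $x_{\maxIndex{u}}$ both lie in $Q_{\maxIndex{u}}$; for each $a$, $x_a$ and $x_{a+1}$ both lie in $Q_{a+1}$; and $x_{\minIndex{v}-1}$ and $v$ both lie in $Q_{\minIndex{v}}$ — hence each consecutive pair is adjacent, and none of the listed vertices lies in $X$. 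This connects $u$ to $v$ in $G-X$, the desired contradiction, so some $a$ in the range must satisfy $\Sep{a}{a+1}\subseteq X$.

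I expect the only delicate point to be the clean derivation of $\maxIndex{u}<\minIndex{v}$ from $\rightend{u}<\leftend{v}$, which relies on the containment of each \helly region in the interval of every vertex of its clique together with the ordering $I_1<\cdots<I_k$. Once that is in place, the path construction is routine, using nothing more than the defining memberships $x_a\in Q_a\cap Q_{a+1}$ and the fact that vertices lying in a common maximal clique are pairwise adjacent.
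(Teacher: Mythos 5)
Your proof is correct, but it takes a genuinely different route from the paper's. The paper argues directly: it orders the connected components of $G-X$ by their spans on the real line, observes that every vertex whose interval meets the ``gap'' between two consecutive components must belong to $X$, picks the vertex $p$ with the rightmost right endpoint strictly before the span of $v$'s component, and verifies that $a=\maxIndex{p}$ satisfies both $\maxIndex{u}\leq a<\minIndex{v}$ and $\Sep{a}{a+1}\subseteq X$ by locating the \helly regions of $Q_a$ and $Q_{a+1}$ relative to that gap. You instead argue by contraposition: if every consecutive separator $\Sep{a}{a+1}$ in the range retains a survivor $x_a\notin X$, these survivors chain into a $u$--$v$ walk in $G-X$ because consecutive picks share the clique $Q_{a+1}$, contradicting the disconnection hypothesis. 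Your route is more elementary and self-contained --- it avoids the span/gap machinery entirely, and it is robust even to empty separators (an empty $\Sep{a}{a+1}$ trivially satisfies the conclusion, so your contradiction hypothesis guarantees each $x_a$ exists). The one delicate point you flag, deriving $\maxIndex{u}<\minIndex{v}$ from $\rightend{u}<\leftend{v}$ via containment of each \helly region in the intervals of its clique's vertices, is handled correctly and mirrors the corresponding step (i) in the paper's proof. The paper's constructive choice of $a$ has the mild advantage of exhibiting a specific witness tied to the component structure, which is in the spirit of how the observation is later applied, but nothing in those applications requires that particular witness, so your argument is a valid substitute.
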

 
\begin{proof}

Let $\mathcal{C}$ be the set of all connected components of $G-X$. For a connected component $C\in \mathcal{C}$, define $\hat{r}(C)=\max\{\rightend{v}\colon v\in C\}$ and $\hat{l}(C)=\min\{\leftend{v} \colon v\in C\}$. Note that the interval $[\hat{l}(C),\hat{r}(C)] = \bigcup\limits_{v \in V(C)}I(v)$ and we call it as the \emph{span}($C$). Observe that for two distinct connected components $C, C' \in \mathcal{C}$ we have $\emph{span}(C)\cap \emph{span}(C') = \emptyset$. Therefore, $\mathcal{C}$ can be ordered with respect to the order in which the span of components appears on the real line. Let $C_1,\ldots,C_x$ be this ordering. We define \emph{gap}$(C_i,C_{i+1}) =(\hat{r}(C_i),\hat{l}(C_{i+1})), 1 \leq i \leq x-1$. Note that any vertex whose corresponding interval contains a point in \emph{gap}$(C_i,C_{i+1})$ should be a member of $X$: otherwise that vertex belongs to another component in between $C_i$ and $C_{i+1}$ (by definition of \emph{gap}$(C_i,C_{i+1})$) which contradicts the ordering of components. Let $C^u = C_t$ and $C^v=C_{t'}$ denote the connected components of $G-X$ that contain $u$ and $v$, respectively. Since $\rightend{u} < \leftend{v}$, we have $t < t'$.

Let $p\in V(G)$ be such that $\rightend{p} = \max\{\rightend{w}\colon w\in V(G), \rightend{w}<\hat{l}(C^v)\}$. 
Now take $a = \maxIndex{p}$, the maximum index $i$ such that $p \in Q_i, 1 \leq i \leq k$. For the index $a$, we will show that $\maxIndex{u} \leq a < \minIndex{v}$ and  $\Sep{a}{a+1} \subseteq X$.\\

\textbf{(i) $\maxIndex{u} \leq a < \minIndex{v}$}: It is immediate from the definition of $\rightend{p}$ that $\rightend{u} \leq \hat{r}(C^{u}) \leq \rightend{p}$ and $\rightend{p} <\hat{l}(C^v) \leq \leftend{v}$. Since $\rightend{p} <\leftend{v}$, observe that  the \helly region corresponding to the clique containing the vertex $p$ come before that of $v$ on the real line. Moreover, 
since the maximal cliques are numbered with respect to  the order in which their  \helly regions appear on the real line, we can infer that $ \maxIndex{p}=a < \minIndex{v}$. Similarly, since $\rightend{u} \leq \rightend{p}$, by similar arguments as above, we have $\maxIndex{u} \leq a$. Therefore we have proved $\maxIndex{u} \leq a < \minIndex{v}$ \\

\textbf{(ii)$\Sep{a}{a+1} \subseteq X$}: Consider the component $C_{t'-1}$ which comes in the immediate left of $C^v$ in the ordering of the components in $\mathcal{C}$. Since $\rightend{p} <\hat{l}(C^v)$, the \helly region of $Q_a$ ends before $span(C^v)$. Observe that $\rightend{p} \geq \hat{r}(C_{t'-1})$. Moreover, the \helly region of $Q_{a+1}$ starts after that of  $Q_{a}$. Since $p \notin Q_{a+1}$ by definition of $a$ it follows that \helly region of $Q_{a+1}$ is after the  $span(C_{t'-1})$. Therefore, the intervals corresponding to those vertices common to both $Q_a$ and $Q_{a+1}$ contain some points of \emph{gap}$(C_{t'-1},C^v)$. This implies $\Sep{a}{a+1} \subseteq X$. 
\end{proof}

For two integers $a,b$ with $1\leq a\leq b\leq k$, let $\Inducedgraph{G}{a}{b}$ denotes the subgraph induced by the set $\{Q_a \cup Q_{a+1} \cup \ldots \cup Q_b\}$. 

\begin{definition}
For an induced subgraph $H$ of $G$, a vertex $v\in V(H)$ and an integer $a$, let $\Level{H}{a}{v}$ denote the set of vertices in $H$ that lie at distance $a$ from $v$ in $H$. 
\end{definition}

In the remainder of this section, we  use the notation $\Level{H}{s+1}{v}$ where $H=\Inducedgraph{G}{1}{a}- A$ for some integer $a$ and $v\in Y^a_A$  (See Definition \ref{def:clique_ordering}, (iii)) several times.

\begin{definition}
For an integer $a, 1\leq a\leq k-1$ and a set $A \in \Allseparate{Q_a}$ consider the induced subgraph $H=\Inducedgraph{G}{1}{a}- A$ and the sub-interval representation $\mathcal{I}'\subseteq \mathcal{I}$ of $H$. We define the ``frontal component" of the induced graph  as the connected component of $\Inducedgraph{G}{1}{a}- A$ containing the vertex with the rightmost endpoint in $\mathcal{I}'$. 
\end{definition}

Note that for an integer $a$ and $A\in \Allseparate{Q_a}$, the vertices of $Y_A^a$, if any,  lies in the \emph{frontal} component of $\Inducedgraph{G}{1}{a}- A$. Below we categorize an $s$-CVD set $X$ of $\Inducedgraph{G}{1}{a}-A$ into four types. In the following definitions, we consider an integer $a, 1<a\leq k$ and a set $A\in \Allseparate{Q_a}$.
\begin{definition}\label{def:type-1}
 An $s$-CVD set $X$ of $\Inducedgraph{G}{1}{a}-A$ is of ``type-$1$'' if  $Y_A^a \subseteq X$.  \end{definition}



\begin{definition}\label{def:type-2}
 An $s$-CVD set $X$ of $H=\Inducedgraph{G}{1}{a}-A$ is of ``type-$2$'' if there is a vertex $v\in Y_A^a$ such that $\Level{H}{s+1}{v} \subseteq X$.  
\end{definition}

\begin{definition}\label{def:type-3}
 An $s$-CVD set $X$ of $H=\Inducedgraph{G}{1}{a}-A$ is of ``type-$3$'' if there exists an integer $c, 1\leq c < a$ such that $\Sep{c}{c+1}-A \subseteq X$ and $\Inducedgraph{G}{c+1}{a}-(\Sep{c}{c+1}\cup A)$ is connected and has diameter at most $s$.
 \end{definition}

\begin{definition}\label{def:type-4}
 An $s$-CVD set $X$ of $H=\Inducedgraph{G}{1}{a}-A$ is of ``type-$4$'' if there exists an integer $c, 1\leq c < a$ such that $\Sep{c}{c+1}-A \subseteq X$ and $\Inducedgraph{G}{c+1}{a}-(\Sep{c}{c+1}\cup A)$ is connected and has diameter exactly $s+1$.
 \end{definition}

The following lemma is crucial for our algorithm.

\begin{lemma}[Main Lemma]\label{lem:type}
Consider an integer $1\leq a\leq k$ and a set $A\in \Allseparate{Q_a}$. Then at least one of the following holds:
\begin{enumerate}
    \item Every connected component of $\Inducedgraph{G}{1}{a}-A$ have diameter at most $s$.
    
    \item Any $s$-CVD set of $\Inducedgraph{G}{1}{a}-A$ is of some type-$j$ where $j\in \{1,2,3,4\}$.
\end{enumerate}
\end{lemma}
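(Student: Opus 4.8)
The plan is to prove the lemma by a dichotomy on the diameter of the frontal component, and then—within the case where the diameter is too large—to extract one of the four structural types. First I would observe that the non-frontal connected components of $\Inducedgraph{G}{1}{a}-A$ are of no concern: by the ordering of the cliques by their \helly regions, each such component lies entirely to the left of the frontal component and is itself an induced subgraph of the form $\Inducedgraph{G}{c}{c'}$ for some indices; crucially, any $s$-CVD set $X$ must already make each of these an $s$-club, so if some non-frontal component has diameter exceeding $s$ then $X$ meets it nontrivially and the analysis reduces to that component. So the real content is the frontal component, which is the one containing $Y_A^a$.

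Next I would set up the case split. Let $H=\Inducedgraph{G}{1}{a}-A$ and let $F$ be its frontal component. \textbf{Case A:} every connected component of $H$ has diameter at most $s$. Then conclusion (1) holds and we are done. \textbf{Case B:} some component has diameter greater than $s$; as noted, I can assume this is the frontal component $F$ (after handling non-frontal ones by induction/reduction as above). Now fix an arbitrary $s$-CVD set $X$ of $H$ and examine the vertices of $Y_A^a$, which by the remark preceding the lemma all lie in $F$. The goal is to show $X$ is of type-$1$, $2$, $3$, or $4$. I would argue as follows: if $Y_A^a\subseteq X$ we are in type-$1$. Otherwise pick a vertex $v\in Y_A^a\setminus X$. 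Since $v$ survives in $G-X$, consider its component $C_v$ and ask about $\Level{H}{s+1}{v}$, the vertices at distance exactly $s+1$ from $v$ in $H$. If all of these are deleted, i.e. $\Level{H}{s+1}{v}\subseteq X$, then $X$ is of type-$2$.

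The main work is the remaining subcase: $v\in Y_A^a\setminus X$ and some vertex $w$ at distance $s+1$ from $v$ survives (so $w\notin X$). Here I would use the linear interval structure together with Observation~\ref{obs:consecutive_seprataion}. Because $v$ lies near the right end (in $Y_A^a$) and $w$ is at distance $s+1$, the vertex $w$ must lie to the left; if $v$ and $w$ are in different components of $G-X$, then since their intervals are separated in order, Observation~\ref{obs:consecutive_seprataion} hands me an index $c$ with $\Sep{c}{c+1}\subseteq X$, and the rightmost such cut isolates a connected piece $\Inducedgraph{G}{c+1}{a}-(\Sep{c}{c+1}\cup A)$ that, being an $s$-club component containing $v$, has diameter either at most $s$ (type-$3$) or exactly $s+1$—and I must rule out diameter $\geq s+2$. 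The key structural fact I expect to need is that in an interval graph the distance from a fixed vertex grows at most by one as the \helly-region index advances, so that the "frontal" $s$-club piece obtained after a separator cut cannot have diameter exceeding $s+1$; this caps the type-$4$ case at $s+1$ and is exactly where the linear order of cliques is indispensable. The hard part will be showing that a separator $\Sep{c}{c+1}$ always appears inside $X$ whenever the survivors force $v$ into a bounded-diameter component while the full frontal component has diameter $>s$—i.e. converting the diameter obstruction into a clean consecutive-clique cut—and then verifying that the resulting piece's diameter is pinned to exactly $s$ or $s+1$ rather than larger, which is the step I would spend the most care on.
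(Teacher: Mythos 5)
Your overall architecture matches the paper's: dispose of type-$1$ ($Y_A^a\subseteq X$) and type-$2$ ($\Level{H}{s+1}{v}\subseteq X$) first, and when some vertex at distance $s+1$ from $v$ survives, apply Observation~\ref{obs:consecutive_seprataion} to $X'=X\cup A$ to extract a consecutive separator $\Sep{b}{b+1}\subseteq X'$, take the rightmost such index $b$, and sort into type-$3$ or type-$4$ according to the diameter of $H_b=\Inducedgraph{G}{b+1}{a}-(\Sep{b}{b+1}\cup A)$. But the crux --- that $H_b$ is connected and has diameter at most $s+1$ --- is exactly what you defer, and the tool you propose for it does not suffice. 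The ``distance grows by at most one per \helly index'' property is a statement about $G$ alone; it can only bound the diameter of $H_b$ in terms of the number of cliques between $Q_{b+1}$ and $Q_a$, which has nothing to do with $s$. Moreover $\Inducedgraph{G}{c+1}{a}-(\Sep{c}{c+1}\cup A)$ is \emph{not} a component of $H-X$ and is not an $s$-club (it still contains the remaining vertices of $X$), so no purely geometric fact pins its diameter at $s+1$. The argument has to use $X$: if $H_b$ contained a vertex at distance $s+2$ from $v$, every such vertex would have to lie in $X$ (a survivor at that distance sits in a different component of $H-X$ than $v$, and Observation~\ref{obs:consecutive_seprataion} would then produce a separator $\Sep{c}{c+1}\subseteq X'$ with $c>b$, contradicting the maximality of $b$); taking the highest-indexed clique meeting distance level $s+2$, its separator with the next clique is forced into $X$, again contradicting maximality. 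The same maximality-of-$b$ argument, not interval geometry, is what gives connectivity of $H_b$ (the Claim in the paper's proof), which you assert without justification even though the type-$3$/type-$4$ definitions require it.

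Two smaller points. Your reduction of non-frontal large-diameter components ``to that component'' does not lead to any of the four types, since all four are defined relative to $Y_A^a$ and to separators $\Sep{c}{c+1}$ of the frontal part; the correct observation is that whenever the frontal component has diameter at most $s$ the set $\Level{H}{s+1}{v}$ is empty, so every $s$-CVD set is vacuously of type-$2$ no matter what the other components look like. Also, the paper chooses the surviving witness $u\in\Level{H}{s+1}{v}-X$ with $\maxIndex{u}$ maximum before invoking Observation~\ref{obs:consecutive_seprataion}; this choice feeds into the maximality argument sketched above, and your outline omits it.
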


\begin{proof}
Assume that the \emph{frontal} component of $H=\Inducedgraph{G}{1}{a}-A$ has diameter at least $s+1$ and the set $Y_A^a \neq \emptyset$. Otherwise, any $s$-CVD set $X$ of $H$ is of either type-$1$  or type-$2$: Type-$1$ is obvious when $Y_A^a = \emptyset$ because $\emptyset \subseteq X$. If the diameter  of \emph{frontal} component is at most $s$ then the set  $\Level{H}{s+1}{v} = \emptyset$ and hence any $s$-CVD set of $H$ is of type-$2$.\\
Let $H$ has an $s$-CVD set $X$ that is not of type-$j$ for any $j\in \{1,2\}$ and $v$ be a vertex in $Y_A^a$. Since $X$ is not of type-2, $H-X$ contains a vertex $u$ such that $u\in \Level{H}{s+1}{v}$. Now choose a vertex $u\in \Level{H}{s+1}{v}$ such that $\maxIndex{u} = \max\{\maxIndex{u'} \colon u'\in \Level{H}{s+1}{v}-X \} $.\\

Let $X'=X \cup A$. Then observe that $G-X' = H-X$ and hence, $ u \in G-X'$. Since $X$ is an $s$-CVD set of $H$  and the distance between $u$ and $v$ in $H$ is $s+1$, the vertices $u$  and $v$ must lie in different connected components in $G-X'$. Therefore, by Observation~\ref{obs:consecutive_seprataion}, there is an integer $b$ such that $\Sep{b}{b+1} \subseteq X'$ and $\maxIndex{u} \leq b < \minIndex{v}$. Let $b$ be the maximum among all $b'$ such that 
$\maxIndex{u} \leq b' < \minIndex{v}$ and $\Sep{b'}{b'+1} \subseteq X'$. Note that $\Sep{b}{b+1} \subseteq X'$ implies $\Sep{b}{b+1} -A \subseteq X$. To complete the proof we need the following claim.
\begin{myclaim}\label{clm:connected}
Let $Y$ be a subset of $H$ such that $\Sep{b}{b+1} \subseteq Y \subseteq X$ where $b$ is the maximum among all $b'$ such that 
$\Sep{b'}{b'+1} \subseteq X$. Then $\Inducedgraph{G}{b+1}{a}-(Y\cup A)$ is connected.
\end{myclaim}
\begin{proof}[Proof of Claim:]
Suppose $\Inducedgraph{G}{b+1}{a}-(Y\cup A)$ is not connected. Let $Z=Y\cup A$ and $C_v$ be the connected component containing a vertex $v \in Q_a$ (Note that $Y_A^a \neq \emptyset)$) in $\Inducedgraph{G}{b+1}{a}-Z$.  Since $\Inducedgraph{G}{b+1}{a}-Z$ is not connected, there exists a vertex $u' \in G-Z$ such that $u' \not \in C_v$. Let $C_{u'}$ be the connected component containing $u'$. Observe that $\maxIndex{u'} < \minIndex{v} = a$ and $G-Z$ is also not connected. Hence by Observation \ref{obs:consecutive_seprataion}, there exists an integer $b^*$ such that $\Sep{b^*}{b^*+1} \subseteq Z$ and $\maxIndex{u'} \leq b^* < \minIndex{v}$. Since $u' \in \Inducedgraph{G}{b+1}{a}- Z$, the index $\maxIndex{u'} > b$. Thus it follows that $b  < \maxIndex{u} \leq b^* < \minIndex{v}$, which contradicts the maximality of the index $b$.
\end{proof}
Let $H_b = \Inducedgraph{G}{b+1}{a}-(\Sep{b}{b+1}\cup A)$.  Now we show that $H_b$ has diameter at most $s+1$. Otherwise, $H_b$ contains vertices that are at distance greater than $s+1$ from the vertex $v$. Let $Q_{b''}$ be the highest indexed maximal clique containing a vertex $x$ such that distance between $x$ and $v$ in $H_b$ is exactly $s+2$. Observe that $b'' >b$. Now we show that $\Sep{b''}{b''+1} \subseteq X$ which contradicts the maximality of $b$ (See the definition of $b$ defined in the above paragraph.) \\

For that, since $\Sep{b''}{b''+1} \subseteq Q_{b''+1}$,  the maximality of $b''$ implies that the vertices in  $\Sep{b''}{b''+1}$ are at distance $s+1$ from $v$ in $H_b$. Note that by the above claim, the induced subgraphs  $H_b$ and  $\Inducedgraph{G}{b+1}{a}-(X \cup A)$ are connected. Moreover, since $X$ is an $s$-CVD set of $H=\Inducedgraph{G}{1}{a}-A$, when $\Sep{b}{b+1} -A \subseteq X$ all vertices at distance greater than $s+1$ from the vertex $v$ in $H_b$ must be in $X - (\Sep{b}{b+1}-A)$. Therefore, $\Sep{b''}{b''+1} \subseteq X$ and $\Sep{b''}{b''+1} \cup A \subseteq X'$. This contradicts the maximality of $b$.
If the diameter of $H_b$  is exactly $s+1$, then $X$ is of type-$4$. Otherwise, $X$ is of type-$3$. 
\end{proof}

\subsection{Some more observations}

Let $H$ be an induced subgraph of $G$ and $u,v$ be two vertices of $H$. The distance between $u$ and $v$ in $H$ is denoted by $\dist{H}{u}{v}$. 

\begin{observation}\label{obs:paired-distance}
Consider two integers $a,b$ with $1\leq a<b\leq k$ and a set $A \in \Allseparate{Q_b}$. Let $H = \Inducedgraph{G}{1}{b}-A$ and $u,v,w$ be three vertices of $H$ such that $\{u,v\} \subseteq Q_{b} - Q_{b-1}$ and $w \in Q_a$. Then $\dist{H}{u}{w} = \dist{H}{v}{w}$.
\end{observation}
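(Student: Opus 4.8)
The plan is to show that inside $H=\Inducedgraph{G}{1}{b}-A$ the vertices $u$ and $v$ are \emph{true twins}, that is $N_H[u]=N_H[v]$, and then to invoke the standard fact that true twins are equidistant from every other vertex. If $u=v$ there is nothing to prove, so I would assume $u\neq v$.

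First I would pin down which of the cliques $Q_1,\dots,Q_b$ contain $u$ and $v$. Since $u\in Q_b-Q_{b-1}$, the vertex $u$ lies in $Q_b$ but not in $Q_{b-1}$; by the consecutivity property of interval graphs (for $a'\le b'\le c'$, a vertex in $Q_{a'}\cap Q_{c'}$ also lies in $Q_{b'}$), the cliques containing $u$ form a contiguous block, so $\minIndex{u}<b$ would force $u\in Q_{b-1}$, a contradiction; hence $\minIndex{u}=b$. Thus among $Q_1,\dots,Q_b$ the only clique containing $u$ is $Q_b$, and the same holds for $v$. Note that $u$ and $v$ may still lie in cliques $Q_{b'}$ with $b'>b$, but since $u,v\in V(H)$ they are not deleted by $A$, and those right-hand cliques play no role inside $H$.

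Next I would compute the closed neighbourhoods and claim $N_H[u]=Q_b\cap V(H)=N_H[v]$. For the nontrivial inclusion, take $z\in V(H)$ adjacent to $u$ in $H$; then $z$ and $u$ share some maximal clique $Q_c$ of $G$. If $c\le b$, then $u\in Q_c$ together with $\minIndex{u}=b$ forces $c=b$, so $z\in Q_b$. If $c>b$, then $\minIndex{z}\le b\le c\le\maxIndex{z}$, and consecutivity again gives $z\in Q_b$; since $u\in Q_b$ as well, in both cases $z\in Q_b\cap V(H)$. The reverse inclusion is immediate, as every vertex of $Q_b\cap V(H)$ shares the clique $Q_b$ with $u$. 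Applying the identical argument to $v$ yields $N_H[v]=Q_b\cap V(H)$, whence $N_H[u]=N_H[v]$.

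Finally I would run the twin--distance argument. Since $w\in Q_a$ with $a<b=\minIndex{u}=\minIndex{v}$, we have $w\neq u$ and $w\neq v$. Write $d_u=\dist{H}{u}{w}$ and $d_v=\dist{H}{v}{w}$, and take a shortest $w$-$u$ path in $H$ with penultimate vertex $z\in N_H(u)$. Because $N_H[u]=N_H[v]$, either $z=v$, giving a shorter $w$-$v$ path and hence $d_v<d_u$, or $z\neq v$ with $z\sim v$, so replacing the final vertex $u$ by $v$ produces a $w$-$v$ walk of length $d_u$ and hence $d_v\le d_u$; in either case $d_v\le d_u$. By the symmetric argument $d_u\le d_v$, so $\dist{H}{u}{w}=\dist{H}{v}{w}$, as required. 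The one step I would be most careful about is the neighbourhood computation: one must verify that an edge from $u$ to a vertex of $H$ cannot be created \emph{solely} through a clique $Q_c$ with $c>b$ lying outside $H$, which is precisely where consecutivity of the cliques is used.
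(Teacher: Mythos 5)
Your proof is correct and rests on the same mechanism as the paper's: the paper takes a shortest $u$--$w$ path, observes that the neighbour of $u$ on it lies in $Q_b\cap Q_{b-1}$ and is therefore adjacent to $v$, and swaps $u$ for $v$. You merely package this exchange as a true-twins argument, first verifying $N_H[u]=Q_b\cap V(H)=N_H[v]$ via consecutivity of the maximal cliques (which is in fact a slightly more careful justification than the paper's terse parenthetical), so the two proofs are essentially the same.
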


\begin{proof}
Suppose for contradiction that $\dist{H}{u}{w} \neq \dist{H}{v}{w}$. Without loss of generality assume that $\dist{H}{u}{w} < \dist{H}{v}{w}$. Let $P$ be a shortest path between $u$ and $w$ in $H$ and $u'$ be the vertex in $P$ which is adjacent to $u$. Observe that $u'\in Q_b\cap Q_{b-1}$(this is because: $u$  is not intersecting with the \helly region of $Q_{b-1}$,  $a < b$ in the ordering and P is a shortest path). Therefore $u'$ is adjacent to $v$ and  $P'=(P-\{u\})\cup \{v\}$ is a path between $v$ and $w$ such that $\dist{H}{v}{w} \leq |P'|=|P| = \dist{H}{u}{w}$, a contradiction. 
\end{proof}

\begin{observation}\label{obs:frontal component}
Let $C_f^H$  be the \emph{frontal} component of $H=\Inducedgraph{G}{1}{a}-A^{*}, A^{*} \subseteq V(G)$. Let $Y_{A^*}^a= (Q_a-Q_{a-1})-A^*$. If $Y_{A^*}^a \neq \emptyset$ then any vertex $v \in Y_{A^*}^a$ is an end vertex of a diametral path (a shortest path whose length is equal to the diameter of a graph) of $C_f^H$. 
\end{observation}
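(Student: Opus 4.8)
The plan is to show that $v$ realises the diameter of the frontal component $C_f^H$, so that a shortest path from a suitable far vertex to $v$ is a diametral path with $v$ as an endpoint. Write $d$ for the diameter of $C_f^H$ and recall that, since $C_f^H$ is a connected component of $H=\Inducedgraph{G}{1}{a}-A^{*}$, distances inside $C_f^H$ agree with distances in $H$. If $d\le 1$ the statement is immediate: when $d=0$ the component is the single vertex $v$, and when $d=1$ it is a clique containing $v$, so $v$ together with any neighbour is a diametral path. So assume $d\ge 2$ and fix a diametral pair $(p,q)$ of $C_f^H$ with $\maxIndex{p}\le \maxIndex{q}$. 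Since $\dist{C_f^H}{p}{q}=d\ge 2$, the vertices $p,q$ share no clique, hence their intervals are disjoint; together with $\maxIndex{p}\le \maxIndex{q}$ this forces $\rightend{p}<\leftend{q}$, i.e. $p$ lies to the left of $q$ and $\maxIndex{p}<\minIndex{q}$. In particular $\maxIndex{p}<a$, since otherwise $p\in Q_a$, forcing $q\in Q_a$ and $\dist{C_f^H}{p}{q}=1$.

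The core of the argument is the claim that $\dist{C_f^H}{p}{v}=d$; granting this, a shortest $p$--$v$ path has length equal to the diameter and is the desired diametral path ending at $v$. Since $v\in Q_a-Q_{a-1}$ lies in $C_f^H$ (by the remark preceding Lemma~\ref{lem:type}) we have $\dist{C_f^H}{p}{v}\le d$, so I would suppose for contradiction that a shortest $p$--$v$ path $p=u_0,u_1,\dots,u_\ell=v$ has length $\ell<d$. First note that every neighbour of $v$ in $H$ lies in $Q_a$: as $v\in Q_a-Q_{a-1}$, the only clique among $Q_1,\dots,Q_a$ containing $v$ is $Q_a$ (otherwise consecutiveness would place $v$ in $Q_{a-1}$), so $u_{\ell-1}\in Q_a$. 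I then split on the position of $q$. If $\maxIndex{q}=a$, then $q\in Q_a$, so $q$ is adjacent to $u_{\ell-1}\in Q_a$, and $p=u_0,\dots,u_{\ell-1},q$ is a $p$--$q$ walk of length $\ell<d$, contradicting $\dist{C_f^H}{p}{q}=d$. (When $q\in Q_a-Q_{a-1}$ this also follows at once from Observation~\ref{obs:paired-distance}, which makes $q$ interchangeable with $v$.)

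The remaining and most delicate case is $\maxIndex{q}<a$, so that $p,q,v$ appear strictly left to right with $\rightend{p}<\leftend{q}\le \rightend{q}<\leftend{v}$. Here I would exploit the linear interval structure: the consecutive intervals $\Interval{u_0},\dots,\Interval{u_\ell}$ pairwise overlap along the path, so their union is a single interval containing both $\rightend{u_0}=\rightend{p}$ and $\leftend{u_\ell}=\leftend{v}$, and hence covers the whole segment $[\rightend{p},\leftend{v}]\supseteq[\leftend{q},\rightend{q}]$. Consequently some $u_j$ contains the point $\leftend{q}$, so $u_j$ is adjacent to $q$ (or equal to it); since $\leftend{q}>\rightend{u_0}$ we have $j\ge 1$, and since $\leftend{q}<\leftend{u_\ell}$ we have $j\le \ell-1$, giving a $p$--$q$ walk $u_0,\dots,u_j,q$ of length at most $\ell<d$, again a contradiction. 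Thus $\dist{C_f^H}{p}{v}=d$, as required. I expect this last step to be the main obstacle: one must make precise that a shortest path between two disjoint intervals sweeps across the entire gap between them, so that any interval lying in that gap (here, $q$) is necessarily hit by a path vertex. This is exactly where the \emph{linear} nature of interval graphs is used rather than a generic graph argument; the separator statement of Observation~\ref{obs:consecutive_seprataion} captures the same intuition and could serve as an alternative route to the covering claim.
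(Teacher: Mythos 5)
Your proof is correct and follows essentially the same route as the paper's: both fix a diametral pair ordered left-to-right on the real line and show that a shortest path from the left endpoint to $v$ would otherwise yield a shortcut to the other diametral endpoint lying in between. The differences are only presentational --- the paper argues by contradiction (assuming no vertex of $Y_{A^*}^a$ is a diametral endpoint) and expresses the ``crossing'' step through the indices $\minIndex{w}$ and $\maxIndex{w}$ of a path vertex $w$, whereas you prove $\dist{C_f^H}{p}{v}=d$ directly and express the same step via the interval-covering argument.
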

\begin{proof}
 Suppose that $Y_{A^*}^a \neq \emptyset$ and no vertex $v \in Y_{A^*}^a$ is an end vertex of a diametral path of $C_f^H$. Let $P$ be a diametral path of $C_f^H$ and $x,y$ be the end vertices. Observe that neither $x$ nor $y$ is in $Y_{A^*}^a$. Without loss of generality assume that $\minIndex{x} \leq \minIndex{y}$. Let $P'$ be a shortest path between $x$ and $v$ where $v \in Y_{A^*}^a$. Since $P$ has the maximum size among the shortest paths and $P^{'}$ is not a diametral path, we have $|P'|< |P|$.  Since $v \in Y_{A^*}^a$  and $x, y \notin Y_{A^*}^a$ we have $a=\minIndex{v} > \minIndex{y} \geq \minIndex{x}$. Hence the path $P'$ contains a vertex $w$ such that $w \neq v$ and $\minIndex{w} \leq \minIndex{y}\leq \maxIndex{w}$ (That is, any path from $v$ to $x$ should cross the cliques containing $y$). This implies $w$ is a neighbor of $y$ and there exists a path $P''$ between $x$ and $y$ via $w$ such that $|P''| \leq |P'|$ (the path $P''$ is obtained by adding the edge $wy$ to the subpath  from $x$ to $w$ in $P'$). Since $|P'|< |P|$, this contradicts the assumption that $P$ is a shortest path between $x$ and $y$. Therefore, there exists at least one vertex $v \in Y_{A^*}^a$ which is an end vertex of a diametral path of $C_f^H$. Then by Observation \ref{obs:paired-distance}, each vertex in  $Y_{A^*}^a$ is an end vertex of a diametral path of $C_f^H$.
\end{proof}

\subsection{The algorithm}\label{sec:algorithm}
Our algorithm constructs a table $\OPTT$ iteratively whose cells are indexed by two parameters. For an integer $a, 1\leq a\leq k$ and a set $A \in \Allseparate{Q_a}$, the cell $\OPT{a}{A}$ contains a minimum \textsc{$s$-CVD} set of $\Inducedgraph{G}{1}{a}- A$. Clearly, $\OPT{k}{\emptyset}$ is a minimum $s$-CVD set of $G$. 

Now we start the construction of $\OPTT$. Since $\Inducedgraph{G}{1}{1}$ is a clique, we set $\OPT{1}{A} = \emptyset$ for all $A\in \Allseparate{Q_1}$:  

\begin{lemma}\label{lem:recursion-base}
For any $A\in \Allseparate{Q_1}$, $\OPT{1}{A} = \emptyset$.
\end{lemma}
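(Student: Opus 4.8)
The plan is to observe that this is a base-case computation and reduce it to the trivial fact that cliques are $s$-clubs for every $s\geq 1$. The key structural point is that $\Inducedgraph{G}{1}{1}$ is, by definition, the subgraph of $G$ induced by the vertices of the single maximal clique $Q_1$, and hence is a complete graph.

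First I would note that for any $A\in \Allseparate{Q_1}$, the graph $\Inducedgraph{G}{1}{1}-A$ is obtained by deleting a vertex subset from a complete graph, and every induced subgraph of a complete graph is again complete. Next I would recall that a complete graph on any number of vertices has diameter at most $1$ (with the degenerate cases of a single vertex having diameter $0$ and the empty graph having no constraint to violate). Since we assume $s\geq 1$, a complete graph is therefore an $s$-club, i.e.\ it has diameter at most $s$.

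Consequently, every connected component of $\Inducedgraph{G}{1}{1}-A$ (in fact there is at most one, and it is complete) is an $s$-club, so the empty set is already an $s$-CVD set of $\Inducedgraph{G}{1}{1}-A$. As no $s$-CVD set can have smaller cardinality than $\emptyset$, the empty set is a minimum $s$-CVD set, which by the definition of the table entry gives $\OPT{1}{A}=\emptyset$.

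There is essentially no hard step here: the only care required is in handling the degenerate cases (when $A=Q_1$ so that the graph is empty, or when only one vertex survives), but in each such case the surviving graph is trivially an $s$-club, so the conclusion holds uniformly. This lemma simply establishes the base case on which the later recursive computation of the table $\OPTT$ will build.
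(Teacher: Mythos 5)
Your proof is correct and matches the paper's reasoning: the paper justifies this base case with the single observation that $\Inducedgraph{G}{1}{1}$ is a clique, so every induced subgraph $\Inducedgraph{G}{1}{1}-A$ has diameter at most $1\leq s$ and the empty set is trivially a minimum $s$-CVD set. Your write-up simply spells out the same argument (including the degenerate cases) in more detail.
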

From now on assume $a\geq 2$ and $A$ be a set in $\Allseparate{Q_a}$. \
Let $H$ be the graph $\Inducedgraph{G}{1}{a}-A$ and $F$ be the graph $\Inducedgraph{G}{1}{a-1}-(A\cap Q_{a-1})$. Observe that for any two integers $a, b, 1 \leq a < b \leq k$ the set $S_{a-1}^{b} = S_{a}^{b} \cap Q_{a-1}$. Then,  for any $A\in \Allseparate{Q_a}$ we have $(A\cap Q_{a-1})\in \Allseparate{Q_{a-1}}$ and $\OPT{a-1}{A\cap Q_{a-1}}$ is defined. Note that $H-F = Y_A^a$.

In the following lemma we show that $\OPT{a}{A} = \OPT{a-1}{A\cap Q_{a-1}}$ if the \emph{frontal} component of $H$ has diameter at most $s$.
\begin{lemma}\label{lem:recursion-1}
Let $H=\Inducedgraph{G}{1}{a}-A$, for $A\in \Allseparate{Q_a}, 1 < a \leq k$. If the \emph{frontal} component of $H$ has diameter at most $s$, then $\OPT{a}{A} = \OPT{a-1}{A\cap Q_{a-1}}$.
\end{lemma}
\begin{proof}
Let $F$ denote the graph $\Inducedgraph{G}{1}{a-1}-(A\cap Q_{a-1})$. Since $H= F \cup Y_A^a$, if $Y_A^a = \emptyset$ then $H=F$ and hence, $\OPT{a}{A} = \OPT{a-1}{A\cap Q_{a-1}}$. Now assume that $Y_A^a \neq \emptyset$. Observe that the connected components of $H$ and $F$ are same except the \emph{frontal} components. The \emph{frontal} components of $H$ and $F$ differs depending on the set $S_{a-1}^{a}$ as follows. 
\begin{enumerate}
      \item [i)] If $S_{a-1}^{a} \cap H = \emptyset$ then the \emph{frontal} component of $H$ is $Y_A^a$.
    \item[ii)] If $S_{a-1}^{a} \cap H \neq \emptyset$ then the \emph{frontal} component of $H$ is the union of the \emph{frontal} component of $\Inducedgraph{G}{1}{a-1}-A$ and $Y_A^a$.
   \end{enumerate}
   If the \emph{frontal} component of $H$ is $Y_A^a$ then $\OPT{a}{A} = \OPT{a-1}{A\cap Q_{a-1}}$ because diameter of $Y_A^a$ is $1$. Hence assume that the \emph{frontal} component of $H$ belongs to the case (ii) defined above. Let $C_f^H$ be the \emph{frontal} component of $H$ and $C_f^F$ be the \emph{frontal} component of $F$. Then $C_f^H= C_f^F \cup Y_A^a$. We have the following claim.
 
   \begin{myclaim}\label{clm:frontal component}
Let $C_f^H= C_f^F \cup Y_A^a$. If the diameter of $C_f^H$ is at most $s$ then the diameter of $C_f^F$ is also at most $s$.
\end{myclaim}
\begin{proof}[Proof of Claim:]
Suppose not, then $C_f^F$ contains two vertices $u$ and $v$ such that the distance between $u$ and $v$ in $C_f^F$ is at least $s+1$. Without loss of generality, assume that $\leftend{u} < \leftend{v}$. Let $P$ be a shortest path between $u$ and $v$ in $C_f^F$. Observe that since $C_f^F=C_f^H - Y_A^a$, no vertex $w\in Y_A^a$ belongs to $V(P)$. Moreover, for any vertex  $w\in Y_A^a$ we have $\leftend{u} < \leftend{v} < \leftend{w}$ in the interval representation. Therefore, any shortest path between $u$ and $v$ in $C_f^H$ does not contain a vertex $w\in Y_A^a$. Hence the shortest path between $u$ and $v$ in $C_f^H$ is also at least $s+1$ which contradicts the assumption that the diameter of $C_f^H$ is at most $s$.
\end{proof}
 Hence by the minimality of $\OPT{a-1}{A\cap Q_{a-1}}$, no vertices of $C_f^F$ are in $\OPT{a-1}{A\cap Q_{a-1}}$. Thus it follows that $\OPT{a}{A} = \OPT{a-1}{A\cap Q_{a-1}}$.
\end{proof}

Now assume that the \emph{frontal} component of $H=\Inducedgraph{G}{1}{a}-A$ has diameter at least $s+1$.  Recall that if  $Y_A^a = \emptyset$, we have $\OPT{a}{A} = \OPT{a-1}{A\cap Q_{a-1}}$. Hence assume that  $Y_A^a \neq \emptyset$. Due to Lemma~\ref{lem:type}, any $s$-CVD set of $H$ has to be one of the four types defined in Section~\ref{sec:interval-notation}.

 First, for each $j\in \{1,2,3,4\}$, we find an $s$-CVD set of minimum cardinality, which is of type-$j$. We begin by showing how to construct a minimum cardinality $s$-CVD set $X_1$  of type-$1$ of $\Inducedgraph{G}{1}{a}-A$. We define $X_1$ as below.
 

\begin{equation}\label{eq1}
X_1 = Y^a_A\cup \OPT{a-1}{A\cap Q_{a-1}}
\end{equation}

\begin{lemma}
The set $X_1$ is a minimum cardinality $s$-CVD set of type-$1$ of $\Inducedgraph{G}{1}{a}-A$.
\end{lemma}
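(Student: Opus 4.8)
The plan is to verify three things in turn: that $X_1$ is genuinely an $s$-CVD set of $H=\Inducedgraph{G}{1}{a}-A$, that it has type-$1$, and that no type-$1$ $s$-CVD set of $H$ can be smaller. The whole argument rests on the identity $H-Y_A^a = F$, where $F=\Inducedgraph{G}{1}{a-1}-(A\cap Q_{a-1})$; this follows from the fact recorded just before the lemma that $H-F=Y_A^a$, together with $Y_A^a\cap V(F)=\emptyset$. Since $\OPT{a-1}{A\cap Q_{a-1}}$ is by definition a minimum $s$-CVD set of $F$, and $\OPT{a-1}{A\cap Q_{a-1}}\subseteq V(F)$ is disjoint from $Y_A^a$, I can treat $X_1$ as a disjoint union of the ``new'' vertices $Y_A^a$ and an optimal solution for the graph one clique shorter.

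First I would check that $X_1$ is an $s$-CVD set. Because $Y_A^a\subseteq X_1$, deleting $X_1$ from $H$ removes $Y_A^a$ first, leaving $F$, and then removes $\OPT{a-1}{A\cap Q_{a-1}}$ from $F$; formally $H-X_1 = F-\OPT{a-1}{A\cap Q_{a-1}}$. As $\OPT{a-1}{A\cap Q_{a-1}}$ is an $s$-CVD set of $F$, every connected component of $H-X_1$ has diameter at most $s$, so $X_1$ is an $s$-CVD set. That $X_1$ has type-$1$ is then immediate from Definition~\ref{def:type-1}, since $Y_A^a\subseteq X_1$.

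For minimality I would take an arbitrary type-$1$ $s$-CVD set $X$ of $H$ and show $|X|\geq |X_1|$. By Definition~\ref{def:type-1} we have $Y_A^a\subseteq X$, and since $V(H)=Y_A^a\sqcup V(F)$ this gives the disjoint decomposition $X = Y_A^a\sqcup (X\cap V(F))$. The key observation is that $X\cap V(F)$ is an $s$-CVD set of $F$: indeed $F-(X\cap V(F)) = H-X$ (again using $H-Y_A^a=F$), and $H-X$ has all components of diameter at most $s$ because $X$ is an $s$-CVD set of $H$. By the optimality of $\OPT{a-1}{A\cap Q_{a-1}}$ we obtain $|X\cap V(F)|\geq \left|\OPT{a-1}{A\cap Q_{a-1}}\right|$, whence $|X| = |Y_A^a| + |X\cap V(F)| \geq |Y_A^a| + \left|\OPT{a-1}{A\cap Q_{a-1}}\right| = |X_1|$, the last equality using the disjointness of the two sets whose union defines $X_1$.

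I do not anticipate a genuine obstacle here; the argument is a routine ``peel off the last clique'' optimality transfer. The only points requiring care are bookkeeping ones: confirming $H-Y_A^a=F$ as \emph{induced} subgraphs, so that the $s$-CVD property transfers faithfully between $H$ and $F$ in both directions, and verifying the disjointness $Y_A^a\cap V(F)=\emptyset$ so that the cardinalities add. Both follow directly from the definitions of $Y_A^a$, $H$, and $F$ recalled immediately above the lemma.
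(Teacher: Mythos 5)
Your proof is correct and follows essentially the same route as the paper's: both rest on the identity $H-Y_A^a=F=\Inducedgraph{G}{1}{a-1}-(A\cap Q_{a-1})$ (recorded just before the lemma as $H-F=Y_A^a$) together with the minimality of $\OPT{a-1}{A\cap Q_{a-1}}$. Your write-up merely makes explicit the disjoint decomposition $X=Y_A^a\sqcup(X\cap V(F))$ and the two-directional transfer of the $s$-CVD property, which the paper's terser proof leaves implicit.
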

\begin{proof}
Observe that the graph $H= \Inducedgraph{G}{1}{a}-Y^a_A$ is isomporphic to $\Inducedgraph{G}{1}{a-1}-(A \cap Q_{a-1})$. Hence $ X_1 = Y^a_A \ \cup  \OPT{a-1}{A\cap Q_{a-1}}$ is an $s$-CVD set of $H$. By definition, $Y^a_A$ is included in an $s$-CVD set of type-$1$. Hence the minimality of $\OPT{a-1}{A\cap Q_{a-1}}$ implies that $X_1$ is a minimum cardinality set of type-$1$.
\end{proof}

  Let $v$ be some vertex in $Y^a_A$ and $b<a$ be the maximum integer such that \sloppy  $\left(Q_{b} \cap \Level{H}{s+2}{v}\right) \neq \emptyset$. We construct a minimum cardinality $s$-CVD set of type-$2$ of $\Inducedgraph{G}{1}{a}-A$ defined as follows.

\begin{equation}\label{eq2}
    X_2 = \Level{H}{s+1}{v} \cup \OPT{b}{\Sep{b}{b+1}}
\end{equation}
\begin{lemma}
The set $X_2$ is a minimum cardinality $s$-CVD set of type-$2$ of $\Inducedgraph{G}{1}{a}-A$.
\end{lemma}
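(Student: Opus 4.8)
The plan is to prove the two directions separately: that $X_2$ is a valid type-$2$ $s$-CVD set, and that no type-$2$ $s$-CVD set can be smaller. Throughout write $H=\Inducedgraph{G}{1}{a}-A$, $L=\Inducedgraph{G}{1}{b}-\Sep{b}{b+1}$, and $\Lambda=\Level{H}{s+1}{v}$. I would first record that the choice of $v\in Y^a_A$ is immaterial: every vertex at distance $s+1\ge 2$ from $v$ lies in some clique $Q_c$ with $c<a$ (it cannot lie in $Q_a$, or it would be adjacent to $v$), so Observation~\ref{obs:paired-distance} gives $\Level{H}{s+1}{v}=\Level{H}{s+1}{v'}$ for all $v,v'\in Y^a_A$. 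Thus $\Lambda$ is canonically attached to $H$, and $\OPT{b}{\Sep{b}{b+1}}$ is a legitimate, already-computed table entry since $\Sep{b}{b+1}\in\Allseparate{Q_b}$ and $b<a$.

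The heart of the argument is a structural decomposition of $H-\Lambda$. First I would show $\Sep{b}{b+1}\subseteq\Lambda$: a vertex $w\in \Sep{b}{b+1}=Q_b\cap Q_{b+1}$ lies in $Q_{b+1}$, so $\dist{H}{v}{w}\le s+1$, while being adjacent, inside the clique $Q_b$, to the distance-$(s+2)$ vertex guaranteed by the definition of $b$ forces $\dist{H}{v}{w}\ge s+1$; hence equality. The key claim is then that every vertex of $L$ meeting the frontal component $C_f^H$ is at distance at least $s+2$ from $v$, so that $L\cap\Lambda=\emptyset$. For $w\in (Q_1\cup\ldots\cup Q_b)\setminus\Sep{b}{b+1}$ this follows from the fact that distances from $v$ grow monotonically as one moves to lower-indexed cliques (the phenomenon underlying Observations~\ref{obs:paired-distance} and~\ref{obs:frontal component}): any shortest $v$--$w$ path must enter the region of $Q_b$, and a short case analysis on whether $\maxIndex{w}=b$ or $\maxIndex{w}<b$ shows $w$ has no neighbour at distance $s$ from $v$, whence $\dist{H}{v}{w}\ge s+2$. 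Since the distance-$(s+1)$ set is a BFS level, it separates the vertices at distance $\le s$ from those at distance $\ge s+2$; combining this with $L\cap\Lambda=\emptyset$ and the easy fact that $A\cap(Q_1\cup\ldots\cup Q_b)\subseteq\Sep{b}{b+1}$, I would conclude that $L$ is exactly the union of those components of $H-\Lambda$ not containing $v$, while the component $C_v$ of $v$ is precisely the set of vertices at distance at most $s$ from $v$.

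Validity then follows quickly. The component $C_v$ has diameter at most $s$: as $v$ is an end vertex of a diametral path of $C_f^H$ (Observation~\ref{obs:frontal component}) and distances from $v$ are monotone, any two vertices within distance $s$ of $v$ lie within distance $s$ of each other, so $C_v$ has diameter at most $s$. The remaining components constitute $L$, and since $\OPT{b}{\Sep{b}{b+1}}$ is an $s$-CVD set of $L$, the graph $L-\OPT{b}{\Sep{b}{b+1}}$ is a disjoint union of graphs of diameter at most $s$. Hence every component of $H-X_2$ has diameter at most $s$, and since $\Lambda\subseteq X_2$, the set $X_2$ is a type-$2$ $s$-CVD set. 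For minimality, let $X$ be any type-$2$ $s$-CVD set of $H$; by definition some $v'\in Y^a_A$ has $\Level{H}{s+1}{v'}\subseteq X$, and by the first paragraph $\Level{H}{s+1}{v'}=\Lambda$, so $\Lambda\subseteq X$. Because $L$ is a union of components of $H-\Lambda$ and $\Lambda\subseteq X$, the trace $X\cap L$ is an $s$-CVD set of $L$, so $|X\cap L|\ge|\OPT{b}{\Sep{b}{b+1}}|$; and since $\Lambda$ and $L$ are disjoint, $|X|\ge|\Lambda|+|X\cap L|\ge|\Lambda|+|\OPT{b}{\Sep{b}{b+1}}|=|X_2|$.

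The main obstacle is the structural decomposition of the second paragraph, and in particular the claim that every vertex of $L$ meeting $C_f^H$ is at distance at least $s+2$ from $v$; this is what guarantees that $\Lambda$ deletes nothing inside $L$, so that $L$ re-appears intact as the already-solved subproblem $\OPT{b}{\Sep{b}{b+1}}$ and the cardinalities add without overlap. This claim, together with the companion bound that $C_v$ has diameter at most $s$, rests on a monotonicity-of-distance property of interval graphs that must be extracted carefully from the clique ordering. I would also handle separately the degenerate case in which $C_f^H$ has diameter exactly $s+1$, so that $\Level{H}{s+2}{v}=\emptyset$ and the index $b$ (and hence the term $\OPT{b}{\Sep{b}{b+1}}$) is determined by the convention covering the empty far part.
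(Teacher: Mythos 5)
Your proposal is correct and follows essentially the same route as the paper: the same decomposition of $H-\Level{H}{s+1}{v}$ into the distance-at-most-$s$ ball around $v$ (handled via Observation~\ref{obs:frontal component}) and the already-solved subproblem $\Inducedgraph{G}{1}{b}-\Sep{b}{b+1}$, hinging on the same key facts that $\Sep{b}{b+1}\subseteq\Level{H}{s+1}{v}$ and that $\Level{H}{s+1}{v}$ is disjoint from $\Inducedgraph{G}{1}{b}-\Sep{b}{b+1}$, followed by the identical minimality argument. You are in fact slightly more careful than the paper in making explicit the $v$-independence of $\Level{H}{s+1}{v}$ (needed because type-$2$ only quantifies existentially over $v\in Y^a_A$) and the degenerate case $\Level{H}{s+2}{v}=\emptyset$, both of which the paper leaves implicit.
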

\begin{proof}
 By the maximality of $b$ we have $\Sep{b}{b+1} \subseteq \Level{H}{s+1}{v}$. Moreover, the graph $(\Inducedgraph{G}{b+1}{a}-A)-\Level{H}{s+1}{v}$ is connected: otherwise, if $\Level{H}{s+1}{v}$ is a separator of $(\Inducedgraph{G}{b+1}{a}-A)$ then $(\Sep{b'}{b'+1}-A) \subseteq \Level{H}{s+1}{v}$ for some $b' > b$. Since $Q_{b'}$ is a maximal clique there exists at least one vertex $w \in Q_{b'}$ and $w \not \in Q_{b'+1}$. Hence the distance between $w$ and $v$ is $s+2$ and $\left(Q_{b'} \cap \Level{H}{s+2}{v}\right) \neq \emptyset$. Since $b' >b$, this contradicts the maximality of $b$.\\
  Since $(\Inducedgraph{G}{b+1}{a}-A)-\Level{H}{s+1}{v}$ is connected we have $((\Inducedgraph{G}{b+1}{a}-A)-\Level{H}{s+1}{v})$ is a frontal component of $G[1,a]- (A \cup \Level{H}{s+1}{v}$. Let $A'=A \cup \Level{H}{s+1}{v}$. Note that $Y_{A'}^a = Y_{A}^a \neq \emptyset$.  Observe that the distance between $v \in Y_{A'}^a$ and any other vertex in $(\Inducedgraph{G}{b+1}{a}-A)-\Level{H}{s+1}{v}$ is at most $s$. Hence by Observation \ref{obs:frontal component}, $(\Inducedgraph{G}{b+1}{a}-A)-\Level{H}{s+1}{v}$ has diameter at most $s$.\\
  Note that any vertex of $\Inducedgraph{G}{1}{b}$ that belongs to $A$ is also in $\Sep{b}{b+1}$. Hence $\Inducedgraph{G}{1}{b} - (A \cup \Sep{b}{b+1}) = \Inducedgraph{G}{1}{b} - \Sep{b}{b+1}$. Since $\OPT{b}{\Sep{b}{b+1}}$ is a minimum cardinality $s$-CVD set of $\Inducedgraph{G}{1}{b} - \Sep{b}{b+1}$ the set $X_2 = \Level{H}{s+1}{v} \cup \OPT{b}{\Sep{b}{b+1}}$ is an $s$-CVD set of $H$. By definition, $\Level{H}{s+1}{v} $ is included in an $s$-CVD set of type-$2$. Observe that any vertex of $\Inducedgraph{G}{1}{b}$ that belongs to $\Level{H}{s+1}{v}$ is also in $\Sep{b}{b+1}$ and hence the minimality of $\OPT{b}{\Sep{b}{b+1}}$ implies that $X_2$ is a minimum cardinality set of type-$2$.
  \end{proof}
  Now we show how to construct a minimum cardinality $s$-CVD set $X_3$ of type-$3$ of $\Inducedgraph{G}{1}{a}-A$. Let $B\subseteq \{1,2,\ldots,a-1\}$ be the set of integers such that for any $i\in B$ the graph 
$H_i = \Inducedgraph{G}{i+1}{a}- (\Sep{i}{i+1} \cup A)$ is connected and
has diameter at most $s$. By definition, a type-$3$ $s$-CVD set $X$ of $H$ contains $S_c^{c+1}$ for some $c \in B$.  We call each such type-$3$ $s$-CVD set as type-$3(c)$. Now we define minimum type-$3(c)$ $s$-CVD set as follows. 
\begin{equation} \label{eq3c}
\begin{split}
\text{For each } c\in B,\quad Z_c & = (\Sep{c}{c+1}-A) \cup \OPT{c}{\Sep{c}{c+1}}\\
 \end{split}
\end{equation}
\begin{myclaim}
The set $Z_c$ is a minimum cardinality $s$-CVD set of type-$3(c)$ of $\Inducedgraph{G}{1}{a}-A$.
\end{myclaim}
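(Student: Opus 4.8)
The plan is to prove the two halves of the claim separately: first that $Z_c$ is indeed an $s$-CVD set of $H=\Inducedgraph{G}{1}{a}-A$ of type-$3(c)$, and then that no type-$3(c)$ $s$-CVD set can be smaller. The single structural fact driving both halves is that $\Sep{c}{c+1}=Q_c\cap Q_{c+1}$ is a separator of the interval graph: using the consecutive-clique property, a vertex $u$ with $\maxIndex{u}\le c$ and a vertex $w$ with $\minIndex{w}\ge c+1$ share no maximal clique and are therefore non-adjacent. Hence deleting $\Sep{c}{c+1}-A$ from $H$ splits it into the \emph{left part} $\Inducedgraph{G}{1}{c}-\Sep{c}{c+1}$ and the \emph{right part} $H_c=\Inducedgraph{G}{c+1}{a}-(\Sep{c}{c+1}\cup A)$ with no edges between them.

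For the first half, I would first record the simplification $\Inducedgraph{G}{1}{c}-(\Sep{c}{c+1}\cup A)=\Inducedgraph{G}{1}{c}-\Sep{c}{c+1}$: any vertex of $\Inducedgraph{G}{1}{c}$ lying in $A=Q_a\cap Q_b$ (with $a<b$) must, by the consecutive-clique property, also lie in $Q_c\cap Q_{c+1}=\Sep{c}{c+1}$, so $A$ contributes nothing new to the deletion on the left. Consequently $\OPT{c}{\Sep{c}{c+1}}$ is exactly a minimum $s$-CVD set of the left part. On the right part, since $c\in B$ the graph $H_c$ is connected and has diameter at most $s$, hence is already an $s$-club requiring no deletions. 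Combining the separator property with these two observations shows $H-Z_c$ is a disjoint union of $s$-clubs; and because $\Sep{c}{c+1}-A\subseteq Z_c$ while $H_c$ is connected of diameter at most $s$, the set $Z_c$ is of type-$3(c)$.

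For minimality, let $X$ be an arbitrary type-$3(c)$ $s$-CVD set of $H$, so $\Sep{c}{c+1}-A\subseteq X$ by definition. Since $X$ contains the separator, the left and right parts remain disconnected in $H-X$, which forces the trace $X\cap V(\Inducedgraph{G}{1}{c}-\Sep{c}{c+1})$ to be an $s$-CVD set of the left part and hence to have size at least $\left|\OPT{c}{\Sep{c}{c+1}}\right|$. As $\Sep{c}{c+1}-A$ and this trace are disjoint subsets of $X$ (one lies inside $\Sep{c}{c+1}$, the other avoids it), I would conclude $|X|\ge \left|\Sep{c}{c+1}-A\right|+\left|\OPT{c}{\Sep{c}{c+1}}\right|=|Z_c|$, the final equality again using disjointness.

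The cardinality bookkeeping is routine; the step needing the most care is justifying that a type-$3(c)$ set's restriction to the left part is itself a valid $s$-CVD set. This relies on the separator actually cutting every left-right path in $H-X$, which in turn hinges on $\Sep{c}{c+1}-A\subseteq X$ (guaranteed by the type-$3(c)$ hypothesis) together with the fact that no $s$-club constraint couples vertices across the separator once it is removed. I expect verifying this clean separation, and the companion simplification $\Inducedgraph{G}{1}{c}-(\Sep{c}{c+1}\cup A)=\Inducedgraph{G}{1}{c}-\Sep{c}{c+1}$, to be the only genuinely delicate points.
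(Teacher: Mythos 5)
Your proposal is correct and follows essentially the same route as the paper's (much terser) proof: both hinge on the observations that $A\cap V(\Inducedgraph{G}{1}{c})\subseteq \Sep{c}{c+1}$, that $\Sep{c}{c+1}$ separates the left part $\Inducedgraph{G}{1}{c}-\Sep{c}{c+1}$ from the right part $H_c$ (which needs no deletions since $c\in B$), and that minimality of $\OPT{c}{\Sep{c}{c+1}}$ then forces minimality of $Z_c$ among type-$3(c)$ sets. Your write-up merely makes explicit the lower-bound bookkeeping (disjointness of $\Sep{c}{c+1}-A$ from the left-part trace of an arbitrary type-$3(c)$ set) that the paper leaves implicit.
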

\begin{proof}[Proof of Claim]
Note that any vertex of $\Inducedgraph{G}{1}{c}$ that belongs to $A$ is also in $\Sep{c}{c+1}$. By definition, $\Sep{c}{c+1}$ separates the connected component $\Inducedgraph{G}{c+1}{a}- (\Sep{c}{c+1} \cup A)$ from the rest of the graph namely, $\Inducedgraph{G}{1}{c}- (\Sep{c}{c+1})$. Since the diameter of $\Inducedgraph{G}{c+1}{a}- (\Sep{c}{c+1} \cup A)$ is at most $s$ and $\OPT{c}{\Sep{c}{c+1}}$ is the minimal cardinality $s$-CVD set of $\Inducedgraph{G}{1}{c}- \Sep{c}{c+1}$ the set $Z_c = (\Sep{c}{c+1}-A) \cup \OPT{c}{\Sep{c}{c+1}}$ is a minimum cardinality $s$-CVD set of $H$ of type-$3(c)$.
\end{proof}

We define $X_3$ as below.
 \begin{equation} \label{eq3}
 X_3 = \min\{Z_c\colon c\in B\}
\end{equation}
\begin{lemma}
The set $X_3$ is a minimum cardinality $s$-CVD set of type-$3$ of $\Inducedgraph{G}{1}{a}-A$.
\end{lemma}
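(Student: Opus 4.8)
The plan is to reduce the lemma entirely to the preceding claim on type-$3(c)$ sets together with a minimization over the index $c$. The essential observation I would establish first is that every type-$3$ $s$-CVD set of $H=\Inducedgraph{G}{1}{a}-A$ is in fact a type-$3(c)$ set for some $c\in B$. Indeed, by Definition~\ref{def:type-3}, if $X$ is of type-$3$ then there is an integer $c$ with $1\le c<a$ such that $\Sep{c}{c+1}-A\subseteq X$ and $\Inducedgraph{G}{c+1}{a}-(\Sep{c}{c+1}\cup A)$ is connected with diameter at most $s$. These last two conditions are precisely the defining conditions for $c$ to belong to $B$, so $c\in B$ and $X$ is of type-$3(c)$ for that index.

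With this correspondence in hand, I would argue in two steps. First, $X_3$ is itself a type-$3$ $s$-CVD set: by definition $X_3=Z_{c^\star}$ for some minimizing index $c^\star\in B$, and the preceding claim shows that each $Z_c$ is a (minimum) type-$3(c)$ set, hence in particular of type-$3$. Second, for optimality, let $X$ be an arbitrary type-$3$ $s$-CVD set of $H$. By the observation above, $X$ is a type-$3(c)$ set for some $c\in B$, and the preceding claim states that $Z_c$ is a minimum cardinality type-$3(c)$ set, so $|Z_c|\le |X|$. Since $X_3=\min\{Z_c\colon c\in B\}$, we obtain $|X_3|\le |Z_c|\le |X|$. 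Combining the two steps shows that $X_3$ is a minimum cardinality $s$-CVD set of type-$3$.

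The substantive work has already been discharged by the claim bounding each $Z_c$, so the only genuine point I expect to need care is the correspondence between the existential witness $c$ in Definition~\ref{def:type-3} and membership in $B$; once that is pinned down, the minimization over $c$ is routine. A minor loose end is the degenerate case $B=\emptyset$, in which no type-$3$ set exists: there $X_3$ should simply be treated as undefined, or as having infinite cardinality so that it is never selected as the overall optimum among the four types. This corner case does not affect the argument in the non-trivial situation that the lemma is intended to cover.
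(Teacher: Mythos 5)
Your proposal is correct and follows essentially the same route as the paper, whose own proof is just the one-line remark that the minimality of each $Z_c$ (established in the preceding claim) implies the minimality of $X_3$; you simply make explicit the correspondence between the witness $c$ in Definition~\ref{def:type-3} and membership in $B$, and the final minimization over $c\in B$. The remark about the degenerate case $B=\emptyset$ is a reasonable bit of extra care but does not change the substance.
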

\begin{proof}
The minimality of each $Z_c$ implies that the set $X_3$ is a minimum cardinality type-$3$ $s$-CVD set.
 \end{proof}
 Finally, we show the construction of a minimum cardinality $s$-CVD set $X_4$ of type-$4$ of $\Inducedgraph{G}{1}{a}-A$.   Let $C\subseteq \{1,2,\ldots,a-1\}$ be the set of integers such that for any $i\in C$ the graph $H_i = \Inducedgraph{G}{i+1}{a}-(\Sep{i}{i+1}\cup A)$ is connected and has diameter exactly $s+1$. By definition, a type-$4$ $s$-CVD set $X$ of $H$ contains $S_i^{i+1}$ for some $i \in C$.  We call each such type-$4$ $s$-CVD set as type-$4(c)$. Now we define minimum type-$4(c)$ $s$-CVD set as follows. Note that $Y_A^a \neq \emptyset$. Let $v$ be some vertex in $Y^a_A$ and $Y_i = \Level{H_i}{s+1}{v}$.  
\begin{equation} \label{eq4c}
\begin{split}
\text{For each } i\in C,\quad Z_i & = (\Sep{i}{i+1}-A) \cup Y_i \cup \OPT{i}{\Sep{i}{i+1}}\\
\end{split}
\end{equation}
\begin{myclaim}
The set $Z_i$ is a minimum cardinality $s$-CVD set of type-$4(c)$ of $\Inducedgraph{G}{1}{a}-A$.
\end{myclaim}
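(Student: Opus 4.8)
The plan is to prove the claim in two parts, mirroring the proof of the preceding type-$3(c)$ claim but with the extra bookkeeping forced by the fact that $H_i$ now has diameter exactly $s+1$ rather than at most $s$. First I would show that $Z_i$ is indeed an $s$-CVD set of $H=\Inducedgraph{G}{1}{a}-A$ satisfying the type-$4(i)$ condition, and then that no type-$4(i)$ $s$-CVD set can be smaller than $Z_i$. Throughout I read type-$4(i)$ as in the classification of Lemma~\ref{lem:type}: $i$ is the \emph{maximum} index with $\Sep{i}{i+1}-A$ contained in the set, and $H_i=\Inducedgraph{G}{i+1}{a}-(\Sep{i}{i+1}\cup A)$ is connected with diameter exactly $s+1$ (so $i\in C$).

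For the validity part, I would first note that any vertex of $\Inducedgraph{G}{1}{i}$ lying in $A$ also lies in $\Sep{i}{i+1}$ (by the consecutiveness of cliques), so $\Sep{i}{i+1}-A$ separates $H$ into the left graph $\Inducedgraph{G}{1}{i}-\Sep{i}{i+1}$ and the right graph $H_i$. The set $\OPT{i}{\Sep{i}{i+1}}$ turns the left graph into $s$-clubs by minimality, so it remains to check that $H_i-Y_i$ is an $s$-club. Since $H_i$ has diameter $s+1$ and, by Observation~\ref{obs:frontal component}, $v\in Y_A^a$ is an endpoint of a diametral path of $H_i$, every vertex of $H_i$ is within distance $s+1$ of $v$ and $Y_i=\Level{H_i}{s+1}{v}$ is exactly the last BFS layer. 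Removing it keeps $H_i-Y_i$ connected — otherwise $Y_i$ would contain a full separator $\Sep{b'}{b'+1}-A$ and hence produce a vertex at distance $s+2$ from $v$, contradicting $\dist{H_i}{x}{v}\le s+1$ — and, because shortest paths from $v$ avoid the last layer, every surviving vertex stays within distance $s$ of $v$; Observation~\ref{obs:frontal component} then gives diameter at most $s$. This is the same mechanism used for $X_2$, so it is routine. Hence $Z_i$ is an $s$-CVD set, and it is type-$4(i)$ because $\Sep{i}{i+1}-A\subseteq Z_i$ and $i\in C$.

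For minimality, let $X$ be any type-$4(i)$ $s$-CVD set. I would partition $V(H)$ into the separator $\Sep{i}{i+1}-A$, the left vertices $V(\Inducedgraph{G}{1}{i})-\Sep{i}{i+1}$, and the right vertices $V(H_i)$, so that $|X|$ splits as a sum over these three disjoint pieces. The separator contributes exactly $|\Sep{i}{i+1}-A|$; and since the separator is deleted entirely, $X$ restricted to the left graph is an $s$-CVD set of $\Inducedgraph{G}{1}{i}-\Sep{i}{i+1}$, contributing at least $|\OPT{i}{\Sep{i}{i+1}}|$. The heart of the argument is the lower bound $|X\cap V(H_i)|\ge |Y_i|$, which I would obtain by showing $Y_i\subseteq X$: if some $u\in Y_i$ survived, then (we may assume $v\notin X$, else reclassify via Observation~\ref{obs:paired-distance}) $v$'s component must have diameter at most $s$, so $u$ and $v$ lie in different components of $H-X$; Observation~\ref{obs:consecutive_seprataion} then yields a separator $\Sep{b^*}{b^*+1}\subseteq X\cup A$ with $\maxIndex{u}\le b^*<\minIndex{v}$, and since $u\in V(H_i)$ forces $\maxIndex{u}>i$, this gives $b^*>i$, contradicting the maximality of $i$. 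Summing the three bounds gives $|Z_i|\le |X|$.

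The main obstacle is exactly this right-hand lower bound: it is where the difference from type-$3$ lives, and it is the step that secretly requires the maximal-separator reading of type-$4(i)$. Indeed, without maximality of $i$ one could break the diameter-$(s+1)$ obstruction of $H_i$ more cheaply by introducing a fresh separator inside $H_i$ — but such a set is captured at a larger index (as a type-$3$ or type-$4$ set there), not as type-$4(i)$. Making this reclassification explicit, and confirming that the three set-operations defining $Z_i$ are genuinely disjoint so that cardinalities add, are the only points that need care; everything else parallels the type-$2$ and type-$3(c)$ arguments already established.
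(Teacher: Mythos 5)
Your proposal is correct, and its first half (that $Z_i$ is a valid $s$-CVD set whose right piece $H_i-Y_i$ is connected and of diameter at most $s$) follows exactly the paper's route: connectivity of $H_i-Y_i$ via the observation that a $Y_i$-separation would force a vertex at distance $s+2$ from $v$, and the diameter bound via Observation~\ref{obs:frontal component}. Where you genuinely add something is the minimality half. The paper's proof essentially asserts minimality after noting that any vertex of $\Inducedgraph{G}{1}{i}-\Sep{i}{i+1}$ lying in $Y_i$ or $A$ is already in $\Sep{i}{i+1}$; it never argues why an arbitrary type-$4(i)$ set must spend at least $|Y_i|$ vertices inside $H_i$ (the type-$4$ definition, unlike the type-$2$ one, does not build $Y_i\subseteq X$ into the definition). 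Your explicit lower bound --- $Y_i\subseteq X$ via Observation~\ref{obs:consecutive_seprataion} and the maximality of the separator index $i$, with the reclassification through Observation~\ref{obs:paired-distance} when $v\in X$ --- fills this in, and you correctly flag that the argument only works under the maximal-index reading of type-$4(i)$, which is the reading implicit in the proof of Lemma~\ref{lem:type} (there $b$ is chosen maximal and the corresponding subgraph shown connected). So: same skeleton as the paper, with a more careful and complete treatment of the step the paper leaves implicit; nothing in your argument fails.
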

\begin{proof}[Proof of Claim]
Recall that $H_i$ is connected and we claim that the graph $H_i-Y_i$ is also connected: otherwise, if $Y_i$ is a separator of $H_i$ then there exits a vertex $w$ in $H_i-Y_i$ such that $w$ does not belongs to the component containing $v$ in $H_i-Y_i$. Since any path from $v$ to $w$ in $H_i$ passes through $Y_i$, the distance of $w$ from $v$ in $H_i$ is at least $s+2$ contradicting the assumption that $H_i$ has diameter exactly $s+1$.\\
Since $H_i - Y_i$ is connected, it is the \emph{frontal} component of $\Inducedgraph{G}{1}{a}-A-(\Sep{i}{i+1}\cup Y_i)$. Let $A'=A \cup \Sep{i}{i+1}\cup Y_i$. Note that $Y_{A'}^a = Y_{A}^a \neq \emptyset$. Hence the distance between $v \in Y_{A'}^a$ and any other vertex in $H_i-Y_i$ is at most $s$. Thus by Observation \ref{obs:frontal component} the graph $H_i - Y_i$ has diameter at most $s$. Note that $\OPT{i}{\Sep{i}{i+1}}$ is the minimal cardinality $s$-CVD set of $\Inducedgraph{G}{1}{i}- \Sep{i}{i+1}$ and any vertex of $\Inducedgraph{G}{1}{i}- \Sep{i}{i+1}$ that belongs to $Y_i$ or $A$ is also in $\Sep{i}{i+1}$. Hence, the set $Z_i =(\Sep{i}{i+1}-A) \cup Y_i \cup \OPT{i}{\Sep{i}{i+1}}$ is a minimum $s$-CVD set of $H$ of type-$4(c)$.
\end{proof}

Now define $X_4$ as follows.
\begin{equation} \label{eq4}
 X_4 = \min\{Z_i\colon i\in C\}
\end{equation}
\begin{lemma}
The set $X_4$ is a minimum cardinality $s$-CVD set of type-$4$ of $\Inducedgraph{G}{1}{a}-A$.
\end{lemma}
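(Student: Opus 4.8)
The plan is to mirror exactly the argument used for $X_3$ in the immediately preceding lemma, since $X_4$ is defined in a completely parallel way. The work has essentially already been done: the claim just above establishes that for each $i\in C$ the set $Z_i=(\Sep{i}{i+1}-A) \cup Y_i \cup \OPT{i}{\Sep{i}{i+1}}$ is a minimum-cardinality type-$4(i)$ $s$-CVD set of $H=\Inducedgraph{G}{1}{a}-A$. Hence all that remains is to argue that minimizing $|Z_i|$ over $i\in C$ yields a minimum over \emph{all} type-$4$ sets, not merely over one fixed separating index.

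First I would appeal to Definition~\ref{def:type-4}: any type-$4$ $s$-CVD set $X$ of $H$ must contain $\Sep{c}{c+1}-A$ for some integer $c$ with $1\leq c<a$ such that $\Inducedgraph{G}{c+1}{a}-(\Sep{c}{c+1}\cup A)$ is connected and has diameter exactly $s+1$. By the very definition of the index set $C$, such a $c$ belongs to $C$, so every type-$4$ $s$-CVD set is of type-$4(c)$ for some $c\in C$; in other words, the family $\{$type-$4(c)\colon c\in C\}$ exhausts all type-$4$ sets. Next I would combine this with the preceding claim: since $Z_c$ is a minimum-cardinality type-$4(c)$ set, any type-$4(c)$ set has cardinality at least $|Z_c|$, and therefore every type-$4$ $s$-CVD set $X$ satisfies $|X|\geq |Z_c|\geq \min\{|Z_i|\colon i\in C\}=|X_4|$. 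Because $X_4$ itself equals some $Z_{i^*}$ with $i^*\in C$, it is a genuine type-$4$ $s$-CVD set attaining this minimum, which establishes the lemma.

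I do not expect a genuine obstacle here; the only point meriting a sentence is the exhaustiveness step (that $C$ indexes every possible separating index arising in a type-$4$ set), which is immediate from Definition~\ref{def:type-4} together with the definition of $C$. One small edge case to acknowledge is that if $C=\emptyset$ then $H$ admits no type-$4$ $s$-CVD set at all and the statement holds vacuously; otherwise the minimum in Equation~\ref{eq4} is taken over a nonempty collection of sets already shown to be valid $s$-CVD sets, so $X_4$ is well defined as a set and not merely as a cardinality.
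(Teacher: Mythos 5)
Your argument is correct and follows essentially the same route as the paper, which simply states that the minimality of each $Z_i$ (established in the preceding claim) implies that $X_4$ is a minimum cardinality type-$4$ set; your additional remarks on the exhaustiveness of the index set $C$ and the vacuous case $C=\emptyset$ merely make explicit what the paper leaves implicit.
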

\begin{proof}
The minimality of each $Z_i$ implies that the set $X_4$ is a minimum cardinality type-$4$ $s$-CVD set.
 \end{proof}

Now we define a minimum \textsc{$s$-CVD} set of $\Inducedgraph{G}{1}{a}- A$ as the one with minimum cardinality among the sets $X_i, 1 \leq i\leq 4$. That is,  
\begin{equation}\label{eq:opt}
\OPT{a}{A} = \min\{X_1,X_2,X_3,X_4\}
\end{equation}

A pseudocode of the procedure to find Equation \ref{eq:opt} is given by Procedure \ref{proc:sCD}.

 \begin{minipage}{\linewidth}
 \begin{algorithm}[H]
  \renewcommand*{\algorithmcfname}{Procedure} 
 \DontPrintSemicolon

 Let $H=G[1,a]-A$ and $Y_A^a = (Q_a-Q_{a-1})-A$\;
 Set  $X_1 = Y_A^a\cup \OPT{a-1}{A\cap Q_{a-1}}$\;
 For a vertex $v \in Y^a_A$, find the maximum integer $b$ such that $b<a$ and $Q_{b} \cap \Level{H}{s+2}{v} \neq \emptyset$\;
  Set $ X_2 = \Level{H}{s+1}{v} \cup \OPT{b}{\Sep{b}{b+1}} $ \; 
 Set $B=C=\emptyset$\; 
   \For{$c= 1 \text{ to } a-1$}{
     \If{\emph{Diam}$[c][a][A] \leq s$} {$Z_c=(\Sep{c}{c+1}-A) \cup \OPT{c}{\Sep{c}{c+1}}$\Comment{$H_c=\Inducedgraph{G}{c+1}{a}-(\Sep{c}{c+1}\cup A)$}\;
  $B= B \cup \{c\}$}{
  \If{\emph{Diam}$[c][a][A] = s+1$}{$W_c=(\Sep{c}{c+1}- A) \cup \Level{H_c}{s+1}{v} \cup \OPT{c}{\Sep{c}{c+1}}$\;
  $C= C \cup \{c\}$ }}
  }
   Set $X_3 = \min\{Z_i\colon i\in B\}$\;
  Set $X_4 = \min\{W_i\colon i\in C\}$\;

   Set $\OPT{a}{A} = \min\{X_1,X_2,X_3,X_4\}$ \;
   Return $\OPT{a}{A}$
   \caption{Compute{\_}sCD$(G,a,A)$}
   \label{proc:sCD}
  
  \end{algorithm}
  \end{minipage}
   We formally summarize the above discussion in the following lemma.

\begin{lemma} \label{lem:correctness}
For $1 < a \leq k$, if the diameter of the \emph{frontal} component of $\Inducedgraph{G}{1}{a}-A$ is at least $s+1$, then $\OPT{a}{A} = \min\{X_1,X_2,X_3,X_4\}$. 
\end{lemma}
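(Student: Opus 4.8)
The plan is to read Lemma~\ref{lem:correctness} as a pure assembly statement: all the real content has already been proved, namely Lemma~\ref{lem:type} (which classifies $s$-CVD sets into four types) together with the four lemmas immediately preceding, each asserting that $X_j$ is a \emph{minimum-cardinality} $s$-CVD set among those of type-$j$. So I would simply establish the two cardinality inequalities between $\min\{X_1,X_2,X_3,X_4\}$ and a true optimum $\OPT{a}{A}$, conclude equality, and remark that $\min\{X_1,X_2,X_3,X_4\}$ is itself an $s$-CVD set of $H=\Inducedgraph{G}{1}{a}-A$ and hence admissible as the table entry.

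For the inequality $|\OPT{a}{A}|\le |\min\{X_1,X_2,X_3,X_4\}|$ I would note that each $X_j$ is, by the corresponding preceding lemma, an $s$-CVD set of $H$; since $\OPT{a}{A}$ is a minimum such set, no $X_j$ can be strictly smaller, so in particular the smallest of the four has size at least $|\OPT{a}{A}|$.

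For the reverse inequality I would first use the hypothesis that the \emph{frontal} component of $H$ has diameter at least $s+1$ to discard alternative (1) of Lemma~\ref{lem:type}, which forces alternative (2): \emph{every} $s$-CVD set of $H$ is of some type-$j$, $j\in\{1,2,3,4\}$. Applying this to an actual minimum $s$-CVD set $O$ of $H$ gives an index $j$ with $O$ of type-$j$; the preceding lemma for that $j$, stating that $X_j$ has minimum cardinality over all type-$j$ sets, then yields $|X_j|\le |O|=|\OPT{a}{A}|$, whence $|\min\{X_1,X_2,X_3,X_4\}|\le |\OPT{a}{A}|$.

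The core argument above is routine; the step I would handle most carefully is the bookkeeping of the degenerate cases, which is where the only genuine subtlety lies. When $Y_A^a=\emptyset$ the definitions of $X_2$ and $X_4$ refer to a vertex $v\in Y_A^a$ that need not exist, and when $B=\emptyset$ or $C=\emptyset$ the sets $X_3,X_4$ are minima over an empty family. I would adopt the convention that such an undefined candidate has cardinality $+\infty$, so that it never realises the minimum; this is harmless because the type actually attained by the optimum $O$ is always genuinely defined, guaranteeing that at least one of $X_1,X_2,X_3,X_4$ is a legitimate $s$-CVD set of the correct size and that $\min\{X_1,X_2,X_3,X_4\}$ is finite and correct. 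In the boundary situation $Y_A^a=\emptyset$ one may instead argue directly that $H$ coincides with $\Inducedgraph{G}{1}{a-1}-(A\cap Q_{a-1})$, so that $\OPT{a}{A}=\OPT{a-1}{A\cap Q_{a-1}}=X_1$.
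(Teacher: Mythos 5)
Your proposal is correct and follows essentially the same route as the paper: the paper's proof of Lemma~\ref{lem:correctness} is a one-line assembly of Lemma~\ref{lem:type} (whose alternative (1) is excluded by the diameter hypothesis, forcing every $s$-CVD set to be of some type-$j$) with the preceding lemmas asserting that each $X_j$ is a minimum-cardinality $s$-CVD set of type-$j$, which is exactly your two-inequality argument spelled out. Your explicit treatment of the degenerate cases ($Y_A^a=\emptyset$, or $B=\emptyset$, $C=\emptyset$) is a reasonable piece of bookkeeping that the paper handles implicitly in the surrounding discussion rather than inside the proof itself.
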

\begin{proof}
The proof follows from Lemma \ref{lem:type} and the above discussion on the minimality of the sets $X_i, 1 \leq i\leq 4$, in their respective types.
\end{proof}
The proof of correctness of the algorithm follows from the Lemmas \ref{lem:recursion-base}, \ref{lem:recursion-1} and \ref{lem:correctness}. A pseudocode of the algorithm for finding a minimum  $s$-CVD set of an interval graph is given in Algorithm \ref{alg:dp}. In the following section, we discuss the time complexity of the algorithm.


         
  
\setcounter{algocf}{0}
 \begin{minipage}{\linewidth}
 \begin{algorithm}[H] 
 \DontPrintSemicolon
\begin{small}
   \SetKwFunction{compute}{Compute\_sCD(G,a,A)}
\SetKwInOut{KwIn}{Input}
    \SetKwInOut{KwOut}{Output}
    \KwIn{An interval graph $G$ and a positive integer $s$}
    \KwOut{$\OPT{k}{\emptyset}$}
 Using algorithm in \cite{booth1976} find the ordered set of maximal cliques of $G$, say $Q_1,Q_2,\ldots,Q_k$ and $N_{\text{left}}(v)$, $\maxIndex{v}$ and $\minIndex{v}$ for each vertex $v \in V(G)$\;
 
 
 Find $\Allseparate{Q_1}$\;
 \For{all $A \in \Allseparate{Q_1}$}{ $\OPT{1}{A}=\emptyset$}
   
  \For{$a= 2 \text{ to } k$ }{
  Find $\Allseparate{Q_a}$\;
   \For{$A \in \Allseparate{Q_a}$}{
Set $Y_A^a = (Q_a - Q_{a-1})-A$\;
 \uIf{$Y_A^a = \emptyset$}{$\OPT{a}{A}=\OPT{a-1}{A\cap Q_{a-1}}$}\Else{
 \For{$c = 1 \text{ to } a-1$}{
 Find the diameter of the induced subgraph $H_c= G[c+1,a]-(A\cup \Sep{c}{c+1})$ using $N_{\text{left}}(v), v \in Y_A^a$ and store it in Diam$[c][a][A]$.  
 }
 \uIf{diameter \emph{Diam$[1][a][A]$} of the \emph{frontal} component of $H_0 = G[1,a]-A \leq s$}{
  $\OPT{a}{A}=\OPT{a-1}{A\cap Q_{a-1}}$}
 \Else{
 $\OPT{a}{A}=\compute$
 }
 }
 }
 }
 \Return 
   \caption{$s$-CVD$(G,s)$:G is an interval graph and $s$ is a positive integer}
   \label{alg:dp}
    
\end{small}
 \end{algorithm}
\end{minipage}

 \subsection{Time complexity} \label{sec:interval-time}
For a given interval graph $G$ with $n$ vertices and $m$ edges, the algorithm first finds the ordered set of maximal cliques of $G$ as described in Section \ref{sec:interval-notation}. Such an ordered list of the maximal cliques of G can be produced in linear time as a byproduct of the linear (O(n + m)) time recognition algorithm for interval graphs due
to Booth and Leuker~\cite{booth1976}. For each vertex $v\in G$, the algorithm gathers the following information during the enumeration of maximal cliques:  (i) the values $\minIndex{v}$ and $\maxIndex{v}$ and (ii) the set of neighbours of $v$ whose corresponding interval starts before that of $v$ which we call as $N_{\text{left}}(v)$ and are ordered with respect to the left endpoints.\\

Let $Q_1,Q_2,\ldots,Q_k$ be the ordered set of maximal cliques of $G$. From the ordered set of cliques, the algorithm constructs the set $\Allseparate{Q_a}$ (\textbf{steps 2, 6, Algorithm \ref{alg:dp}}) for each $Q_a, 1 \leq a < k$. For an integer $a, 1 \leq a < k$ the set $\Allseparate{Q_a}$ can be constructed by adding a vertex $v \in Q_a$ to each $\Sep{a}{b} \in \Allseparate{Q_a}$ for $a < b \leq \maxIndex{v}$. For the computation of each $\OPT{a}{A}, 1 \leq a \leq k,  A \in \Allseparate{Q_a} $ the algorithm needs to compute the following: (i) the set of vertices, $Y_A^a$ (\textbf{step 8,  Algorithm \ref{alg:dp}}); (ii) the diameter of the frontal component of the graph $H= G[1,a]-A$ (\textbf{step 14,  Algorithm \ref{alg:dp}}) and (iii) the diameter of the induced subgraphs $H_c= G[c+1,a]-(A\cup \Sep{c}{c+1}), 1 \leq c \leq a-1$ (\textbf{steps 12-13, Procedure \ref{proc:sCD}}). 


The set $Y_A^a$ can be obtained from the vertex set of $Q_a$ in linear time by checking the $\minIndex{v}$ and $\maxIndex{v}$ values of each vertex $v \in Q_a$. That is, $Y_A^a = \{v \in Q_a: \minIndex{v}=a \text{ and } \maxIndex{v} < b, A = \Sep{a}{b}\}$. Let Diam$[1][a][A]$ be the  diameter of the frontal component of $H= G[1,a]-A$. By Observation \ref{obs:frontal component}, diameter of the frontal component of $H$ is equal to the eccentricity of a vertex $v \in Y_a^A$. That is, the maximum distance of $v$ from other vertices in $H$ which we denote by \textit{ecc$_{H}(v)$}. Hence, Diam$[1][a][A]=$\textit{ecc$_{H}(v)$}. Let $v_l$ be the \emph{leftmost} neighbour of $v$ in $H$ such that $\minIndex{v_l}= a'$ and \textit{ecc$_{H'}(v_l)$} be the eccentricity of $v_l$ in $H' = G[1,a']- (Q_{a'} \cap Q_b)$. Then observe that \textit{ecc$_{H}(v)$}= \textit{ecc$_{H'}(v_l) +1$}. Therefore, Diam$[1][a][A]$ = Diam$[1][a'][Q_{a'} \cap Q_b] + 1$. Since the \emph{leftmost} neighbour of $v$ in $H$ can be found in linear time from $N_{\text{left}}(v)$ by checking the $\minIndex{v}$ and $\maxIndex{v}$ values of each vertex $u \in N_{\text{left}}(v)$, diameter of the frontal component of $H$ can be found in $O(n)$ time. Similarly, diameter of the induced subgraphs $H_c= G[c+1,a]-(A\cup \Sep{c}{c+1})$ in \textbf{steps 12 -13, Procedure \ref{proc:sCD}} together can be found in $O(n)$ time by similar arguments as above and the following observation; $N_{\text{left}}(v) - (A \cup \Sep{c}{c+1}) \supseteq N_{\text{left}}(v) - (A \cup \Sep{c+1}{c+2})$.
    
To compute the overall time complexity of our algorithm, we have the following claims.
\begin{claim}
Total number of subproblems computed by the algorithm, Algorithm~\ref{alg:dp}
is at most $O(|V|+ |E|)= O(n+m)$.
 \end{claim}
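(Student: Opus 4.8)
The plan is to observe that the algorithm computes exactly one subproblem for each cell $\OPT{a}{A}$ of the table $\OPTT$, so the total number of subproblems equals $\sum_{a=1}^{k} |\Allseparate{Q_a}|$; I will bound this sum by $n + 2m$. The first step is a structural description of the separators feeding into $\Allseparate{Q_a}$. Using that each vertex occupies a contiguous block of cliques $Q_{\minIndex{v}},\ldots,Q_{\maxIndex{v}}$, I would show that for $b > a$ one has $\Sep{a}{b} = \{v \in Q_a \colon \maxIndex{v} \geq b\}$. Consequently, as $b$ increases the sets $\Sep{a}{b}$ form a decreasing chain, and two indices $b,b'$ give the same separator unless some vertex of $Q_a$ has its $\maxIndex$ value strictly between them. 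Hence the number of distinct nonempty separators in $\Allseparate{Q_a}$ equals $p_a = |\{\maxIndex{v} \colon v \in Q_a,\ \maxIndex{v} > a\}|$, the number of distinct ``right-end'' clique indices exceeding $a$ among vertices of $Q_a$, and so $|\Allseparate{Q_a}| \le p_a + 1$ (the $+1$ accounting for the explicitly added $\emptyset$).

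Second, I would bound $\sum_a p_a$ by a charging argument onto edges. Swapping the order of summation gives $\sum_{a} p_a = \sum_{t} (t - m_t)$, where the outer sum ranges over those indices $t$ that equal $\maxIndex{v}$ for some vertex $v$, and $m_t = \min\{\minIndex{v} \colon \maxIndex{v} = t\}$; this is because a vertex with $\maxIndex{v} = t$ lies in $Q_a$ exactly when $\minIndex{v} \le a \le t$. For each such $t$ I fix a witness $v_t$ attaining $m_t$, so that $t - m_t = \maxIndex{v_t} - \minIndex{v_t}$ and the $v_t$ are pairwise distinct (distinct $\maxIndex$ values). The crux is the inequality $\maxIndex{v} - \minIndex{v} \le |N(v)|$ for every vertex $v$: for each index $c$ with $\minIndex{v} \le c < \maxIndex{v}$, the distinct maximal cliques $Q_c$ and $Q_{c+1}$ both contain $v$, so there is a vertex $u_c \in Q_c \setminus Q_{c+1}$; contiguity forces $\maxIndex{u_c} = c$, so these $u_c$ are $\maxIndex{v} - \minIndex{v}$ distinct neighbours of $v$. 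Summing yields $\sum_a p_a = \sum_t (t - m_t) \le \sum_t |N(v_t)| \le \sum_{v \in V(G)} |N(v)| = 2m$.

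Combining the two steps, and using that a chordal (hence interval) graph on $n$ vertices has at most $n$ maximal cliques, so $k \le n$, I obtain $\sum_{a=1}^{k} |\Allseparate{Q_a}| \le k + \sum_a p_a \le n + 2m = O(n+m)$, which is the claim. I expect the main obstacle to be precisely the crux inequality $\maxIndex{v} - \minIndex{v} \le |N(v)|$: the naive counting bound $|\Allseparate{Q_a}| \le |Q_a| + 1$ only gives $\sum_a |Q_a| = O(nk)$ subproblems, which is far too weak (there are interval graphs with $\Theta(n^2)$ distinct separators), so the whole argument hinges on charging each distinct right-endpoint to a private neighbour in order to convert the bound into one governed by the edge count $m$ rather than by clique sizes.
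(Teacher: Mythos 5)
Your proof is correct, and it shares the paper's first observation---that for fixed $a$ the separators $\Sep{a}{b}$ form a decreasing chain as $b$ grows, so $\left|\Allseparate{Q_a}\right|$ is controlled by how often that chain can strictly shrink---but the two arguments then charge the shrinkages to edges in genuinely different ways. The paper bounds the chain length at $Q_a$ directly by $|\Sep{a}{a+1}|+1$ and charges all of $|\Sep{a}{a+1}|$ to the degree of a single vertex $v_a\in Q_a - Q_{a+1}$ (every vertex of $\Sep{a}{a+1}$ lies in $Q_a$ and is therefore a neighbour of $v_a$, and the chosen vertices $v_a$ are pairwise distinct across cliques since $\maxIndex{v_a}=a$), giving $\sum_a\left|\Allseparate{Q_a}\right|\le k+\sum_a \mathrm{deg}(v_a)\le n+2m$ essentially in one line. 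You instead count the distinct right-end indices $p_a$, swap the order of summation to get $\sum_a p_a=\sum_t(t-m_t)$, and reduce everything to the inequality $\maxIndex{v}-\minIndex{v}\le|N(v)|$, proved by exhibiting a private neighbour $u_c\in Q_c - Q_{c+1}$ for each consecutive pair of cliques containing $v$. Both charging schemes are valid and yield the same $n+2m$ bound; yours is somewhat longer but gives the slightly sharper local estimate $\left|\Allseparate{Q_a}\right|\le p_a+1$ (note $p_a\le|\Sep{a}{a+1}|$, since distinct right-ends exceeding $a$ must be witnessed by distinct vertices of $Q_a\cap Q_{a+1}$), and it makes explicit the structural identity $\Sep{a}{b}=\{v\in Q_a\colon \maxIndex{v}\ge b\}$ that the paper leaves implicit. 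Your closing remark is also on point: the naive bound $\sum_a(|Q_a|+1)$ is far too weak, and some degree-based charging of the kind both proofs perform is genuinely needed to reach $O(n+m)$.
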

\begin{proof}[Proof of Claim]
Note that with respect to the ordering of maximal cliques of $G$ the elements of the set $\Allseparate{Q_a}$ have the following relation. For each $b, a < b \leq k$ we have $S_a^{b+1} \subseteq S_a^b$. Hence the number of distinct subproblems computed by the algorithm corresponding to each maximal clique $Q_a$ is at most $|S_a^{a+1}|+1$ (Recall that one of the subproblem corresponds to $\emptyset \in \Allseparate{Q_a}$). Since the number of maximal cliques in $G$ is at most $|V|=n$ and $|S_a^{a+1}| \leq degree(v), v \in Q_a -Q_{a+1} $, the total number of subproblems computed by the algorithm
is at most $\sum\limits_{v \in Q_a -Q_{a+1} }degree(v) + |V| \le O(|V|+ |E|)= O(n+m)$.
\end{proof}
\begin{claim}
The procedure \emph{ Compute{\_}sCD$(G,a,A)$} computes the minimum cardinality $s$-CVD set of $H=\Inducedgraph{G}{1}{a}- A$ in $O(n)$ time.
 \end{claim}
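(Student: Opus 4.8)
The plan is to separate the two assertions of the claim: \emph{correctness} (the returned set is a minimum $s$-CVD set of $H=\Inducedgraph{G}{1}{a}-A$) and the \emph{running time} ($O(n)$). Correctness is essentially free given the earlier development. The procedure is invoked only when the \emph{frontal} component of $H$ has diameter at least $s+1$ and $Y_A^a\neq\emptyset$, which is exactly the hypothesis of Lemma~\ref{lem:correctness}. That lemma, together with Lemma~\ref{lem:type} and the four preceding lemmas establishing that each $X_i$ is a minimum $s$-CVD set \emph{of its own type}, gives $\OPT{a}{A}=\min\{X_1,X_2,X_3,X_4\}$; since the procedure returns exactly this minimum, it is correct. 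So the whole burden of the claim is the time bound.

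For the time bound I would first record that every candidate set decomposes as a \emph{disjoint} union of a small, explicitly described piece and a single previously computed table entry: $X_1=Y_A^a\sqcup\OPT{a-1}{A\cap Q_{a-1}}$, $X_2=\Level{H}{s+1}{v}\sqcup\OPT{b}{\Sep{b}{b+1}}$, $Z_c=(\Sep{c}{c+1}-A)\sqcup\OPT{c}{\Sep{c}{c+1}}$, and $W_c=(\Sep{c}{c+1}-A)\sqcup\Level{H_c}{s+1}{v}\sqcup\OPT{c}{\Sep{c}{c+1}}$. The disjointness in each case follows from the clique-index ranges of the ingredients (a level set or boundary separator lives among cliques of index $>c$, whereas $\OPT{c}{\cdot}$ lives among indices $\le c$ with $\Sep{c}{c+1}$ removed), exactly as argued in the type-lemmas. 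Consequently the \emph{cardinality} of every candidate equals the size of its explicit piece plus a stored cardinality. I would therefore keep in each cell of $\OPTT$ its cardinality together with its explicit piece and a reference to the single sub-cell it builds on; this lets the procedure compute the four candidate cardinalities, take the minimum, and materialise only the \emph{winning} set once, in $O(n)$ time, without ever copying a set inside the loop.

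It then remains to compute, in $O(n)$ total, the ingredient sizes. The set $Y_A^a$ and the single level set $\Level{H}{s+1}{v}$ needed for $X_1$ and $X_2$ (together with the index $b$) are obtained directly from $\minIndex{\cdot}$, $\maxIndex{\cdot}$ and one eccentricity computation from $v$, all in $O(n)$, as already noted in the surrounding time-complexity discussion. For the loop over $c$ I would reuse the single left-to-right sweep that the discussion uses to fill $\text{Diam}[c][a][A]$ for all $c$ in $O(n)$; this already classifies each $c$ into $B$ or $C$. Along the sweep I would maintain $|\Sep{c}{c+1}-A|$ incrementally: a vertex lies in $\Sep{c}{c+1}$ iff $\minIndex{v}\le c<\maxIndex{v}$, so the count changes by $O(1)$ bookkeeping as $c$ advances, and subtracting the contribution of $A$ (which is monotone in $c$) is handled in the same pass. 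The stored cardinalities $|\OPT{c}{\Sep{c}{c+1}}|$ are $O(1)$ lookups. Tracking running minima over $B$ and $C$ and comparing the four values at the end is $O(n)$.

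The hard part is obtaining the level-set sizes $|\Level{H_c}{s+1}{v}|$ for \emph{all} $c\in C$ within the $O(n)$ budget, since a naive per-$c$ distance computation would cost $\Theta(n|C|)$. Here I would exploit the linear structure of interval graphs through Observation~\ref{obs:paired-distance}: the distance in $H_c$ from the fixed endpoint $v\in Y_A^a$ to any vertex depends, up to the tie-breaking of that observation, only on the vertex's clique index, and it is nondecreasing as the index decreases. Hence in each $H_c$ the vertices at distance exactly $s+1$ form a contiguous band of clique-indices anchored at the left end $c+1$ of $H_c$, and as $c$ increases this band together with its vertex count can be updated incrementally during the same eccentricity sweep rather than recomputed from scratch. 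Making this band-tracking precise, and checking that it stays synchronised with the removal of $\Sep{c}{c+1}\cup A$ as $c$ varies, is the one step that needs genuine care; everything else is routine bookkeeping against the precomputed $\minIndex{\cdot}$, $\maxIndex{\cdot}$ and $N_{\text{left}}(\cdot)$ data.
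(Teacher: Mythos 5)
Your proposal is correct and follows essentially the same route as the paper: correctness is delegated to Lemma~\ref{lem:correctness} and the type-lemmas, and the $O(n)$ bound is obtained from the precomputed clique ordering, $\minIndex{\cdot}$, $\maxIndex{\cdot}$ and $N_{\text{left}}(\cdot)$ data together with a single incremental left-to-right sweep over $c$. If anything you are more explicit than the paper, which disposes of $X_3$ and $X_4$ in one line by appealing to the containment $N_{\text{left}}(v) - (A \cup \Sep{c}{c+1}) \supseteq N_{\text{left}}(v) - (A \cup \Sep{c+1}{c+2})$ and never spells out the two points you make precise, namely that only cardinalities (with stored table entries) are compared inside the loop and only the winning set is materialised.
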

 \begin{proof}[Proof of Claim:]
 Observe that the time complexity of the procedure \emph{Compute{\_}sCD$(G,a,A)$} depends mainly on building the sets $X_i, 1 \leq i \leq 4$. Since the set $Y_A^a, 1 \leq a < k, A \in \Allseparate{Q_a}$  is obtained in $O(n)$ time, the set $X_1$ can be computed in $O(n)$ time.\\
 The set $\Level{H}{s+1}{v}$ can be computed from the leftmost neighbour of $v$ in $H$, say $v_l$ in linear time by $s$ iterations: In the first iteration, find the leftmost vertex of $v_l$ in $N_{\text{left}}(v_l)-A$, in the second iteration find the leftmost vertex in the second neighbourhood  and so on. Moreover, the leftmost neighbour of $v$ in $H$ can be obtained by a linear search of $N_{\text{left}}(v)$.
 Since the number of induced subgraphs $H_c$ is at most $O(n)$, the sets $X_3$ and $X_4$ can be constructed in $O(n)$ time. Hence the claim follows.
 \end{proof}
Therefore, by the above claims the overall time complexity of our algorithm is $O(n\cdot(n+m))$ and Theorem \ref{thm:interval-scd} follows.

\section{Hardness for well-partitioned chordal graphs}\label{sec:hardness}.

In this section, we prove Theorem~\ref{thm:hardness}. We shall use the following observation.

\begin{observation}\label{obs:build-well-partitioned}
Let $H$ be a well-partitioned chordal graph. Let $H'$ be a graph obtained from $H$ by adding a vertex of degree $1$. Then $H'$ is an well-partitioned chordal graph.
\end{observation}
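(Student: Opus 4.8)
The plan is to take a partition tree of $H$ and extend it with a single new leaf. Let $v$ be the added degree-one vertex and let $u$ be its unique neighbour in $H$. Since $H$ is a well-partitioned chordal graph, fix a partition tree $\mathcal{T}$ of $H$ with bag set $\mathcal{P}$, and locate the bag $X \in \mathcal{P}$ with $u \in X$. I would then form $\mathcal{P}' = \mathcal{P} \cup \{\{v\}\}$ and build $\mathcal{T}'$ from $\mathcal{T}$ by adding the singleton bag $\{v\}$ as a new node and the single tree edge $X\{v\}$. Attaching a leaf to a tree yields a tree, so $\mathcal{T}'$ is a tree whose node set $\mathcal{P}'$ is a partition of $V(H')$; moreover $H'$ is connected because $v$ is attached to the connected graph $H$, so Definition~\ref{def:well-partitioned} applies.

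Next I would verify conditions (a)--(c) of Definition~\ref{def:well-partitioned} for $(\mathcal{T}',\mathcal{P}')$. Condition (a) is immediate: the only new bag $\{v\}$ is trivially a clique, and every old bag is unchanged and was already a clique in $H \subseteq H'$. For condition (b), the only new tree edge is $X\{v\}$; since the sole neighbour of $v$ is $u \in X$, the bipartite graph $H'[X,\{v\}]$ has edge set exactly $\{u\}\times\{v\}$, which has the required product form with $X' = \{u\} \subseteq X$ and $Y' = \{v\} \subseteq \{v\}$. Every other tree edge of $\mathcal{T}'$ is an edge of $\mathcal{T}$ and carries the identical bipartite graph, so condition (b) is inherited. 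For condition (c), take two distinct non-adjacent nodes of $\mathcal{T}'$: if both are old nodes, the absence of edges between them is inherited from $\mathcal{T}$ (adding $v$ creates no edge between old vertices); if one of them is $\{v\}$, then it is non-adjacent in $\mathcal{T}'$ to every old bag except $X$, and since $v$'s only neighbour $u$ lies in $X$, there is no edge from $v$ to any vertex outside $X$, as required.

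Consequently $\mathcal{T}'$ witnesses that $H'$ is a well-partitioned chordal graph. The argument presents essentially no obstacle; the only point requiring a moment of care is condition (c) for pairs involving the new node $\{v\}$, where one must use precisely the fact that $\deg(v)=1$ so that $v$ sends no edge to any bag other than the one containing its unique neighbour. (One may also note that chordality is preserved since a degree-one vertex lies on no cycle, but the exhibition of the partition tree $\mathcal{T}'$ already certifies the conclusion.)
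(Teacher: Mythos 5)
Your proof is correct and is precisely the routine verification the paper leaves implicit: the paper states Observation~\ref{obs:build-well-partitioned} without proof, and the intended argument is exactly yours — attach a new singleton bag $\{v\}$ as a leaf of the partition tree adjacent to the bag containing $v$'s unique neighbour, and check conditions (a)--(c) of Definition~\ref{def:well-partitioned}. Nothing is missing; your emphasis on using $\deg(v)=1$ for condition (c) is the one point of substance and you handle it correctly.
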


\newcommand{\Gwell}[1]{#1_{well}}
Let $s\geq 2$ be an even integer and let $s=2k$. We shall reduce \textsc{Minimum Vertex Cover (MVC)} on general graphs to \textsc{$s$-CVD} on well partitioned graphs. Let $\langle G,k \rangle$ be an instance of \textsc{Minimum Vertex Cover} such that maximum degree of $G$ is at most $n-3$. Let $\overbar{G}$ denote the complement of $G$. Now construct a split graph $\Gwell{G}$ from $G$ as follows. For each vertex of $v\in V(G)$, we introduce a new path $P_v$ with $k-1$ edges and let $x_v,x'_v$ be the endpoints of $P_v$. For each edge $e\in E\left(\overbar{G}\right)$ we introduce a new vertex $y_e$ in $\Gwell{G}$. For each pair of edges $e_1,e_2 \in E(\overbar{G})$ we introduce an edge between $y_{e_1}$ and $y_{e_2}$  in $\Gwell{G}$. For each edge $e=uv \in E\left(\overbar{G}\right)$, we introduce the edges $x_u y_e$ and $x_v y_e$ in $\Gwell{G}$. 
Observe that $C=\{y_e\}_{e\in E\left(\overbar{G}\right)}$ is a clique, $I=\{x_v\}_{v\in V(G)}$ is an independent set of $\Gwell{G}$. Therefore $C\cup I$ induces a split graph, say $G'$, in $\Gwell{G}$. Since $\Gwell{G}$ can be obtained from $G'$ by adding vertices of degree $1$, due to Observation~\ref{obs:build-well-partitioned}, we have that $\Gwell{G}$ is an well-partitioned graph. We shall show that $G$ has a vertex cover of size $k$ if and only if $\Gwell{G}$ has a \textsc{$s$-CVD} set of size $k$. 


\begin{observation}\label{obs:neighbour}
For each vertex $v\in C$, $|N[v]\cap I|=2$ and for each vertex $u\in I$, $|N[u]\cap C|\geq 2$.
\end{observation}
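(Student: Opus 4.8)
The plan is to read both claims off directly from the construction of $\Gwell{G}$, by identifying precisely the neighbours of a given vertex on the relevant side of the split graph $G'$ induced by $C\cup I$. Since $C=\{y_e\}_{e\in E(\overbar{G})}$ and $I=\{x_v\}_{v\in V(G)}$ are disjoint (one consists of ``$y$''-vertices, the other of ``$x$''-vertices), for any $u\notin C$ we have $N[u]\cap C=N(u)\cap C$, and symmetrically for the $I$-side; I would use this repeatedly so that the closed and open neighbourhoods coincide in the counts.

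First I would handle the clique side. Fix $v\in C$, so $v=y_e$ for some edge $e=ab\in E(\overbar{G})$ with $a\neq b$. By the construction, the only edges introduced between $y_e$ and a vertex of $I$ are the two edges $x_a y_e$ and $x_b y_e$; no other $x_w$ is joined to $y_e$. Since $a\neq b$ forces $x_a\neq x_b$, and since $y_e\notin I$, this gives $N[y_e]\cap I=N(y_e)\cap I=\{x_a,x_b\}$, hence $|N[v]\cap I|=2$.

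Next I would handle the independent-set side, which is where the degree hypothesis on $G$ is used. Fix $u\in I$, so $u=x_v$ for some $v\in V(G)$. By construction $x_v$ is adjacent to $y_e$ precisely for those $e\in E(\overbar{G})$ having $v$ as an endpoint, and to no other vertex of $C$; therefore $|N(x_v)\cap C|$ equals the degree of $v$ in $\overbar{G}$. The one step that is not mere unwinding of definitions is converting the hypothesis $\deg_G(v)\leq n-3$ into a lower bound on this complement degree: for every $v$ we have $\deg_{\overbar{G}}(v)=(n-1)-\deg_G(v)\geq (n-1)-(n-3)=2$. Since $x_v\notin C$, this yields $|N[u]\cap C|=\deg_{\overbar{G}}(v)\geq 2$, as required.

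I do not expect any genuine obstacle here; the only substantive point is the last one, namely recognising that the assumption that the maximum degree of $G$ is at most $n-3$ is exactly what guarantees that every $x_v$ sees at least two clique vertices. Everything else is a direct application of the adjacency rules defining $\Gwell{G}$.
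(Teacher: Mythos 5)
Your proof is correct, and it is precisely the direct verification from the construction of $\Gwell{G}$ that the paper leaves implicit (the observation is stated there without proof). You also correctly pinpoint that the bound $|N[u]\cap C|\geq 2$ is exactly where the hypothesis that the maximum degree of $G$ is at most $n-3$ is used, via $\deg_{\overbar{G}}(v)=(n-1)-\deg_G(v)\geq 2$.
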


\begin{lemma}\label{lem:vertex-cover-2-CVD}
Let $D$ be a subset of $I$ and let $T=\{u\in V(G)\colon x_u \in D\}$. The set $D$ is a \textsc{$s$-CVD} set of $\Gwell{G}$ if and only if $T$ is a vertex cover of $G$.  
\end{lemma}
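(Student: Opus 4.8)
The plan is to analyze the connected components of $\Gwell{G}-D$ directly and to reduce the diameter-$s$ condition to pairwise distances among the path-endpoints $x'_w$. Writing $D=\{x_u\colon u\in T\}$, I would first record the coarse structure: since $D\subseteq I$ and the clique $C$ is disjoint from $I$, deleting $D$ leaves $C$ intact and keeps each pendant path $P_w$ with $w\notin T$ attached to $C$ through $x_w$, whereas for $u\in T$ the stub $P_u-x_u$ (a path with at most $k-2$ edges) splits off as a separate component of diameter at most $k-2<s$. Hence these ``small'' components are harmless, and the only component whose diameter can possibly exceed $s$ is the one, call it $M$, containing $C$ together with $\{x_w,P_w\colon w\notin T\}$.

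The quantitative heart of the argument is to compute $\dist{M}{x_u}{x_v}$ for $u,v\notin T$. Each $x_w$ is adjacent to at least one vertex of $C$ by Observation~\ref{obs:neighbour} (the maximum-degree hypothesis forces $\deg_{\overbar{G}}(w)\geq 2$), and $C$ is a clique untouched by $D$; since $x_u,x_v$ are both in the independent set $I$ they are non-adjacent, while a path $x_u\,y_e\,y_{e'}\,x_v$ survives. Thus $\dist{M}{x_u}{x_v}\in\{2,3\}$, and it equals $2$ exactly when $x_u,x_v$ share a common neighbour $y_f\in C$, which occurs if and only if $f=uv\in E(\overbar{G})$, i.e. $uv\notin E(G)$. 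Because the only exit from $P_w$ is through $x_w$, this yields $\dist{M}{x'_u}{x'_v}=(k-1)+\dist{M}{x_u}{x_v}+(k-1)$, equal to $2k=s$ when $uv\notin E(G)$ and to $2k+1=s+1$ when $uv\in E(G)$.

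Both directions then follow. If $T$ is a vertex cover, $V(G)-T$ is independent in $G$, so every pair $w,w'\notin T$ satisfies $ww'\in E(\overbar{G})$; hence the surviving $x_w$ are pairwise at distance $2$, each pair $\{x'_w,x'_{w'}\}$ is at distance exactly $s$, and a short check confirms that all remaining pairs (an $x'_w$ to a clique vertex, or to some $x_{w'}$) are at distance at most $s$, so $M$ and therefore $\Gwell{G}-D$ has diameter at most $s$, making $D$ an $s$-CVD set. Conversely, if $T$ is not a vertex cover, choose an edge $uv\in E(G)$ with $u,v\notin T$; then $x_u,x_v$ survive, $\dist{M}{x_u}{x_v}=3$, and $\dist{M}{x'_u}{x'_v}=s+1$, so $M$ has diameter exceeding $s$ and $D$ fails to be an $s$-CVD set.

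The main obstacle I anticipate is the bookkeeping of the second paragraph: verifying that $\dist{M}{x_u}{x_v}$ is \emph{exactly} $2$ or $3$ once $D$ has been removed (ruling out both shorter common-neighbour routes and any lengthening), and confirming that the pairs $\{x'_u,x'_v\}$ are genuinely diameter-realizing, so that no other pair secretly pushes the diameter past $s$. Everything else reduces to the two structural facts that $C$ is a clique disjoint from $D$ and that each pendant path can only be entered through its clique-side endpoint.
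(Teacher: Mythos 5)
Your proposal is correct and follows essentially the same route as the paper's proof: both identify the unique component containing the clique $C$ as the only one whose diameter matters, and both reduce the diameter-$s$ condition to the observation that for surviving $u,v$ the distance between $x'_u$ and $x'_v$ is $2(k-1)+2=s$ when $uv\in E(\overbar{G})$ and $2(k-1)+3=s+1$ when $uv\in E(G)$. Your write-up simply makes explicit the distance computations and the check on the remaining vertex pairs that the paper leaves implicit.
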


\begin{proof}

Let $D'=\{x'_v\colon x_v \in I-D\}$ and $T'=\{u\in V(G)\colon x_u \in D'\}$ (note that $T=V(G) - T'$). Note that there is one single component $G'$ of $\Gwell{G} - D$ that contains vertices from $C$  since there are no isolated vertices by observation \ref{obs:neighbour}. Observe that $G'$ contains $I-D$. Therefore, for any two vertices $x'_u,x'_v\in D'$ the distance between $x'_u,x'_v$ is $s$ if and only if there is an edge between $u,v$ in $\overbar{G}$. Therefore, distance between any two pair of vertices in $D'$ is $s$ if and only if $T'$ induces a clique in $\overbar{G}$ and therefore an independent set in $G$. Since $T=V(G)- T'$, we have that distance between any two pair of vertices in $D'$ is $s$ if and only if $T$ is a vertex cover of $G$. Since $|D'|=|I-D|$ we have that $D$ is an $s$-CVD set of $\Gwell{G}$ if and only if $T$ is a vertex cover of $G$.
\end{proof}

\begin{lemma}\label{lem:independent-set}
There is a subset of $I$ which is a minimum \textsc{$s$-CVD} set of $\Gwell{G}$.
\end{lemma}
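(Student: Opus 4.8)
The plan is to show that every minimum $s$-CVD set $S$ of $\Gwell{G}$ can be traded for one of the same size lying entirely inside $I$. By Lemma~\ref{lem:vertex-cover-2-CVD}, a set $\{x_v\colon v\in T\}\subseteq I$ is an $s$-CVD set precisely when $T$ is a vertex cover of $G$, so it suffices to extract from a minimum $s$-CVD set $S$ a vertex cover $T$ of $G$ with $|T|\le |S|$. Then $\{x_v\colon v\in T\}$ is an $s$-CVD set of size at most $|S|$, and by minimality it is itself a minimum $s$-CVD set contained in $I$, proving the lemma.

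First I would identify a ``main'' component. A minimum $s$-CVD set cannot delete all of $C$: this would cost $|C|=|E(\overbar{G})|\ge n$ (each vertex of $\overbar{G}$ has degree at least $2$, by Observation~\ref{obs:neighbour}, so $\overbar{G}$ has at least $n$ edges), whereas a minimum vertex cover of $G$ already yields, via Lemma~\ref{lem:vertex-cover-2-CVD}, an $s$-CVD set of size at most $n-1$. Hence $C\setminus S\neq\emptyset$, and since $C$ is a clique its surviving part lies in a single component $M$ of $\Gwell{G}-S$. Define $T=\{v\in V(G)\colon x'_v\notin M\}$. This $T$ is a vertex cover: for an edge $uv\in E(G)$ one has $d_{\Gwell{G}}(x'_u,x'_v)=(k-1)+d(x_u,x_v)+(k-1)\ge 2k+1=s+1$, since $uv\notin E(\overbar{G})$ means $x_u,x_v$ share no common $y_e$ and thus $d(x_u,x_v)\ge 3$. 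As distances can only grow in a subgraph, if both $x'_u,x'_v$ belonged to $M$ their distance in $M$ would exceed $s$, contradicting that $M$ has diameter at most $s$; hence $u\in T$ or $v\in T$.

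The substantive step is to prove $|T|\le|S|$ by exhibiting an injection $T\hookrightarrow S$. I would partition $T=T_P\sqcup T_C$, where $T_P=\{v\in T\colon V(P_v)\cap S\neq\emptyset\}$ and $T_C=T\setminus T_P$. Each $v\in T_P$ is mapped to an arbitrary vertex of $V(P_v)\cap S$; since the paths $P_v$ are pairwise vertex-disjoint and disjoint from $C$, this map is injective and its image avoids $C$. For $v\in T_C$ the path $P_v$ survives intact, so $x'_v$ and $x_v$ lie in the same component; as $x'_v\notin M$ we get $x_v\notin M$, and because every surviving clique vertex belongs to $M$ this forces $N(x_v)\cap C\subseteq S$. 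I would then match $T_C$ into $S\cap C$ via Hall's theorem applied to the bipartite graph with parts $T_C$ and $S\cap C$ and an edge $v\,y_e$ whenever $y_e\in N(x_v)\cap C$.

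Verifying Hall's condition is the step I expect to be the main obstacle, as it is the only place where clique deletions must be accounted for. For $A\subseteq T_C$, the neighbourhood $N(A)=\bigcup_{v\in A}(N(x_v)\cap C)$ is exactly the set of $y_e$ for edges $e$ of $\overbar{G}$ incident to $A$, and all of these lie in $S\cap C$. Writing $e(A)$ for the number of $\overbar{G}$-edges inside $A$ and $\partial(A)$ for those leaving $A$, we have $|N(A)|=e(A)+\partial(A)$, while $\sum_{v\in A}\deg_{\overbar{G}}(v)=2e(A)+\partial(A)\ge 2|A|$ by the degree bound of Observation~\ref{obs:neighbour}; combining these with $e(A)\le|N(A)|$ yields $2|N(A)|\ge e(A)+|N(A)|\ge 2|A|$, i.e.\ $|N(A)|\ge|A|$. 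Thus Hall's condition holds and $T_C$ injects into $S\cap C$. Since the images of $T_P$ avoid $C$ and those of $T_C$ lie in $C$, the two injections combine into a single injection $T\hookrightarrow S$, giving $|T|\le|S|$ and completing the proof.
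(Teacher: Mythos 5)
Your proof is correct, but it takes a genuinely different route from the paper's. The paper fixes an extremal minimum solution $S$ (one maximizing $|S\cap I|$) and shows by local exchange arguments that it must already lie in $I$: any deleted vertex on the interior of a pendant path can be swapped for the corresponding $x_u$, and the deleted clique vertices $S\cap C$ can be swapped wholesale for the set $Y$ of surviving $I$-vertices outside the big component, using a double count of the edges between $Y$ and $S\cap C$ to get $|Y|\le |S\cap C|$. You instead start from an \emph{arbitrary} minimum $s$-CVD set $S$, read off a vertex cover $T=\{v\colon x'_v\notin M\}$ of $G$ from the component structure of $\Gwell{G}-S$ (using the distance computation $d(x'_u,x'_v)=s+1$ for $uv\in E(G)$), prove $|T|\le|S|$ by an explicit injection --- pendant paths absorbing $T_P$ trivially, and $T_C$ matched into $S\cap C$ via Hall's theorem --- and then convert $T$ back into an $s$-CVD subset of $I$ via Lemma~\ref{lem:vertex-cover-2-CVD}. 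Both arguments ultimately rest on the same degree facts from Observation~\ref{obs:neighbour} (each $x_v$ has at least two clique neighbours, each $y_e$ exactly two $I$-neighbours); the paper uses them in a counting-based swap, you use them to verify Hall's condition. Your version avoids the extremal choice of $S$ and, as a by-product, directly establishes the reverse direction of the reduction (every minimum $s$-CVD set yields a vertex cover of no larger size), at the cost of invoking Hall's theorem where the paper's exchange is more elementary. All the individual steps check out, including the preliminary bound $|C|=|E(\overbar{G})|\ge n$ ruling out $C\subseteq S$ and the inequality $2|N(A)|\ge 2e(A)+\partial(A)\ge 2|A|$ in the Hall verification.
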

\begin{proof}
Let $S$ be a minimum \textsc{$s$-CVD} set of $\Gwell{G}$ such that $|S \cap I|$ is maximum. We claim that $S \subseteq I$. Suppose for contradiction this is not true. Let $I'=\bigcup\limits_{u\in V(G)} P_u-\{x_u\}$. Then we must have that $S\cap I' \neq \emptyset$ or $S \cap C \neq \emptyset$. Let $a$ be a vertex of $S\cap I'$. Observe that there must be a vertex $u\in V(G)$ such that $a\in P_u$ and that $(S-\{a\})\cup \{x_u\}$ is an $s$-CVD set of $\Gwell{G}$. This contradicts the assumption that $S$ is a minimum $s$-CVD set of $\Gwell{G}$ with $|S\cap I|$ maximum. 

Now consider the collection $\mathcal{C}$ of connected components of $\Gwell{G}-S$. First, observe that there exists at most one connected component in $\mathcal{C}$ that intersects $C$ (the clique of $\Gwell{G}$). We shall call such a component as the \emph{big component} and let $X$ be the set of vertices of the big component. In fact $I$ itself is a \textsc{$s$-CVD} set and observation \ref{obs:neighbour} implies $|I| \le |C|$. Therefore, without loss of generality we can assume that $C \not\subset S$ and indeed such a big component exists. 

Let $Y$ denote those vertices of $\Gwell{G}-S$ that belongs to $I - X$. Let $S_C = S\cap C$ and $S_I = S \cap I$. Recall that by assumption, $S_C \neq \emptyset$.

If there is a vertex $v \in S_C$ such that $|N[v]\cap Y| = 0$, then $S - \{v\}$ is a \textsc{$s$-CVD} set with $X \cup \{v\}$ as corresponding big component with diameter less than or equal to $s$. This contradicts the minimality of $S$. Similarly, if there exists a vertex $v \in S_C$ such that $N[v]\cap Y = \{u\}$, a singleton set then $S' = S \cup \{u\} - \{v\}$ is a new \textsc{$s$-CVD} set with $X \cup \{v\}$ as corresponding new big component. This contradicts the assumption that $S$ is a minimum \textsc{$s$-CVD} set  with $|S\cap I|$ is maximum. Hence together with observation \ref{obs:neighbour} we infer that $|N(v)\cap Y|=2$, for each $v\in S_C$. Observation \ref{obs:neighbour} also implies that for each vertex $u \in Y$, $|N(u)\cap S_C|\ge2$, since $Y \subseteq I$ for each $u \in Y$ we have $N(u)\subseteq S_C$. Therefore, $|Y| \le |S_C|$ and $S'=(S- S_C)\cup Y$ is a minimum \textsc{$2$-CVD} set with $X \cup S_C$ as the corresponding new big component and $|S' \cap I| > |S \cap I| $. This contradicts the assumption for $S$.

Hence we conclude that $S$ is indeed a minimum \textsc{$s$-CVD} set such that $S \subseteq I$.
\end{proof}

Lemmas~\ref{lem:vertex-cover-2-CVD} and~\ref{lem:independent-set} imply that $G$ has a vertex cover of size $k$ if and only if $\Gwell{G}$ has a \textsc{$s$-CVD} set of size $k$. Now Theorem~\ref{thm:hardness} follows from a result of Khot and Regev~\cite{khot2008}, where they showed that unless the Unique Games Conjecture is false, there is no $(2-\epsilon)$-approximation algorithm for \textsc{Minimum Vertex Cover} on general graphs, for any $\epsilon>0$.
\section{Conclusion}\label{sec:conclude}

In this paper we studied the computational complexity of \textsc{$s$-CVD} on well-partitioned chordal graphs, a subclass of chordal graphs which generalizes split graphs. We gave a polynomial-time algorithm for $s=1$ and we proved that for any even integer $s\geq2$, \textsc{$s$-CVD} is NP-hard on well-partitioned chordal graphs. We also provide a faster algorithm for \textsc{$s$-CVD} on interval graphs for each $s\geq 1$. This raises the following questions.
\begin{question}
    What is the time complexity of Cluster Vertex Deletion on chordal graphs?
\end{question}

\begin{question}
    What is the time complexity of \textsc{$s$-CVD}  on chordal graphs for odd values of $s$?
\end{question}

\begin{question}
    Is there a constant factor approximation algorithm for $s$-CVD, $s\geq 2$ on chordal graphs?
\end{question}

Another generalisation of interval graphs is the class of \emph{cocomparability} graphs. It would be interesting to investigate the following question.

\begin{question}
    What is the time complexity of \textsc{$s$-CVD}  on cocomparability graphs for each $s\geq 1$?
\end{question}

\end{document}